\definecolor{bleudefrance}{rgb}{0.19, 0.55, 0.91}
\theoremstyle{plain}
\newtheorem{theorem}{Theorem}
\newtheorem{lemma}[theorem]{Lemma}
\newtheorem{proposition}[theorem]{Proposition}
\theoremstyle{definition}
\newtheorem{definition}[theorem]{Definition}
\newtheorem{example}[theorem]{Example}
\def\eqref#1{equation~\ref{#1}}
\def\1{\bm{1}}
\def\eps{{\epsilon}}
\def\rvz{{\vec{Z}}}
\def\vzero{{\mathbf{0}}}
\def\vb{{\vec{b}}}
\def\vc{{\vec{c}}}
\def\ve{{\vec{e}}}
\def\vu{{\vec{u}}}
\def\vv{{\vec{v}}}
\def\vw{{\vec{w}}}
\def\vx{{\vec{x}}}
\def\vy{{\vec{y}}}
\def\vz{{\vec{z}}}
\def\mA{{\bm{A}}}
\def\mB{{\bm{B}}}
\def\mC{{\bm{C}}}
\def\mD{{\bm{D}}}
\def\mQ{{\bm{Q}}}
\def\mU{{\bm{U}}}
\def\mV{{\bm{V}}}
\DeclareMathAlphabet{\mathsfit}{\encodingdefault}{\sfdefault}{m}{sl}
\SetMathAlphabet{\mathsfit}{bold}{\encodingdefault}{\sfdefault}{bx}{n}
\newcommand{\E}{\mathbb{E}}
\newcommand{\R}{\mathbb{R}}
\newcommand{\N}{\mathbb{N}}
\newcommand{\Z}{\mathbb{Z}}
\newcommand{\A}{\mathcal{A}}
\DeclareMathOperator{\vol}{Vol}
\newcommand\fang[1]{\textcolor{olive}{}}
\newcommand\prat[1]{\textcolor{cyan}{}}
\newcommand\robin[1]{\textcolor{blue}{}}
\newcommand\jie[1]{\textcolor{red}{}}
\title{Integer Subspace Differential Privacy}
\date{}
\author{
    Prathamesh Dharangutte\thanks{Rutgers University, \texttt{ptd39@rutgers.edu}} \and
    Jie Gao\thanks{Rutgers University, \texttt{jg1555@rutgers.edu}} \and
    Ruobin Gong\thanks{Rutgers University, \texttt{ruobin.gong@rutgers.edu}} \and
    Fang-Yi Yu\thanks{George Mason University, \texttt{fangyiyu@gmu.edu}}
}
\begin{document}

\maketitle

\begin{abstract}
We propose new differential privacy solutions for when external \emph{invariants} and \emph{integer} constraints are simultaneously enforced on the data product. These requirements arise in real world applications of private data curation, including the  public release of the 2020 U.S. Decennial Census. They pose a great challenge to the production of provably private data products with adequate statistical usability. We propose \emph{integer subspace differential privacy} to rigorously articulate the privacy guarantee when data products maintain both the invariants and integer characteristics, and demonstrate the composition and post-processing properties of our proposal. To address the challenge of sampling from a potentially highly restricted discrete space, we devise a pair of unbiased additive mechanisms, the generalized Laplace and the generalized Gaussian mechanisms, by solving the Diophantine equations as defined by the constraints. The proposed mechanisms have good accuracy, with errors exhibiting sub-exponential and sub-Gaussian tail probabilities respectively. To implement our proposal, we design an MCMC algorithm and supply empirical convergence assessment using estimated upper bounds on the total variation distance via $L$-lag coupling. We demonstrate the efficacy of our proposal with applications to a synthetic problem with intersecting invariants, a sensitive contingency table with known margins, and the 2010 Census county-level demonstration data with mandated fixed state population totals.
\end{abstract}

\section{Introduction}\label{sec:intro}

{\bf Motivation.}
Differential privacy (DP) is a formal mathematical framework that quantifies the extent to which an adversary can learn about an individual from sanitized data products. 
However, data curators may be mandated, by law or by policy, to release sanitized data products that obey certain externally determined constraints, in a manner that existing differential privacy solutions are not designed to address. Two challenging types of constraints are \emph{invariants} \citep{ashmead2019effective}, which are exact statistics calculated from the confidential data (such as sum, margins of a contingency table, etc), and \emph{integer} characteristics, pertaining to data of count nature. The challenge is exacerbated when the sanitized data product is required to respect both. 


Data privatization that simultaneously satisfies mandated invariants and preserves integral characteristics is of paramount importance to a myriad of data products by official statistical agencies.
One of the prominent example is 
the Disclosure Avoidance System (DAS) of the 2020 Decennial Census of the United States. The Census Bureau is  mandated to observe a set of invariants for its decennial census data products, such as its constitutional obligation to enumerate state populations for the apportionment house seats. 
As the decennial census data products are count data, the bureau must ensure that the sanitized release are also integer-valued. In addition, the bureau strives to maintain the consistency of the data products with common knowledge to various degrees, such as the the non-negativity of count data and possible logical relationships between different tabulated quantities.
The challenge thus remains: how to 
design sanitized data products that respect the pre-specified constraints, while maintaining rigorous privacy and good statistical properties? This is the central problem we address in this paper.

{\bf Our contribution.} 
In this paper,  we develop the \emph{integer subspace differential privacy} scheme, through which we formulate and implement mathematically rigorous privacy mechanisms that meet the following \emph{utility objectives}: 1) respecting mandated invariants, 2) maintaining integral characteristics, and 3) achieving good statistical properties including \emph{unbiasedness}, \emph{accuracy}, as well as  \emph{probabilistic transparency} of the mechanism specification.  
Invariants not only restrict the universe of databases of interest, but also partially destroys the relevance of neighbors, since databases differing by precisely one entry may or may not meet the same invariants. To this end, we adapt the classical differential privacy definition to incorporate invariant constraints with two modifications (i)
we limit the comparison of databases only to those that satisfy the same invariants, in a manner similar to pufferfish privacy~\citep{kifer2014pufferfish}; and (ii) 
for databases that meet the same invariants, we use their metric distance 
as a multiplicative factor to the privacy loss parameter under the more general smooth differential privacy framework~\citep{chatzikokolakis2013broadening,damien2019sok}. 
We show that integer subspace differential privacy preserves  \emph{composition} and \emph{post-processing} properties. 

To design a differential privacy mechanism that simultaneously respects invariants and integer requirement  poses a number of new challenges.  For additive mechanisms, 
the perturbation noise vector needs to be limited to the ``null space'' of the invariants so as not to violate the constraints. When the invariants are linear, this requires solving a linear system with integer coefficients, i.e. Diophantine equations, limiting the privacy noise to a (transformed) discrete lattice space. 
To this end, we propose generalized discrete Laplace and Gaussian mechanisms for the discrete lattice space. Both mechanisms are transparently specified additive mechanisms that enjoy demonstrated unbiasedness and accuracy via bounds on tail probabilities. The tail bounds
call for extra technicality as the lattice spaces we deal with are generally not spherically symmetric. To implement the proposed mechanisms, we design a Gibbs-within-Metropolis sampler, with a transition kernel that is always limited to the desired subspace to achieve sampling efficiency. To provide empirical guidance on the Markov chain's convergence to target, we supply an assessment scheme using the $L$-lag coupling method proposed by~\citep{biswas2019estimating}. We demonstrate the efficacy of our proposal with applications to a synthetic problem with intersecting invariants, a contingency table with known margins concerning delinquent children~\citep{fcsm2005sdl}, and the 2010 Census county-level demonstration data with mandated fixed state population totals.


\textbf{Related work.}
There has been an extensive line of work on differential privacy and its alternative definitions; see a recent survey and the references therein~\citep{damien2019sok}. On the other hand, invariants pose a relatively recent challenge to formal privacy, and we focus our review below on prior work concerning invariants. It has been recognized that invariants which are non-trivial functions of the confidential data compromise the differential privacy guarantee in its classic sense, and that the invariants must be incorporated into the privacy guarantee \citep{gong2020congenial,seeman2022partially}. Towards designing formally private mechanisms that meet prescribed invariant constraints, the only prior work in this direction is named \emph{subspace differential privacy}~\citep{gao2022subspace} where the inputs satisfy a set of linear constraints and the sanitized outcome shall meet the same invariants. 
However, subspace differential privacy only considers data products that are real-valued. To impose the additional integral requirement on the data outputs is not trivial: as the TopDown algorithm \citep{abowd2022topdown} already demonstrates, simple rounding or projection will either violate the invariants or introduce biases.
\citet{he2014blowfish} consider known constraints of the dataset (e.g. population counts for each state) which restricts the possible set of datasets, but the output of their mechanism need not satisfy the known constraints. We focus on mechanisms whose output also satisfy the counting constraints. Another line of work \citep{cormode2017constrained} consider additional structural properties for the mechanism's output (e.g. monotone, symmetric, fair). However, those properties are independent of the private data and can be seen as internal distributional consistency.
Lastly we note a related, but different line of previous work concerning \emph{internal consistency} of the data outputs~\citep{barak2007privacy,hay2009boosting}, such as maintaining the sum of counts over some partitioning of a set to be identical as the total sum. This type of consistency requirement is independent of the value of the private data, and can be satisfied by certain types of DP mechanisms, such as local DP schemes when each data item is independently perturbed and normal algebraic operations are applied on the perturbed output. In contrast, invariant constraints considered in this work are in general non-trivial functions of the confidential data. 


To impose invariants and preserve integrality for Census data, the TopDown algorithm~\citep{abowd2022topdown} employs a discrete Gaussian mechanism \citep{canonne2020discrete} to inject integer-valued noise during the so-called \emph{measurement} phase, to form noise-infused counts to be tabulated into multi-way contingency tables at various levels of geographic resolution.
This is followed by the so-called \emph{estimation} phase, a constrained optimization to impose invariants (such as state population totals, etc) and controlled rounding to integer solutions \cite[see Table 1 of][]{abowd2022topdown}. In general, this approach first imposes unconstrained noise and performs post-processing to meet the additional constraints. One limitation of this solution is that it is not easy to characterize the  probabilistic distribution of the resulting privacy noise after post-processing, as this distribution crucially depends on the particular values of the input data. Thus, the post-processing approach may destroy the probabilistic transparency of the privatized data product \citep{gong2022transparent}.
Our solution maintains the  transparency of the privacy mechanism as well as the statistical independence between the confidential data and the privacy errors, which are of paramount importance to downstream applications derived from these noisy data products. Furthermore, since invariants and integer value constraints are our first priority besides privacy, we supply a rigorously updated definition of differential privacy under this setting by limiting the discussion only among databases that satisfy these constraints.




\section{Integer Subspace Differential Privacy}

In this work, we model private data as a histogram $\vx=\left(x_{1},\dots,x_{d}\right)^{\top} \in  \mathbb{N}^d$, where $x_i\in \mathbb{N}$ is the number of individuals with feature $i\in [d] := \{1, \dots, d\}$.  
A trusted curator holds the database $\vx$, and provides an interface to the database through a randomized mechanism $M:\mathbb{N}^d\to \mathbb{Z}^d$ where we require the output to always take integer values. As motivated in introduction, our goal is to design good mechanisms whose output is close to the original histogram that satisfies not only certain privacy notions, but also invariant constraints and integral requirements.



The notion of differential privacy ensures that no individual's data has much effect on the probabilistic characteristic of the output from the mechanism $M$~\citep{dwork2006calibrating,https://doi.org/10.48550/arxiv.1412.4451}.  
Formally, we say that a randomized mechanism $M: \N^d\to\Z^d$ is \emph{$\left(\epsilon,\delta\right)$-differentially private} if for some $\epsilon, \delta\ge0$, all neighboring databases $\vx$ and $\vx'$ with $\|\vx-\vx'\| = 1$, and all event $S\subseteq \Z^d$, 
\begin{equation}\label{eq:dp-pure-definition}
\Pr\left[ M\left(\vx\right)\in S\right]\le e^\epsilon\cdot \Pr\left[M\left(\vx'\right)\in S\right]+\delta,    
\end{equation}
where $\|\vx-\vx'\|$ is a distance norm of two databases $\left(\vx, \vx'\right)$. Depending on the application, the norm may be chosen either as $\ell_1$ norm or $\ell_2$ norm.

The notion of differential privacy in~\cref{eq:dp-pure-definition} satisfies a more general notion of \emph{smooth differential privacy}~\citep{chatzikokolakis2013broadening}.  $M$ is $(\epsilon, \delta)$-\emph{smooth differentially private} if for all datasets $\vx, \vx'$ and all event $S$ ,
\begin{equation}\label{eq:dp-approx-definition}
\Pr[M(\vx) \in S]\le e^{\epsilon\|\vx-\vx'\|}\Pr[M(\vx') \in S] +\frac{e^{\epsilon\|\vx-\vx'\|}-1}{e^{\epsilon}-1}\delta.  
\end{equation}
Previous works mostly consider the pure smooth differential privacy with $\delta = 0$, and we generalize it to the approximated case.\footnote{When $\delta = 0$, \cref{eq:dp-approx-definition} reduces to $\Pr[M(\vx) \in E]\le e^{\epsilon\|\vx-\vx'\|}\cdot \Pr[M(\vx') \in E]$.  However, for the approximated smooth differential privacy, the error term grows as the distance $\|\vx-\vx'\|$ increases.  For instance, as the distance increases, we can see the error term of the Gaussian mechanism also grows.  Finally, when $\|\cdot\|$ is a norm, a mechanism on the histogram is $(\epsilon,\delta)$-differentially private if and only if the mechanism is $(\epsilon, \delta)$-smooth differentially private.}
Compared to~\cref{eq:dp-pure-definition}, the definition in~\cref{eq:dp-approx-definition} replaces the neighboring relationship of $\left(\vx, \vx'\right)$ by a norm function.  

From the data curator's perspective, in addition to privacy concerns, there often exists external constraints that the privatized output $M$ must meet. 
These constraints can often be represented as counting functions defined below.

\begin{definition}[name = counting invariant constraints, label = def:count_inv]
Given a collection of subsets $\mathcal{A} = \{A_1,\dots,A_k\}$ on $[d]$ with $k\le d$, we say a function $f:\N^d\to \Z^d$ is $\mathcal{A}$-invariant if
$$\sum_{i\in A} x_i = \sum_{i\in A} f(\vx)_i\text{, for all }\vx\in \N^n\text{ and }A\in \mathcal{A}.$$
\end{definition}
Alternatively, given a collection of counting invariant constraints $\mathcal{A}$, we define an equivalence relationship $\equiv_\mathcal{A}$ so that $\vx\equiv_\mathcal{A}\vx'$ if $\sum_{i\in A}x_i = \sum_{i\in A}x_i'$ for all $A\in \mathcal{A}$.  Then an $\A$-invariant $f$ satisfies $\vx\equiv_\A f(\vx)$ for all $\vx$.


Note that when a privatized output is subject to counting invariant constraints as above, we may not use the standard neighboring relation because two neighboring datasets differing by one element may not satisfy the same counting constraint.  In particular, the feasible outputs of an $\A$-invariant mechanism lie within an integer subspace.  Therefore, we employ smooth differential privacy, and only control the privacy loss on datasets that satisfy the same counting constraints, i.e., only ``secret pairs'' of databases within the same equivalent classes. This formulation was also used in previous work such as pufferfish privacy \citep{kifer2014pufferfish}.

\begin{definition}
\label{def:partitional_dp}
We say a mechanism $M$ is 
$(\epsilon, \delta)$-differentially private on an equivalence relation $\equiv$ if for all equivalent pair $\vx\equiv \vx'$ and event $E$ on the output space
$$\Pr[M(\vx) \in E]\le e^{\epsilon\|\vx-\vx'\|}\Pr[M(\vx') \in E] +\frac{e^{\epsilon\|\vx-\vx'\|}-1}{e^{\epsilon}-1}\delta.$$ 

Moreover, given a collection of counting constraint $\A$, we say $M$ is \emph{$\A$-induced integer subspace differentially private} with \emph{privacy loss budget} $(\epsilon, \delta)$, if $M$ is $\A$-invariant and $(\epsilon, \delta)$-differentially private on an equivalence relation $\equiv_\A$.

\end{definition}


Many invariant constraints can be formulated using counting constraints. Below are some examples.
\begin{example}
Suppose $\A = \{[d]\}$ is a singleton.  All datasets with the same number of agents are equivalent under $\equiv_{\{[d]\}}$, so the differential privacy on $\equiv_{\{[d]\}}$ reduces to the original (bounded) differential privacy. 
\end{example}

\begin{example}
Privatized census data products must ensure that the total population of each state is reported exactly as enumerated.
Given $k$ states, and $A_l\subset[d]$ be the collection of features that belong to state $l$. we can define counting invariants $\A = \{A_1,\dots, A_k\}$ where $A_1,\dots,A_k$ forms a partition on the universe $[d]$. 
\end{example}

\begin{example}
We can encode the invariant condition on a $\sqrt{d}\times \sqrt{d}$ two-dimensional contingent table where the sum of each row and each column sums are fixed as a collection of counting constraints.   Formally, given $\sqrt{d}\in \mathbb{N}$, the set $\A$ contains the following subsets on $[d]$, for all $i$ and $j$
$R_{i} = \{i\sqrt{d}+\ell:0\le \ell<\sqrt{d}\}\text{ and }C_{j} = \{j+\ell\sqrt{d}:0\le \ell<\sqrt{d}\}.$
That is $\A = \{R_i, C_j:0\le i,j< \sqrt{d}\}$. 
\end{example}




Composition and post-processing properties for \cref{def:partitional_dp} also hold due to similar arguments as the differential privacy\fang{why do we use approximate dp here?}, but with the additional requirement of equivalence over databases.
Let $\equiv_{(\A_1,\A_2)}$ be the equivalence relation such that $x \equiv_{(\A_1,\A_2)} x'$ implies $x \equiv_{\A_1} x'$ and $x \equiv_{\A_2} x'$. 

\begin{proposition}[Composition]
\label{prop:composition}
Let $M_1$ be $(\epsilon_1, \delta_1)$- differentially private on equivalence relation $\equiv_{\A_1}$ and $M_2$ be $(\epsilon_2, \delta_2)$- differentially private on equivalence relation $\equiv_{\A_2}$. For any pair of databases $x, x'$ having $x \equiv_{(\A_1,\A_2)} x'$, the composed mechanism $M_{1,2}(x) = (M_1(x),M_2(x))$ is $(\epsilon_1+\epsilon_2, \delta_1 + \delta_2)$-differentially private on $\equiv_{\A_1, \A_2}$.
\end{proposition}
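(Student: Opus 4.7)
The plan is to mirror the classical composition argument for approximate differential privacy, with two adaptations specific to this setting: the argument is carried out only on database pairs that are equivalent under both $\equiv_{\A_1}$ and $\equiv_{\A_2}$, and the scaled slack produced by the smooth-DP definition must be folded back into the target form. Since $x\equiv_{(\A_1,\A_2)} x'$ implies $x\equiv_{\A_i} x'$ for each $i\in\{1,2\}$, \cref{def:partitional_dp} applies to each $M_i$ on the pair $(x,x')$. Writing $r:=\|x-x'\|$ and $\gamma_\epsilon(r):=(e^{\epsilon r}-1)/(e^{\epsilon}-1)$, the smooth-DP bound~\cref{eq:dp-approx-definition} then reads $\Pr[M_i(x)\in E]\le e^{\epsilon_i r}\Pr[M_i(x')\in E]+\gamma_{\epsilon_i}(r)\delta_i$ for every event $E$ in the output space of $M_i$.

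Next I would invoke the standard equivalence between approximate differential privacy and the ``bad event plus pointwise ratio'' characterization: for each $i$ there exists a measurable set $B_i$ with $\Pr[M_i(x)\in B_i]\le \gamma_{\epsilon_i}(r)\delta_i$ on whose complement the likelihood ratio is pointwise bounded by $e^{\epsilon_i r}$. Assuming $M_1$ and $M_2$ use independent coins so that $M_{1,2}(x)$ has product law, form the joint bad event $B=(B_1\times\Z^d)\cup(\Z^d\times B_2)$. A union bound gives $\Pr[M_{1,2}(x)\in B]\le \gamma_{\epsilon_1}(r)\delta_1+\gamma_{\epsilon_2}(r)\delta_2$, and on $B^c$ the two pointwise ratios multiply to at most $e^{(\epsilon_1+\epsilon_2)r}$. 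Splitting $\Pr[M_{1,2}(x)\in S]$ over $B$ and $B^c$ in the usual way then yields
\[
\Pr[M_{1,2}(x)\in S]\le e^{(\epsilon_1+\epsilon_2)r}\Pr[M_{1,2}(x')\in S]+\gamma_{\epsilon_1}(r)\delta_1+\gamma_{\epsilon_2}(r)\delta_2.
\]

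The step I expect to need the most care is recasting the slack $\gamma_{\epsilon_1}(r)\delta_1+\gamma_{\epsilon_2}(r)\delta_2$ in the form $\gamma_{\epsilon_1+\epsilon_2}(r)(\delta_1+\delta_2)$ demanded by \cref{def:partitional_dp}. For this I would verify that $\gamma_\epsilon(r)$ is non-decreasing in $\epsilon\ge 0$ whenever $r\ge 1$: for integer $r$ this is immediate from the geometric-series expansion $\gamma_\epsilon(r)=1+e^\epsilon+\cdots+e^{(r-1)\epsilon}$, and for general real $r\ge 1$ a short derivative computation in $u=e^\epsilon$ suffices. Since $x,x'\in\N^d$ and $\|\cdot\|$ is either the $\ell_1$ or $\ell_2$ norm, any distinct pair satisfies $r\ge 1$ (the case $x=x'$ is trivial), so monotonicity yields $\gamma_{\epsilon_i}(r)\le \gamma_{\epsilon_1+\epsilon_2}(r)$ for each $i$, which collapses the slack into $\gamma_{\epsilon_1+\epsilon_2}(r)(\delta_1+\delta_2)$ and establishes that $M_{1,2}$ is $(\epsilon_1+\epsilon_2,\delta_1+\delta_2)$-differentially private on $\equiv_{\A_1,\A_2}$.
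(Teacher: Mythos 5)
Your overall architecture is sound and is genuinely different from the paper's: the paper fixes a single output pair $(y_1,y_2)$ and chains the two smooth-DP bounds through conditional probabilities (with a $\min(\cdot,1)$ trick), whereas you work at the level of events via a ``bad set'' decomposition, a union bound, and multiplication of pointwise ratios on the good set. The one step that does not hold as stated is your appeal to a ``standard equivalence'' between the bound of \cref{def:partitional_dp} and the existence of a set $B_i$ with $\Pr[M_i(x)\in B_i]\le\gamma_{\epsilon_i}(r)\delta_i$ outside of which the likelihood ratio is pointwise at most $e^{\epsilon_i r}$. That bad-event form is \emph{sufficient} for $(\epsilon,\delta)$-type bounds but not \emph{necessary}: take $P=(0.6,0.4)$ and $Q=(0.5,0.5)$ on two points, $\epsilon=0$, $\delta=0.1$; then $P(S)\le Q(S)+0.1$ for every $S$, yet any set containing the point where $P>Q$ has $P$-mass $0.6>0.1$, so no such $B$ exists. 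The repair is standard and keeps your proof intact: decompose the law $P_i$ of $M_i(x)$ into sub-probability measures $P_i=P_i^{\mathrm{good}}+P_i^{\mathrm{bad}}$ with $P_i^{\mathrm{good}}(y)=\min\{P_i(y),\,e^{\epsilon_i r}Q_i(y)\}$, where $Q_i$ is the law of $M_i(x')$; then $\lVert P_i^{\mathrm{bad}}\rVert=\sum_y\max\{0,P_i(y)-e^{\epsilon_i r}Q_i(y)\}\le\gamma_{\epsilon_i}(r)\delta_i$ by the smooth-DP bound applied to the set where $P_i>e^{\epsilon_i r}Q_i$, and for any $S$, using the product law, $\Pr[M_{1,2}(x)\in S]\le\sum_{(y_1,y_2)\in S}P_1^{\mathrm{good}}(y_1)P_2^{\mathrm{good}}(y_2)+\lVert P_1^{\mathrm{bad}}\rVert+\lVert P_2^{\mathrm{bad}}\rVert\le e^{(\epsilon_1+\epsilon_2)r}\Pr[M_{1,2}(x')\in S]+\gamma_{\epsilon_1}(r)\delta_1+\gamma_{\epsilon_2}(r)\delta_2$, which is exactly the display you wrote. (Your explicit independence assumption on the coins of $M_1$ and $M_2$ is also implicitly used in the paper's proof when it conditions on $M_1(x)=y_1$.)

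On the final step you are more careful than the paper: the paper stops at the slack $\gamma_{\epsilon_1}(r)\delta_1+\gamma_{\epsilon_2}(r)\delta_2$ and declares the proof complete, leaving implicit the conversion to the form $\gamma_{\epsilon_1+\epsilon_2}(r)(\delta_1+\delta_2)$ required by \cref{def:partitional_dp}. Your verification that $\gamma_\epsilon(r)$ is non-decreasing in $\epsilon$ when $r\ge1$, combined with the observation that distinct integer histograms satisfy $\|x-x'\|\ge1$ in either the $\ell_1$ or $\ell_2$ norm (and that the case $x=x'$ is trivial), is precisely what closes that step, and is a welcome addition. Once the bad-event claim is replaced by the sub-measure decomposition above, your argument is complete and, at the event level, arguably tighter than the paper's pointwise computation.
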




\begin{proposition}[Post-processing]
\label{prop:post-processing}
If $M$ is $(\epsilon,\delta)$-differentially private over $\equiv$, then for a arbitrary randomized mapping $F: \mathcal{Y} \to \mathcal{Z}$,  $F \circ M$ is $(\epsilon,\delta)$-differentially private over $\equiv$.
\end{proposition}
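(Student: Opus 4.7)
The plan is to follow the standard post-processing argument for differential privacy, adapted to the smooth DP form on an equivalence relation, noting that the equivalence relation $\equiv$ and the metric $\|x-x'\|$ appear only on the input side and are therefore untouched by any operation applied to the output.

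First I would fix an arbitrary equivalent pair $x \equiv x'$ and an arbitrary event $E \subseteq \mathcal{Z}$, and couple the randomized map $F$ to its internal randomness $R$, drawn independently of $M(x)$, so that $F$ can be written as a deterministic function $F(\cdot\,, R): \mathcal{Y}\to\mathcal{Z}$. The key identity is
\[
\Pr[(F\circ M)(x)\in E] \;=\; \E_R\bigl[\Pr[M(x)\in F(\cdot,R)^{-1}(E) \mid R]\bigr],
\]
and likewise for $x'$. For each realization $r$ of $R$, the preimage $S_r := F(\cdot,r)^{-1}(E)\subseteq\mathcal{Y}$ is a genuine event in the output space of $M$, so the hypothesis that $M$ is $(\epsilon,\delta)$-differentially private on $\equiv$ yields, pointwise in $r$,
\[
\Pr[M(x)\in S_r] \;\le\; e^{\epsilon\|x-x'\|}\Pr[M(x')\in S_r] + \frac{e^{\epsilon\|x-x'\|}-1}{e^{\epsilon}-1}\,\delta.
\]

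Next I would take expectation over $R$ on both sides. Since $\|x-x'\|$ depends only on the (fixed) inputs and not on $R$, both the multiplicative factor $e^{\epsilon\|x-x'\|}$ and the additive term $\tfrac{e^{\epsilon\|x-x'\|}-1}{e^{\epsilon}-1}\,\delta$ pull out of the expectation. Combined with the identity above, this gives
\[
\Pr[(F\circ M)(x)\in E] \;\le\; e^{\epsilon\|x-x'\|}\Pr[(F\circ M)(x')\in E] + \frac{e^{\epsilon\|x-x'\|}-1}{e^{\epsilon}-1}\,\delta,
\]
which is exactly the condition in \cref{def:partitional_dp} for $F\circ M$ on $\equiv$.

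There is essentially no serious obstacle: the proof is routine once one observes that $\equiv$ and $\|x-x'\|$ live on the input side and that smooth DP on an equivalence relation is closed under taking convex mixtures of events on the output side. The only mild care needed is the measure-theoretic justification for treating $F$ via its internal randomness $R$ and applying Fubini to interchange the expectation over $R$ with the probability bound; this is standard and identical to the classical post-processing proof for $(\epsilon,\delta)$-DP, so I would invoke it without dwelling on it.
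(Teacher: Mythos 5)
Your proposal is correct and matches the paper's proof essentially step for step: the paper also writes the randomized map as a mixture of deterministic functions $f\sim F$, applies the $(\epsilon,\delta)$ bound on $\equiv$ to the preimage event $f^{-1}(E)$, and then takes the expectation, pulling out the factors that depend only on $\|x-x'\|$. Your explicit remark about the internal randomness $R$ and Fubini is just a slightly more careful phrasing of the same argument.
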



Details of ~\cref{prop:composition}~and~\ref{prop:post-processing} can be found in ~\cref{appendix:composition-post-processing}. Note that when privacy mechanisms that obey different invariant constraints are imposed, it might be possible to infer aspects of the confidential data following logical consequences from both. 
It is not clear how to define post-processing for invariant constraints in those situations.

\section{Generalized Laplace and Gaussian Mechanisms}
Now we study $\A$-induced integer differentially private mechanisms that satisfy counting invariant constraints.  The main challenge is that counting invariant constraints significantly restrict the feasible output space, especially when the output is required to take only integer values.   To resolve this issue, we first show that counting invariant constraints on integer-valued output can be written as a lattice space.  Then we propose revised Laplace and Gaussian mechanisms on a lattice space. We discuss how to implement these mechanisms via sampling and MCMC in the Implementation section.


\subsection{Counting invariants and lattice spaces}\label{sec:feasible}
In this section, we show the output space of integer datasets satisfying counting invariant constraints are lattice spaces. 
A \emph{lattice} $\Lambda$ is a discrete additive subgroup of $\mathbb{R}^d$.  Given a matrix $\mB\in \R^{d\times m}$ consisting of $m$ basis $\vb_1,\dots, \vb_m$ with $1\le m\le d$, a lattice generated by the basis is 
$\Lambda(\mB) = \left\{\mB\vv = \sum_{i = 1}^m v_i\vb_i: \vv\in \mathbb{Z}^m\right\}.$

\begin{proposition}
\label{prop:feasible}
Given a collection of counting invariants $\A=\{A_1,\dots,A_k\}$, there exists a lattice $\Lambda_\A = \Lambda(\mC_\A)$with basis $\mC_\A\in \Z^{d\times(d-k)}$ so that a function $f:\N^d\to \Z^d$ is $\A$-invariant if and only if for all $\vx\in \N^d$
$$f(\vx)-\vx\in \Lambda_\A.$$
\end{proposition}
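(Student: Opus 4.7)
The plan is to reformulate the $\mathcal{A}$-invariance condition as a homogeneous integer linear system and then invoke standard structure theory for integer kernels (via Smith normal form) to extract an integer lattice basis.

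First, I would encode the counting invariants by the $0/1$ matrix $\mA_\A \in \{0,1\}^{k\times d}$ whose $j$-th row is the indicator vector $\mathbf{1}_{A_j}$. Unpacking \cref{def:count_inv}, $f:\N^d\to\Z^d$ is $\mathcal{A}$-invariant exactly when $\mA_\A\bigl(f(\vx)-\vx\bigr)=\vzero$ for every $\vx\in\N^d$. Letting $\vy = f(\vx)-\vx\in\Z^d$, the proposition reduces to showing that the set
$$L_\mathcal{A} := \{\vy\in\Z^d : \mA_\A\,\vy = \vzero\}$$
is a lattice of rank $d-k$ that can be written as $\Lambda(\mC_\A)$ for some integer basis $\mC_\A\in\Z^{d\times(d-k)}$.

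Next I would verify the lattice structure of $L_\A$ and exhibit an integer basis. Discreteness and closure under addition are immediate since $L_\A\subseteq\Z^d$. To compute its rank and produce a basis, I would appeal to the Smith normal form: there exist unimodular matrices $\mU\in\Z^{k\times k}$ and $\mV\in\Z^{d\times d}$ with $\mU\mA_\A\mV = \mS$, where $\mS$ has nonzero diagonal entries only in its first $k$ positions (assuming, as the stated dimension $d-k$ requires, that the rows of $\mA_\A$ are linearly independent over $\Q$). Since $\mV$ is invertible over $\Z$, the change of variables $\vy = \mV\vw$ turns $\mA_\A\vy=\vzero$ into $\mS\vw=\vzero$, whose integer solutions are exactly $\{\vw\in\Z^d : w_1=\cdots=w_k=0\}$. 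Pulling back by $\mV$, this says that the last $d-k$ columns of $\mV$ form a $\Z$-basis of $L_\A$. I would then set $\mC_\A$ to be precisely the matrix of those $d-k$ columns, so that $L_\A = \Lambda(\mC_\A)=:\Lambda_\A$ by construction.

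Finally, the two directions of the biconditional are immediate: if $f$ is $\mathcal{A}$-invariant then $f(\vx)-\vx\in L_\A=\Lambda_\A$ for all $\vx$; conversely, if $f(\vx)-\vx\in\Lambda_\A$ then $\mA_\A(f(\vx)-\vx)=\vzero$, recovering the counting-invariant equations. The main substantive step is the Smith-normal-form argument, which is also where the implicit non-degeneracy assumption enters: if the indicator vectors $\{\mathbf{1}_A : A\in\A\}$ happen not to be linearly independent, the same proof still produces an integer basis, but the rank is $d-\mathrm{rank}(\mA_\A)$ rather than $d-k$, so one should read the proposition as stated under the standard assumption that $\mathcal{A}$ contains no redundant constraints.
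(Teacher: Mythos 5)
Your proof is correct and follows essentially the same route as the paper's: encode $\A$ as a $0/1$ incidence matrix, reduce $\A$-invariance to the homogeneous system $\mA_\A(f(\vx)-\vx)=\vzero$, and use the Smith normal form $\mU\mA\mV=\mD$ to identify the integer kernel as the lattice generated by the last $d-k$ columns of $\mV$. Your closing remark about needing linearly independent constraints matches the paper, which handles this by calling $\A$ \emph{full rank} (rank of the incidence matrix equal to $k$) inside its proof.
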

Note that as $\mC_\A$ is an integer-valued matrix, $\Lambda_\A$ is a subset of the integer grid $\Z^{d}$.  As we impose more counting invariants, the lattice $\Lambda_\A$ becomes sparser, making it harder to find a feasible solution for an $\A$-invariant mechanism. 

\begin{proof}\fang{If we need more space, I feel we may remove this proof which is mostly linear algebra}
Given a collection of counting invariant constraints $\A=\{A_1,\dots,A_k\}$ and a dataset $\vx$, finding a feasible output $\vy$ with $\vy\equiv_\A \vx$ is equivalent to solving the following linear equations in \cref{eq:lin_sys}, where $\mA\in\{0,1\}^{k\times d}$ is an \emph{incidence matrix}, for all $i\in[d]$ and $l\le k$ $A_{l,i} = 1$ if $i\in A_l$ and zero otherwise.  Then $\vy\equiv_\A \vx$ if and only if 
\begin{equation}\label{eq:lin_sys}
    \mA\vy = \mA\vx.
\end{equation}
Since we require $\vy\in \Z^d$, the problem of solving $\vy$ is known as solving the linear Diophantine equation~\citep{gilbert1990linear,schrijver1998theory,greenberg1971integer}, and can be done by computing the \emph{Smith normal form} of $\mA$.  Specifically, given an integer-valued matrix $\mA\in \Z^{k\times d}$ of rank $k$, there exists $\mU\in \Z^{k\times k}, \mV\in \Z^{d\times d}$ and $\mD\in \Z^{k\times d}$ with 
\begin{equation}\label{eq:smith-normal-form}
\mU\mA\mV = \mD \quad 
\end{equation}
so that $\mU$ and $\mV$ are unimodular matrices that are invertible over the integers, and $\mD$ is a diagonal matrix (i.e. the Smith normal form of $\mA$) with $D_{l,l}\neq 0$ for all $l\in[k]$. \fang{no need to specify off-diagonal entries} 

With the above decomposition, we can characterize the integer solutions of \cref{eq:lin_sys}. $\vy$ is a solution of \cref{eq:lin_sys}, if and only if  $\mA (\vy-\vx) = \vzero$ and it is equivalent to $\mD\mV^{-1} (\vy-\vx) = \vzero$, since $\mU$ is unimodular.  Because $\mD$ is diagonal and has rank $k$, $\vy$ is a solution if and only if there exists $\vw\in \Z^d$ so that $\vy = \vx+\mV\vw$ and $w_l = 0$ for all $l\le k$.
Additionally, if we define $\mC_\A\in\Z^{d\times (d-k)}$ that consists the bottom $d-k$ rows of $\mV$, then $\vy$ is a solution if and only if $\vy- \vx \in\{\mC_\A\vv:\vv\in \Z^{d-k}\} =  \Lambda(\mC_\A)$.
We call $\A$ \emph{full rank} if the rank of $\mA\in \{0,1\}^{k\times d}$ is $k$, the associated $\mC_\A$ as the \emph{basis of $\A$}, and $\Lambda_\A:=\Lambda(\mC_\A)$.  The integer solutions of \cref{eq:lin_sys} is the lattice $\Lambda_\A$ shifted by $\vx$.  Therefore, if $M:\N^d\to \Z^d$ is $\A$-invariants the output given input $\vx$ is contained in $\vx+\Lambda_\A := \{\vx+\vz:\vz\in \Lambda_\A\}$,
$M(\vx)\in \vx+\Lambda_\A.$
\end{proof}




\subsection{Generalized Laplace mechanism}
Now we can define a generalized Laplace mechanism whose output space satisfies \cref{prop:feasible}.  We adopt the classical exponential mechanism to this restricted output space and show the resulting mechanisms are as required. Because the output space \cref{prop:feasible} is unbounded, we show the utility guarantee by controlling the dimension of the lattice spaces which may be of interest by itself. First, we formally define the generalized Laplace mechanism.

\begin{definition}[label = def:lap]
Given a collection of counting invariant condition $\A$ of full rank and $\epsilon>0$, the \emph{generalized Laplace mechanism} is $M_{Lap,\A, \epsilon}(\vx) = \vx+\vz$ with $\vz$ sampled from \begin{equation}\label{eq:gen-laplace}
  q_\epsilon(\vz)\propto \exp\left(-\epsilon\|\vz\|\right)\mathbf{1}[\vz\in \Lambda_\A]  
\end{equation}
where $\Lambda_\A$ is defined in \cref{prop:feasible}.
\end{definition}
Note that if the counting constraint is \emph{vacuous}, i.e. $\A = \emptyset$ with $d = 1$, and the output can be real number, the generalized Laplace mechanism reduces to the original Laplace mechanism on histograms.  Additionally, if we require integer-valued output with $\A = \emptyset$, the generalized Laplace mechanism becomes the double geometric mechanism. 

\begin{theorem}[label = thm:lap]
Given full rank $\A$ and $\epsilon>0$, mechanism $M_{Lap,\A, \epsilon}:\N^d\to \Z^d$ in \cref{def:lap} is $\A$-induced integer subspace  with $(\epsilon,0)$.
\end{theorem}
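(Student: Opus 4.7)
The plan is to verify the two requirements of Definition~\ref{def:partitional_dp} separately: (i) $\A$-invariance of $M_{Lap,\A,\epsilon}$, and (ii) the $(\epsilon,0)$-differential privacy bound on the equivalence relation $\equiv_\A$.

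For (i), the mechanism outputs $\vx + \vz$ where $\vz$ is sampled from the distribution supported on $\Lambda_\A$. By \cref{prop:feasible}, a function $f$ is $\A$-invariant if and only if $f(\vx) - \vx \in \Lambda_\A$ for all $\vx \in \N^d$, so this step is immediate. The only subtlety here is that the definition of $q_\epsilon$ must actually be a probability distribution, i.e., the normalizing constant $Z(\epsilon) := \sum_{\vz \in \Lambda_\A} \exp(-\epsilon \|\vz\|)$ is finite. I would justify this in one line by noting that $\Lambda_\A = \{\mC_\A \vv : \vv \in \Z^{d-k}\}$ with $\mC_\A$ of full column rank, so $\|\mC_\A \vv\|$ is bounded below by $c\|\vv\|$ for some $c > 0$, reducing the sum to a convergent geometric-type sum over $\Z^{d-k}$.

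For (ii), fix $\vx \equiv_\A \vx'$ and an arbitrary event $E \subseteq \Z^d$. The key observation is that since $\vx - \vx' \in \Lambda_\A$ and $\Lambda_\A$ is a subgroup, the map $\vy \mapsto \vy$ takes $\vx + \Lambda_\A$ bijectively to $\vx' + \Lambda_\A$, so the supports of $M(\vx)$ and $M(\vx')$ coincide, and moreover $Z(\epsilon)$ is the same when reindexing by $\vz = \vy - \vx$ versus $\vz' = \vy - \vx'$. This leaves the pointwise density ratio, for any $\vy$ in the common support:
\begin{equation*}
\frac{\Pr[M(\vx) = \vy]}{\Pr[M(\vx') = \vy]} = \exp\bigl(\epsilon(\|\vy - \vx'\| - \|\vy - \vx\|)\bigr) \le \exp(\epsilon\|\vx - \vx'\|),
\end{equation*}
where the inequality is a direct application of the triangle inequality. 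Summing over $\vy \in E$ yields the required $(\epsilon, 0)$-smooth differential privacy bound on $\equiv_\A$.

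The part requiring the most care is the \textbf{equality of normalizing constants} across equivalent databases; without it, a naive ratio bound picks up a spurious multiplicative factor. This is precisely where the subgroup structure of $\Lambda_\A$ (rather than merely a feasibility set) is essential, and it is the reason the mechanism is genuinely $(\epsilon,0)$-private rather than $(\epsilon,\delta)$-private with nonzero $\delta$. Everything else is a routine combination of the triangle inequality and the definition of $q_\epsilon$.
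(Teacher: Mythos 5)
Your proposal is correct and follows essentially the same route as the paper's proof: both rest on the triangle inequality for the unnormalized density ratio together with the observation that, since $\vx-\vx'\in\Lambda_\A$ and $\Lambda_\A$ is a group, the supports $\vx+\Lambda_\A=\vx'+\Lambda_\A$ coincide and the normalizing constants cancel (the paper phrases this as the bijection $\vw\mapsto\vw+\vx-\vx'$ between feasible points). Your extra remark on finiteness of the normalizing constant is a harmless addition the paper leaves implicit.
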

The proof is similar to the privacy guarantee of Laplace mechanisms, and is presented in ~\cref{appendix:lattice}. The implementation of the generalized Laplace mechanisms in \cref{def:lap} is non-trivial and is elaborated in Implementation section. 

After the privacy guarantee, we turn to discuss the utility of the generalized Laplace mechanism.  First, \cref{eq:lap_unbias} establishes the unbiasedness of the generalized Laplace mechanism. The tail bound needs additional work, because the output spaces of the conventional exponential mechanism are often finite, which is not the case here. 
We resolve this issue using the bounded-dimension property of lattices spaces (\cref{lem:doubling}), and prove the error bound is sub-exponential in \cref{eq:lap_tail}.  
There is a long line of work~\citep{landau1915analytischen,tsuji1953lattice,bentkus1999lattice} that approximate the number of lattice points in sphere by the volume.  Using these ideas, we prove the following lemma (with proof in ~\cref{appendix:lattice}) for ~\cref{thm:lap_acc}.
\begin{lemma}[label = lem:doubling]
Let $\mB\in \R^{d\times m}$ be a rank $m\le d$ matrix.  For all $r>0$ large enough, 
$|\{\vz\in \Lambda(\mB):\|\vz\|_2\le r\}|\le \frac{2V_m}{\sqrt{\det(\mB^\top \mB)}}r^m$
where $V_{m}$ is the volume of the $m$ dimensional unit sphere. 
\end{lemma}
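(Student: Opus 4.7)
The plan is to reduce the problem to the familiar situation of counting lattice points in an $m$-dimensional ball, and then to use the classical volume-packing argument (covering by copies of the fundamental domain) together with a simple slack to absorb the boundary into the factor of $2$.

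First I would observe that every $\vz\in\Lambda(\mB)$ lies in $V:=\mathrm{span}(\vb_1,\dots,\vb_m)\subseteq\R^d$, so $\{\vz\in\Lambda(\mB):\|\vz\|_2\le r\}=\{\vz\in\Lambda(\mB):\vz\in V,\ \|\vz\|_2\le r\}$. Choosing any orthonormal basis of $V$, identify $V$ isometrically with $\R^m$. Under this identification, the Euclidean norm is preserved, the ball $\{\vz\in V:\|\vz\|_2\le r\}$ becomes the Euclidean ball $B_r\subset\R^m$ of volume $V_m r^m$, and the lattice $\Lambda(\mB)$ becomes a full-rank lattice $\Lambda'\subseteq\R^m$. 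The covolume (i.e.\ the $m$-dimensional volume of a fundamental parallelepiped) of $\Lambda'$ equals $\mathrm{vol}_m(\mB[0,1)^m)=\sqrt{\det(\mB^\top\mB)}$ by the standard Gram-determinant identity.

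Next I would invoke the usual fundamental-domain packing argument. Let $\mathcal{F}$ denote a fundamental parallelepiped for $\Lambda'$, and let $D:=\mathrm{diam}(\mathcal{F})$, which is a finite constant depending only on $\mB$. The translates $\{\vp+\mathcal{F}:\vp\in\Lambda'\}$ tile $\R^m$ disjointly (up to a set of measure zero). For each lattice point $\vp\in\Lambda'\cap B_r$, the set $\vp+\mathcal{F}$ is contained in the enlarged ball $B_{r+D}$. Since these translates are essentially disjoint, summing their volumes gives
\[
\bigl|\Lambda'\cap B_r\bigr|\cdot\sqrt{\det(\mB^\top\mB)}\ \le\ \mathrm{vol}_m(B_{r+D})\ =\ V_m (r+D)^m.
\]
Finally, for $r$ large enough that $(1+D/r)^m\le 2$ (i.e.\ $r\ge D/(2^{1/m}-1)$), we obtain $(r+D)^m\le 2r^m$, and dividing through yields the claimed bound $|\Lambda'\cap B_r|\le 2V_m r^m/\sqrt{\det(\mB^\top\mB)}$.

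The only mildly delicate point is the boundary correction. The factor of $2$ is very generous, and the above bound of $(r+D)^m\le 2r^m$ for $r$ sufficiently large handles it with room to spare; more refined asymptotic counts (e.g.\ those of Landau or Bentkus) would give a $1+o(1)$ factor but are unnecessary here. The key inputs are only that lattice points lie in $V$, the Gram-determinant formula for the covolume, and disjointness of fundamental-domain translates.
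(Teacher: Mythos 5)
Your proof is correct, but it takes a different route from the paper's. The paper rewrites the count as the number of \emph{integer} points in an ellipsoid: $|\{\vz\in\Lambda(\mB):\|\vz\|_2\le r\}| = |\{\vv\in\Z^m:\vv^\top(\mB^\top\mB)\vv\le r^2\}| = \vol_\Z(E_{\mB^\top\mB}(r))$, and then invokes the classical lattice-point-counting theorem of Landau (as cited in the appendix) which says that for large radius the number of integer points in an ellipsoid is within any prescribed multiplicative factor $\rho$ of its volume $V_m r^m/\sqrt{\det(\mB^\top\mB)}$; taking $\rho=2$ gives the lemma. You instead work intrinsically in $\mathrm{span}(\vb_1,\dots,\vb_m)$, identify it isometrically with $\R^m$, compute the covolume via the Gram determinant $\sqrt{\det(\mB^\top\mB)}$, and run the elementary fundamental-domain packing argument, absorbing the diameter correction $(r+D)^m\le 2r^m$ into the factor of $2$ for $r\ge D/(2^{1/m}-1)$. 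The trade-off: your argument is fully self-contained and elementary, needing no external citation, and it makes the dependence of the threshold radius on $\mB$ completely explicit; the paper's argument is shorter on the page because it delegates the counting to a known asymptotic result, which in principle would yield a $(1+o(1))$ factor rather than $2$, though the lemma only requires the factor $2$. Both establish exactly the stated bound.
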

\begin{theorem}[name = unbiasedness and accuracy,label = thm:lap_acc]
Given $\A$ of $k$ counting constraints and $\epsilon>0$, $M_{Lap, \A,\epsilon}$ in \cref{def:lap} is unbiased 
\begin{equation}\label{eq:lap_unbias}
    \E[M_{Lap,\A,\epsilon}(\vx)] = \vx,
\end{equation}
and, if $\|\cdot\|$ is $\ell_2$-norm, there exists a constant $K>0$ so that for all $\vx\in \N^d$, and large enough $t>0$, 
\begin{equation}\label{eq:lap_tail}
    \Pr[\|M_{Lap,\A,\epsilon}(\vx)-\vx\|_2\ge t]\le Kt^{d-k}\exp(-\epsilon t).
\end{equation}
Moreover, $K$ can be $\frac{4\cdot 2^{d-k}V_{d-k}}{\sqrt{\det(\mC_\A^\top\mC_\A)}}$ where $V_{d-k}$ is the volume of the $(d-k)$ dimensional unit sphere.
\end{theorem}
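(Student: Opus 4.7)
The plan is to prove the two claims separately, leaning on lattice symmetry for unbiasedness and on a dyadic shell decomposition together with \cref{lem:doubling} for the tail bound.

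For \cref{eq:lap_unbias}, I would observe that the density $q_\epsilon(\vz)\propto \exp(-\epsilon\|\vz\|)\mathbf{1}[\vz\in\Lambda_\A]$ is symmetric about the origin: since $\Lambda_\A$ is an additive subgroup of $\Z^d$, we have $\vz\in\Lambda_\A$ iff $-\vz\in\Lambda_\A$, and any norm is even. Hence the odd function $\vz$ integrates to $\vzero$ under $q_\epsilon$, giving $\E[M_{Lap,\A,\epsilon}(\vx)] = \vx + \E[\vz] = \vx$.

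For \cref{eq:lap_tail}, let $Z = \sum_{\vz\in\Lambda_\A}e^{-\epsilon\|\vz\|_2}$ be the normalizing constant. Since $\vzero\in\Lambda_\A$, $Z\ge 1$, so
$$\Pr[\|M_{Lap,\A,\epsilon}(\vx)-\vx\|_2\ge t] \;\le\; \sum_{\vz\in\Lambda_\A,\,\|\vz\|_2\ge t} e^{-\epsilon\|\vz\|_2}.$$
I would then partition the right-hand sum into dyadic shells $S_i=\{\vz\in\Lambda_\A : 2^i t \le \|\vz\|_2 < 2^{i+1}t\}$ for $i=0,1,2,\dots$. On each shell the summand is at most $e^{-\epsilon 2^i t}$, and $|S_i|$ is bounded by the ball count $N(2^{i+1}t):=|\{\vz\in\Lambda_\A:\|\vz\|_2\le 2^{i+1}t\}|$. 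By \cref{lem:doubling}, for $t$ sufficiently large one has $N(2^{i+1}t)\le \frac{2V_{d-k}}{\sqrt{\det(\mC_\A^\top\mC_\A)}}(2^{i+1}t)^{d-k}$ for every $i\ge 0$ simultaneously. Summing the shells yields
$$\sum_{\vz:\|\vz\|_2\ge t} e^{-\epsilon\|\vz\|_2} \;\le\; \frac{2V_{d-k}}{\sqrt{\det(\mC_\A^\top\mC_\A)}}\cdot 2^{d-k}t^{d-k}\cdot \sum_{i\ge 0} 2^{i(d-k)} e^{-\epsilon 2^i t}.$$

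The last step is to show $\sum_{i\ge 0}2^{i(d-k)}e^{-\epsilon 2^i t}\le 2 e^{-\epsilon t}$ for all $t$ large enough. This is a standard super-geometric decay estimate: the $(i{+}1)$-to-$i$ ratio equals $2^{d-k}e^{-\epsilon(2^{i+1}-2^i)t}=2^{d-k}e^{-\epsilon 2^i t}$, which is at most $1/2$ once $t\ge (d-k+1)\ln 2/\epsilon$, so the series is dominated by twice its $i=0$ term. Putting the pieces together produces the asserted tail with $K = \frac{4\cdot 2^{d-k}V_{d-k}}{\sqrt{\det(\mC_\A^\top\mC_\A)}}$.

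The main obstacle is that the feasible output set is an unbounded lattice of dimension $d-k$, so the standard accuracy argument for the finite-support exponential mechanism does not apply; this is exactly why the polynomial lattice-point count of \cref{lem:doubling} is needed to tame the unbounded tail into a sub-exponential one. A minor bookkeeping point is that the ``$r$ large enough'' threshold in \cref{lem:doubling} must be valid for every dyadic radius $2^{i+1}t$ appearing in the sum, which is automatic by monotonicity once $t$ itself exceeds the base threshold.
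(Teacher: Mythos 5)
Your proposal is correct and follows essentially the same route as the paper: symmetry of the even density for unbiasedness, and for the tail, lower-bounding the normalizer by the $\vzero$ term, a dyadic shell decomposition with the lattice-point count of \cref{lem:doubling}, and a ratio/geometric-domination argument giving the factor $2$ and the same constant $K = \frac{4\cdot 2^{d-k}V_{d-k}}{\sqrt{\det(\mC_\A^\top\mC_\A)}}$. Your explicit threshold $t\ge (d-k+1)\ln 2/\epsilon$ and the monotonicity remark about the dyadic radii just make precise what the paper leaves as ``large enough $t$.''
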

Note that as the dimension $d-k$ increases $2^{d-k}V_{d-k}\to 0$, so $K$ is determined by the counting constraints $1/\sqrt{\det(\mC_\A^\top\mC_\A)}$.


\subsection{Generalized Gaussian mechanism}
In this section, we propose generalized Gaussian mechanisms that relate to Lattice-based cryptography discussed in \cref{sec:implementation}.  
We first define Gaussian random variables on lattices. Given a lattice $\Lambda\subset\R^d$, and $\vc\in \R^d$, the spherical \emph{Gaussian random variable} $\rvz$ on $\Lambda$ with variance $\sigma$ and center $\vc$ is 
\begin{equation}\label{eq:gen-gaussian}
\Pr[\rvz = \vz]\propto \exp\left(-\frac{1}{2\sigma^2} \|\vz-\vc\|_2^2\right), \forall \vz\in \Lambda.
\end{equation}
Note that the generalized Gaussian mechanism utilizes an $\ell_2$ norm to measure the noise scale.


\begin{definition}[label = def:gaussian]
Given a collection of counting invariant condition $\A$ of full rank, $\epsilon>0$, and $\delta>0$, the \emph{generalized Gaussian mechanism} is defined as $M_{G,\A, \epsilon, \delta}(\vx) = \vx+\vz$ where $\vz$ is the Gaussian random variable on $\Lambda_\A$ with center at $\vzero$ and variance $\sigma_{\epsilon, \delta}^2 = \frac{2c_\A \ln 1/\delta}{\epsilon^2}$ for some constant $c_\A = \mathcal{O}(\max\{(d-k)\ln (d-k), \ln K\} )$ that only depends on the dimension and the collection of counting constraints where $K$ is defined in \cref{thm:lap_acc}. 
\end{definition}
While the variance scale in the original Gaussian mechanisms is independent of the dimension, generalized Gaussian mechanism's $\sigma_{\epsilon, \delta}$ in \cref{def:gaussian} depends on the dimension.  This may be due to a lack of symmetry of the lattice space.  Recall that the proof of the original Gaussian mechanism reduces the high dimensional case to the one dimensional setting, thanks to the Gaussian being spherically symmetric.  However, our lattice space $\Lambda_\A$ is generally not spherically symmetric, and simple reduction does not work.

\begin{theorem}[label = thm:gaussian]
Given full rank $\A$, $\epsilon$ and $\delta$ with $0<\delta<\epsilon<1/e$, mechanism $M_{G,\A, \epsilon, \delta}:\N^d\to \Z^d$ in \cref{def:gaussian} is $\A$-induced integer subspace differentially private with $(\epsilon,\delta)$.
\end{theorem}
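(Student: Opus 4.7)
The plan is to mirror, at a high level, the standard analysis of the Gaussian mechanism, but with two crucial modifications: (i) exploit the fact that the normalizing constants of the two lattice Gaussians agree whenever $\vx\equiv_\A\vx'$, and (ii) replace the one-dimensional Gaussian tail (unavailable due to lack of spherical symmetry) by a shell-counting argument over $\Lambda_\A$. Invariance of $M_{G,\A,\epsilon,\delta}$ is immediate: the noise $\vz$ lies in $\Lambda_\A$ by construction, so $M(\vx)-\vx\in\Lambda_\A$ and \cref{prop:feasible} gives $M(\vx)\equiv_\A\vx$.

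For the privacy guarantee, fix any pair $\vx\equiv_\A\vx'$ and set $\vu=\vx-\vx'$. By \cref{prop:feasible}, $\vu\in\Lambda_\A$, so the output supports $\vx+\Lambda_\A$ and $\vx'+\Lambda_\A$ coincide and, critically, the normalizing constants $Z=\sum_{\vz\in\Lambda_\A}\exp(-\|\vz\|_2^2/(2\sigma^2))$ agree (translation by a lattice vector permutes $\Lambda_\A$). Writing $\vy=\vx+\vz$ for $\vz\in\Lambda_\A$, the privacy loss simplifies to
\[
L(\vz)=\log\frac{\Pr[M(\vx)=\vy]}{\Pr[M(\vx')=\vy]}=\frac{\|\vu\|_2^2-2\langle\vz,\vu\rangle}{2\sigma^2}.
\]
I would then invoke the standard partition-of-output argument: for any event $S$, split $S=S_1\cup S_2$ with $S_1=\{\vy\in S:L(\vy)\le\epsilon\|\vu\|_2\}$. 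On $S_1$ the multiplicative bound $\Pr[M(\vx)\in S_1]\le e^{\epsilon\|\vu\|_2}\Pr[M(\vx')\in S_1]$ holds by construction, and the remaining task is to show $\Pr[L(\vz)>\epsilon\|\vu\|_2]\le\delta(e^{\epsilon\|\vu\|_2}-1)/(e^\epsilon-1)$.

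For this tail, Cauchy--Schwarz gives $-\langle\vz,\vu\rangle\le\|\vz\|_2\|\vu\|_2$, so the event $\{L(\vz)>\epsilon\|\vu\|_2\}$ is contained in $\{\|\vz\|_2\ge\epsilon\sigma^2-\|\vu\|_2/2\}$, reducing the problem to a radial tail bound for the discrete Gaussian on $\Lambda_\A$. I would prove a Gaussian analogue of \cref{thm:lap_acc} by partitioning $\Lambda_\A$ into shells $\{n\le\|\vz\|_2<n+1\}$ and bounding each shell's cardinality via \cref{lem:doubling}, then summing against the Gaussian weight (lower-bounding $Z\ge1$ from the $\vzero$ term). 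This yields a tail of the form $\Pr[\|\vz\|_2\ge t]\le K't^{d-k}\exp(-t^2/(2\sigma^2))$ for sufficiently large $t$, with $K'$ on the order of the constant $K$ appearing in \cref{thm:lap_acc}. Choosing $\sigma^2=2c_\A\ln(1/\delta)/\epsilon^2$ with $c_\A=\mathcal{O}(\max\{(d-k)\ln(d-k),\ln K\})$ is exactly what is needed to absorb the polynomial pre-factor $t^{d-k}$ and match the slack $\delta(e^{\epsilon\|\vu\|_2}-1)/(e^\epsilon-1)$ for every feasible $\vu\in\Lambda_\A$ (with a short case split between small $\|\vu\|_2$, where the Gaussian tail does the work, and large $\|\vu\|_2$, where the exponential growth of the RHS makes the bound trivial).

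The main obstacle is precisely this radial tail step. In the continuous Gaussian mechanism, rotational invariance permits one to project $\vz$ onto $\vu$ and invoke a one-dimensional $N(0,\sigma^2\|\vu\|_2^2)$ tail, which gives a dimension-free bound. Here the lattice $\Lambda_\A$ is not spherically symmetric, has no one-dimensional sub-lattice aligned with an arbitrary direction $\vu$, and its theta function cannot be decomposed into a product of one-dimensional factors. Consequently one is forced to work with the full $(d-k)$-dimensional shell count, and the resulting $t^{d-k}$ factor is exactly what drives $c_\A$ to scale with the ambient dimension, as recorded in \cref{def:gaussian}. Verifying the lattice-point shell bound and the subsequent algebra relating $t$, $\sigma$, $\epsilon$, $\delta$ and $\|\vu\|_2$ is where the technical care is required.
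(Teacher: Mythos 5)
Your proposal follows essentially the same route as the paper's proof: invariance from the lattice noise, cancellation of normalizing constants for equivalent inputs, the cosine/Cauchy--Schwarz bound reducing the privacy loss to the radial event $\|\vz\|_2\ge \sigma^2\epsilon-\tfrac12\|\vu\|_2$, a sub-Gaussian tail for the lattice Gaussian proved by shell counting via \cref{lem:doubling} with the denominator lower-bounded by the $\vzero$ term (the paper's \cref{prop:gaussian}, using dyadic rather than unit shells), and finally the case split on $\|\vu\|_2$ plus the algebra fixing $c_\A=\mathcal{O}(\max\{(d-k)\ln(d-k),\ln K\})$. The only differences are cosmetic (a sign convention in the loss and the shell decomposition), so the plan is sound and matches the paper.
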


After the privacy guarantee, we turn to show the utility of the generalized Gaussian mechanism.  First, since the distribution function of the discrete Gaussian is an even function, the additive errors of the discrete Gaussian mechanism is unbiased~\cref{eq:gaussian_unbias}.  Furthermore,  similar to Gaussians on the Euclidean space, Gaussians on lattices are sub-Gaussian.\footnote{This is nontrivial, because the support of Gaussian in \cref{eq:gen-gaussian} is not uniform.  For instance, suppose $\Lambda$ is not a lattice space, but has $2^{r^2}$ points in each integer radius $r$ shell, the a Gaussian on such a space is not sub-Gaussian.}


\begin{theorem}[name = unbiasedness and accuracy]
Given the condition in \cref{thm:gaussian}, $M_{G,\A, \epsilon, \delta}$ in \cref{def:gaussian} is unbiased,
\begin{equation}\label{eq:gaussian_unbias}
    \E[M_{G,\A, \epsilon, \delta}(\vx)] = \vx \text{, for all }\vx\in \N^d.
\end{equation}
Additionally, there exists a constant $K>0$ defined in \cref{thm:lap_acc} so that for all $\vx\in \N^d$, and large enough $t>0$, 
\begin{equation}\label{eq:gaussian_tail}
    \Pr[\|M_{G,\A,\epsilon, \delta}(\vx)-\vx\|_2\ge t]\le Kt^{d-k}e^{-\frac{t^2}{2\sigma_{\epsilon, \delta}^2}}.
\end{equation}
\end{theorem}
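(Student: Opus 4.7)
I would split the statement into (i) unbiasedness, handled by a one-line symmetry argument, and (ii) the sub-Gaussian tail, handled by a dyadic-shell decomposition that feeds the lattice-counting Lemma~\ref{lem:doubling} into the Gaussian density. The two parts are almost independent, except that the tail bound is what certifies absolute summability for the mean.

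\textbf{Unbiasedness.} The density in \cref{eq:gen-gaussian} centered at $\vzero$ depends on $\vz$ only through $\|\vz\|_2$. Since $\Lambda_\A$ is an additive subgroup of $\Z^d$, it is closed under negation: $-\Lambda_\A=\Lambda_\A$. Hence the noise $\vz$ and $-\vz$ are identically distributed, which forces $\E[\vz]=\vzero$ and therefore \cref{eq:gaussian_unbias}, provided the expectation is well-defined. Absolute summability of $\sum_{\vz\in\Lambda_\A}\|\vz\|_2\,e^{-\|\vz\|_2^2/(2\sigma_{\epsilon,\delta}^2)}$ is an immediate consequence of the tail bound proved below (which is geometric-times-polynomial), so I would prove the tail bound first and back-fill this step.

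\textbf{Tail bound via dyadic shells.} Let $q(\vz)\propto\exp(-\|\vz\|_2^2/(2\sigma_{\epsilon,\delta}^2))$ on $\Lambda_\A$ with normalizing constant $Z$. Since $\vzero\in\Lambda_\A$, we have $Z\ge 1$, so
\[\Pr\bigl[\|\vz\|_2\ge t\bigr]\;\le\;S(t)\;:=\;\sum_{\vz\in\Lambda_\A,\ \|\vz\|_2\ge t}\exp\!\Bigl(-\tfrac{\|\vz\|_2^2}{2\sigma_{\epsilon,\delta}^2}\Bigr).\]
Decompose the region $\{\|\vz\|_2\ge t\}$ into dyadic shells $A_j=\{\vz\in\Lambda_\A:2^{j}t\le\|\vz\|_2<2^{j+1}t\}$, $j=0,1,2,\dots$. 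On $A_j$ every summand is at most $\exp\!\bigl(-(2^{j}t)^2/(2\sigma_{\epsilon,\delta}^2)\bigr)$, and \cref{lem:doubling} applied to the ball of radius $2^{j+1}t$ gives $|A_j|\le \frac{2V_{d-k}}{\sqrt{\det(\mC_\A^\top\mC_\A)}}(2^{j+1}t)^{d-k}$, valid for $t$ large enough. Combining,
\[S(t)\;\le\;\frac{2V_{d-k}}{\sqrt{\det(\mC_\A^\top\mC_\A)}}\sum_{j\ge 0}(2^{j+1}t)^{d-k}\exp\!\Bigl(-\tfrac{(2^{j}t)^2}{2\sigma_{\epsilon,\delta}^2}\Bigr).\]
The ratio of consecutive terms is $2^{d-k}\exp\!\bigl(-3(2^{j}t)^2/(2\sigma_{\epsilon,\delta}^2)\bigr)$, which is at most $1/2$ as soon as $t\gtrsim \sigma_{\epsilon,\delta}\sqrt{(d-k)\ln 2}$. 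For such $t$ the sum is dominated by twice its $j=0$ term, yielding
\[S(t)\;\le\;\frac{4\cdot 2^{d-k}\,V_{d-k}}{\sqrt{\det(\mC_\A^\top\mC_\A)}}\,t^{d-k}\,e^{-t^2/(2\sigma_{\epsilon,\delta}^2)}\;=\;K\,t^{d-k}\,e^{-t^2/(2\sigma_{\epsilon,\delta}^2)},\]
with exactly the constant $K$ from \cref{thm:lap_acc}; this is \cref{eq:gaussian_tail}.

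\textbf{Main obstacle.} The only subtle step is quantifying the phrase ``large enough $t$'' so that a single threshold simultaneously validates (i) the asymptotic counting hypothesis of \cref{lem:doubling} on every shell, and (ii) the $1/2$-contraction needed to sum the geometric series. Both reduce to a lower bound of order $\sigma_{\epsilon,\delta}\sqrt{d-k}$ (up to logarithmic factors), which is consistent with the scaling of $\sigma_{\epsilon,\delta}$ baked into \cref{def:gaussian}. Once this threshold is in place, the tail bound is cleanly polynomial-times-Gaussian, and absolute summability of $\|\vz\|_2 q(\vz)$ is immediate, closing the unbiasedness argument and hence the proof.
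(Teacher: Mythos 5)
Your proof is correct and follows essentially the same route as the paper's (Proposition~\ref{prop:gaussian} in Appendix~\ref{appendix:gaussian}): lower-bound the normalizer by the $\vzero$ term, decompose the tail into dyadic shells, apply Lemma~\ref{lem:doubling} to count lattice points in each shell, and sum the resulting geometric-type series for large $t$, with unbiasedness by symmetry of the density on the lattice. Your added care about absolute summability and the explicit threshold for ``large enough $t$'' are refinements the paper leaves implicit, but the argument is the same.
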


Last, we remark that previous work has used discrete Gaussian mechanisms~\citep{canonne2020discrete}for integer valued noises. But these mechanisms do not satisfy (non-trivial) invariants constraints. 
Note that when there is no invariant constraint, the resulting lattice space becomes an integer grid and our generalized Gaussian mechanism reduces to the discrete Gaussian mechanism.

\section{Implementation}\label{sec:implementation}






We discuss implementation issues of discrete Gaussian and generalized Laplace in lattice spaces. In general the design of exact samplers for complex privacy mechanisms (such as the Exponential mechanism, to which our generalized Laplace mechanism is a special case) is a widely acknowledged challenge in the literature. Section 3 of Abowd et al.~\citep{abowd2022topdown} specifically discussed this challenge in the context of invariants and integer constraints and concluded that direct sampling from the Exponential mechanism is ``infeasible'' for their TopDown algorithm. We address this problem in two ways. For certain discrete Gaussian, we can use exact samplers; for generalized Laplace we develop an efficient and practical solution using Markov chain Monte Carlo (MCMC). Our experiments all use the MCMC sampler for practicality.

{\bf Sampling discrete Gaussian on a lattice.}
There are several efficient algorithms to sample discrete variables within negligible statistical distance of any discrete Gaussian distribution whose scales are not exceedingly narrow~\citep{gentry2008trapdoors,peikert2010efficient}.  Moreover, \citep{DBLP:journals/corr/BrakerskiLPRS13} provide an exact Gaussian sampler.  
\begin{theorem}[name =  Lemma 2.3 in \citet{DBLP:journals/corr/BrakerskiLPRS13}] There is a probabilistic polynomial-time algorithm that takes as input a basis $\mB = (\vb_1, \dots, \vb_m)$ for a lattice $\Lambda(\mB)\subset \R^d$, a center $\vc\in \R^d$ and parameter $\sigma = \Omega(\max_i\|\vb_i\|_2\ln d)$ and
outputs a vector that is distributed exactly as \cref{eq:gen-gaussian}.
\end{theorem}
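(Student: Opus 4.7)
The plan is to follow the standard template for exact discrete Gaussian sampling on lattices: reduce the $m$-dimensional sampling problem to a sequence of one-dimensional discrete Gaussian samples along a Gram--Schmidt orthogonalization of the given basis, and then correct any distributional error via a single outer rejection-sampling step. Concretely, I would first compute the Gram--Schmidt vectors $\tilde{\vb}_1,\dots,\tilde{\vb}_m$ of $\mB$ in polynomial time; these span the same real subspace as $\mB$ and determine an upper-triangular change of coordinates.

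Next, I would invoke Klein's (GPV) iterative sampler as the proposal: sample coefficients $z_m,z_{m-1},\dots,z_1 \in \Z$ in reverse order, where at step $i$ the coefficient $z_i$ is drawn from a one-dimensional discrete Gaussian over $\Z$ with width $\sigma / \|\tilde{\vb}_i\|_2$ and a center determined by the residual projection of $\vc$ minus the already-chosen $\sum_{j>i} z_j \vb_j$ onto $\tilde{\vb}_i$. The one-dimensional building block can be realized \emph{exactly} (not just statistically close) by the classical rejection scheme that samples a continuous Gaussian, rounds, and accepts with the appropriate ratio; this takes expected constant time per trial. Setting $\vz = \sum_i z_i \vb_i$ then gives a lattice point whose distribution is a known product of conditional one-dimensional discrete Gaussians.

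The remaining step, and the main obstacle, is correcting Klein's proposal to match \emph{exactly} the target density in \cref{eq:gen-gaussian}. Because Klein's procedure samples with per-direction widths rescaled by $\|\tilde{\vb}_i\|_2$, the resulting probability of a lattice point $\vz$ differs from the target by a product of normalizing constants (one per Gram--Schmidt direction). I would add an outer rejection step that accepts $\vz$ with probability proportional to the density ratio of the target to the proposal; the density ratio can be evaluated exactly given $\sigma$ and the $\tilde{\vb}_i$. The crux is to bound the expected number of rejection trials by a polynomial in $d$, which is where the parameter hypothesis $\sigma = \Omega(\max_i \|\vb_i\|_2 \ln d) \ge \Omega(\max_i \|\tilde{\vb}_i\|_2 \ln d)$ enters: this lower bound exceeds the smoothing parameter $\eta_\eps(\Lambda(\mB))$ for $\eps = 1/\mathrm{poly}(d)$, so by Micciancio--Regev-style smoothing lemmas the one-dimensional normalizers are all within a $(1\pm 1/\mathrm{poly}(d))$ factor of each other, and hence the acceptance probability is $1 - o(1)$.

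Putting the three pieces together yields a polynomial-time probabilistic algorithm whose output is distributed \emph{exactly} according to \cref{eq:gen-gaussian}. The argument is modular in the sense that any exact one-dimensional discrete Gaussian sampler can be plugged in, and the rank $m$ and ambient dimension $d$ enter only polynomially because (i) Gram--Schmidt is polynomial, (ii) each 1D sample is expected-constant time, and (iii) the outer rejection accepts with probability bounded below by a constant under the stated $\sigma$ hypothesis. I expect the delicate part to be the quantitative smoothing-parameter bound that translates $\sigma \ge C\max_i \|\vb_i\|_2 \ln d$ (for an absolute constant $C$) into an $O(1)$ bound on the outer rejection probability; this requires carefully tracking the discrete Gaussian mass relative to the continuous Gaussian mass over the dual lattice.
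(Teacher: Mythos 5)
First, a point of scope: the paper does not prove this statement at all --- it is imported verbatim as Lemma~2.3 of \citet{DBLP:journals/corr/BrakerskiLPRS13} and used as a black box in the Implementation section, so there is no in-paper proof to compare yours against. Your outline (Gram--Schmidt orthogonalization, a Klein/GPV randomized nearest-plane proposal with per-direction widths $\sigma/\|\tilde{\vb}_i\|_2$, and a correction step, with the hypothesis $\sigma=\Omega(\max_i\|\vb_i\|_2\ln d)$ ensuring the per-coordinate widths sit well above the smoothing parameter of $\Z$ so that the path-dependent one-dimensional normalizers agree up to a $1\pm o(1)$ factor) is the standard route to this result and is consistent with how the cited lemma is obtained, so the strategy is sound.

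That said, the exactness bookkeeping --- which is the entire content of the lemma, since a statistically-close sampler is already classical --- has a genuine gap as written. The acceptance ratio of your outer rejection step is a product of path-dependent normalizers of the form $\sum_{z\in\Z}\exp\bigl(-\|\tilde{\vb}_i\|_2^2 (z-c_i)^2/(2\sigma^2)\bigr)$; these are theta-function values and cannot be ``evaluated exactly'' in finite time, only approximated to any desired precision. Likewise, your one-dimensional building block (``sample a continuous Gaussian, round, and accept'') presupposes exact real arithmetic and an exact continuous Gaussian draw, neither of which a finite machine provides. To get an output distributed \emph{exactly} as \cref{eq:gen-gaussian}, you need to replace these steps by devices that preserve exactness: for instance, an integer-valued exact 1D discrete Gaussian sampler (e.g., rejection from a geometric/discrete-Laplace envelope) together with lazily refined comparisons for the outer rejection, in which the uniform variate is drawn bit by bit and the acceptance probability is approximated until the comparison is resolved --- this terminates with probability one and, under your $\sigma$ hypothesis, in expected polynomial time. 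Without some such mechanism, your argument establishes the GPV-style statement (output negligibly close in total variation) but not the exact-sampling claim being made here.
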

We note that, however,  it is believed to be computationally difficult to sample efficiently on general lattice spaces if the target distribution is not sufficiently dispersed. 
Most lattice-based cryptographic schemes~\citep{follath2014gaussian} are based on the hardness assumption of solving the short integer solution (SIS) problem and the learning with errors (LWE) problem~\citep{DBLP:journals/corr/BrakerskiLPRS13}, and the ability of an efficient sampler of a narrow distribution on general lattice efficiently would enable us to solve short integer solution and break lattice-based cryptographic schemes. 

{\bf MCMC sampling for generalized Laplace on a lattice.}
Since a discrete Gaussian mechanism only provides $(\eps, \delta)$-DP, it is important to develop pure-DP mechanisms using the generalized Laplace mechanism on a lattice. 
We devise an MCMC sampling scheme to instantiate the generalized Laplace mechanism (\cref{def:lap}).  
As a practical solution, MCMC has been theoretically studied in the literature; e.g.~\citet{Wang2015-dz} for stochastic gradient Monte Carlo and~\citet{Ganesh_undated-ty} for discretized Langevin MCMC.
The challenge to designing such a sampling scheme is twofold. First, the invariants render the target state space highly constrained. To achieve sampling efficiency requires efficient proposals, which ideally satisfy the invariants themselves in order to maintain a reasonable acceptance rate \cite[c.f.][]{gong2020congenial}. Second, MCMC algorithms are generally not exact samplers, and may fail to converge to the target distribution in finite time, thus incur additional cost to privacy that can be difficult to quantify for a given application. To this end, we 1) target only the additive noise component of the mechanism, ensuring that any privacy leakage due to sampling is independent of the underlying confidential data; and 2) devise an empirical convergence assessment for the upper bound on the total variation (TV) distance between the chain's marginal distribution the target, based on the $L$-lag coupling method proposed by~\citep{biswas2019estimating}.

{\it Proposal design.} We wish to sample the additive noise $\vz$ employed by the generalized Laplace mechanism. 
For an incidence matrix $\mA\in \Z^{k\times d}$ specifying the invariants, recall its Smith normal form $\mD = \mU\mA\mV$  in~\cref{eq:smith-normal-form}. Consider a jumping distribution with probability mass function $g\left(\vu\right)=g_{\mA}\left(\mV^{-1}\vu\right),$ where
\begin{equation}\label{eq:proposal}
    g_{\mA}\left(\ve \right)=\prod_{j=1}^{k}{\bf 1}\left\{ e_{j}=0\right\} \prod_{j=k+1}^{d}\eta_{j}\left(e_{j}\right),
\end{equation}
for $\ve = \left(e_{1},\ldots,e_{d}\right)$ a \emph{pre-jump} object
whose first $k$ entries are exactly zero, and $\vu =  \mV\ve$ the proposed jump.
The $\eta_{j}$'s in~\cref{eq:proposal} can be set to any interger-valued univariate probabilities that are unbiased and symmetric around zero. A straightforward choice is the double geometric distribution, with mass function
$\eta_j\left(e\right)=\frac{1-a}{1+a}a^{\left\Vert e\right\Vert_{1}}$ and parameter $a$.
Note that the jumping distribution $g$ connects to $g_\mA$ in~\cref{eq:proposal} without a Jacobian because $\vu$ and $\ve$ are integer-valued and $\mV$ is invertible. The proposed jump $\vu \sim g$ is unbiased and symmetric, i.e. $\mathbb{E}_{g}(\vu) = {\bf 0}$ and $g(\vu) = g(-\vu)$, and always respects the desired invariants specified by $\mA$: 
\[
\mA\vu=\mU^{-1}\mU\mA\mV\ve=\mU^{-1}\mD\ve=\mU^{-1}{\bf 0}={\bf 0}.
\]

{\it A Metropolis sampling scheme.}
 The target distribution $q_{\epsilon}\left(\vz\right)$ is given in~\cref{eq:gen-laplace} and is known only up to a normalizing constant. ~\cref{alg:metropolis-gibbs} in ~\cref{appendix:mcmc} presents a Gibbs-within-Metropolis sampler that produces a sequences of dependent draws  $\left(\vz^{\left(l\right)}\right)_{0\le l\le \text{nsim}}$ from the target distribution $q_{\epsilon}$ in~\cref{eq:gen-laplace} known only up to a normalizing constant. We use an additive jumping distribution whose element-wise construction is described in~\cref{eq:proposal}.  
The algorithm incurs a transition kernel that dictates how the chain moves from an existing state to the next one: $\vz^{\left(l\right)} \sim K\left(\vz^{\left(l-1\right)},\cdot\right)$. The initial distribution $\pi_0$ may be chosen simply as $g$ for convenience. The choice of $\eta_j$ is a tuning decision for the algorithm,
and should be made to encourage fast mixing of the chain. We discuss our choices for examples in the Experiments section. Note that steps 5 through 7 of ~\cref{alg:metropolis-gibbs} updates the proposed jump in a Gibbs sweep (i.e. one dimension at a time), utilizing the fact that the pre-jump object $\ve$ is element-wise independent under $g_\mA$. Doing so facilitates the $L$-lag coupling, to be discussed next, for assessing empirical convergence of the chain. In practice, these updates may be performed simultaneously.

{\it Empirical convergence assessment using $L$-lag coupling.}
We perform empirical convergence assessment of the proposed algorithm using $L$-lag coupling.
Construct a joint transition kernel $\tilde{K}$ of two Markov chains, each having the same target distribution $q_\epsilon$ induced by the marginal transition kernel $K$ as defined in ~\cref{alg:metropolis-gibbs}. 
The joint kernel $\tilde{K}$, given by ~\cref{alg:coupling-gibbs} in ~\cref{appendix:mcmc}, is a maximal coupling kernel such that the $L$-th lag of the two chains will couple in finite time with probability one. The random $L$-lag meeting time, which ~\cref{alg:llag-alt} in ~\cref{appendix:mcmc} samples, provides an estimate of the upper bound on the TV distance, $$d_{TV}(q_{\epsilon}^{(l)},q_{\epsilon})\le\mathbb{E}\left[\max\left(0,\left\lceil \left(\tau^{(L)}-L-l\right)/L\right\rceil \right)\right]$$
between  the target distribution  $q_{\epsilon}$ and the marginal distribution of the chain at time $l$, denoted $q_{\epsilon}^{(l)}$ \citep[Theorem 2.5]{biswas2019estimating}. The upper bounds are obtained as empirical averages over independent runs of coupled Markov chains. The lag $L>0$ would need to be set, and we discuss its choice in the next section and ~\cref{appendix:mcmc}. 
Compared to earlier work which uses MCMC to instantiate privacy mechanisms, our use of $L$-lag coupling provides real-time (rather than asymptotic) assessment on the MCMC convergence behavior. The number of iterations needed to ensure convergence is empirically assessed via the upper bound on the total variation (note that this is still an estimate). In general, there is a tradeoff between the number of iterations and an extra privacy loss budget $\delta'$, in the sense that Eq. \cref{eq:gen-laplace} can be absorbed as another additive error in the DP guarantee.

\section{Experiments}\label{sec:experiments}

\begin{figure}
    \centering
    \includegraphics[width=0.5\textwidth]{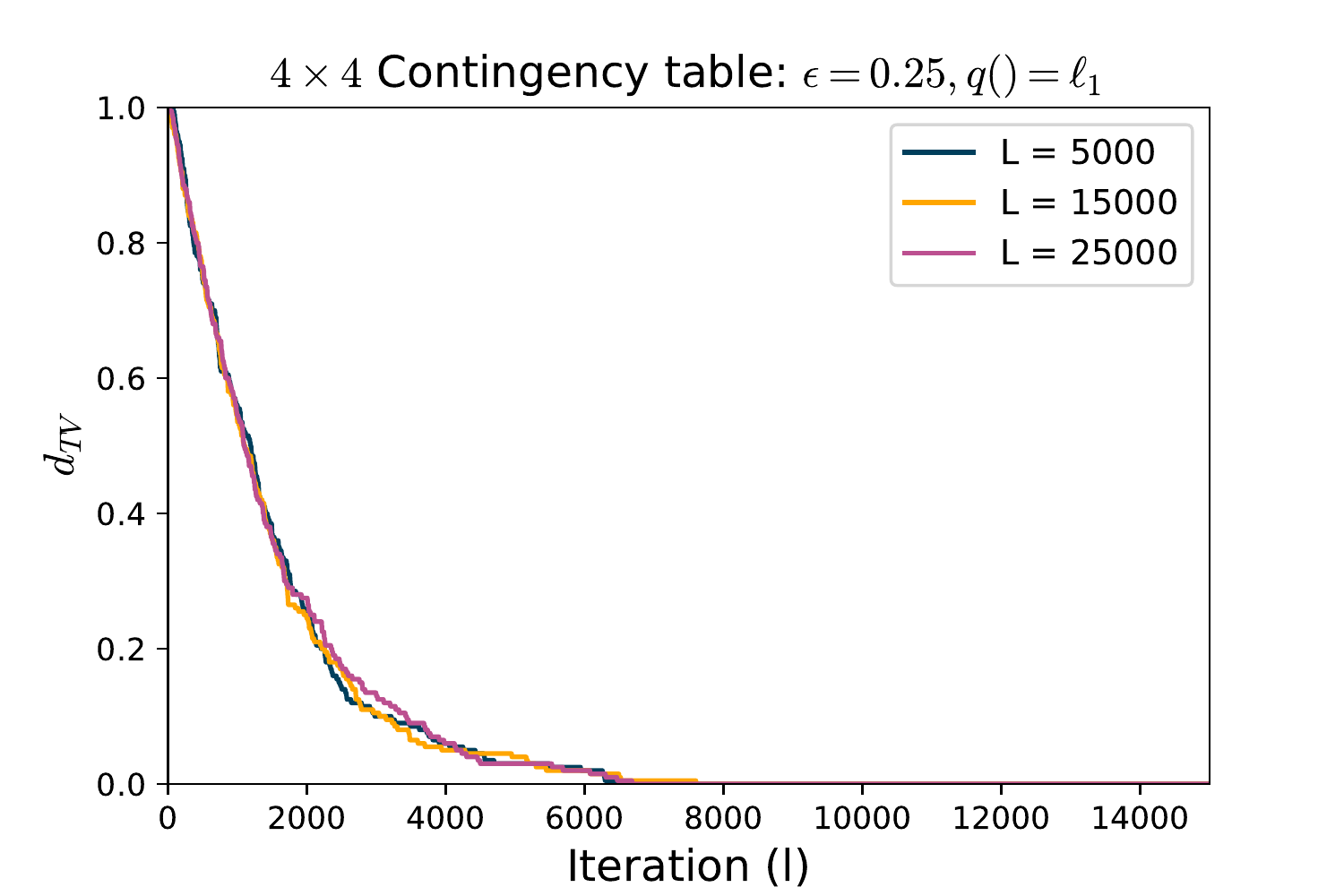}
    \captionof{figure}{Estimated TV bound from target\label{fig:children-coupling-l1}}
\end{figure}

\begin{table}[]
    \centering
    \begin{small}
\begin{tabular}{cccccc}
     \toprule
     County & Low & Medium & High & Very High & Total \\
     \midrule
     Alpha & -1 & -1 & 0 & 2 & 0 \\
     Beta & 5 & 4 & -2 & -7 & 0 \\
     Gamma & -5 &-3 & 5 & 3 & 0 \\
     Delta & 1 & 0 & -3 & 2 & 0 \\
     \midrule
     Total & 0 & 0 & 0 & 0 & 0 \\
     \bottomrule
\end{tabular}
    \caption{Generalized Laplace additive noise  ($\epsilon=0.25$, $\ell_1$-norm target distribution)  using ~\cref{alg:metropolis-gibbs}, which preserves row and column margins.}
    \label{table:children-noise}
    \vspace*{-4mm}
    \end{small}
\end{table}

\begin{figure*}[t]
    \centering
    \includegraphics[width=.95\textwidth]{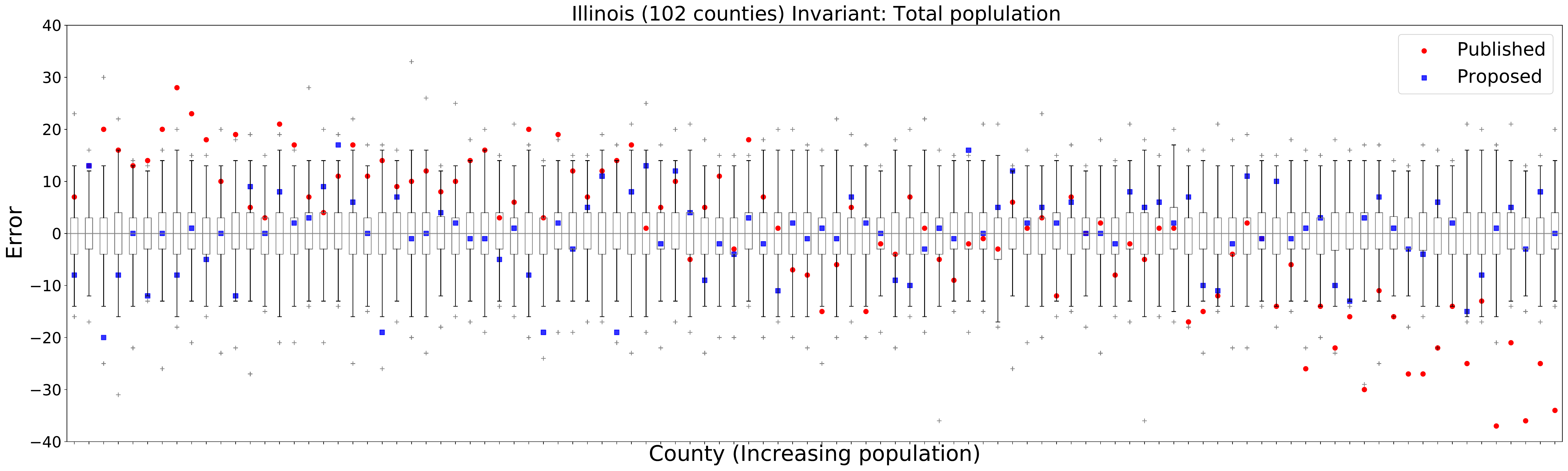}
    \caption{Privacy noise from the generalized Laplace mechanism (~\cref{def:lap}) via ~\cref{alg:metropolis-gibbs} (blue squares: one instance; boxplot: 1000 instances) for county populations of Illinois in increasing county sizes. State population total is invariant. The proposed noises are integer-valued and unbiased.
    For comparison are DAS errors from the Nov 2020 vintage 2010 Census demonstration data \citep{ipums2020das} (red dots). 
    }
    \label{fig:illinois}
    \vspace*{-4mm}
\end{figure*}

{\bf Intersecting counting constraints.} Consider a synthetic example in which a set of three counting constraints $\mathcal{A} = (A_1, A_2, A_3)$ are defined over 14 records, where the intersection of all subsets of constrains are nonempty. The constraints are schematically depicted by ~\cref{fig:circles-constraint} in ~\cref{appendix:experiments}, and may be encoded by an incident matrix $\mA \in \Z^{3\times 14}$, with each row corresponding to one of $A_1, A_2$ and $A_3$, and each element being $1$ at indices corresponding to the record within that constraint and $0$ otherwise.  We apply ~\cref{alg:metropolis-gibbs} to instantiate the generalized Laplace mechanism, with both $\ell_1$-norm and $\ell_2$-norm targets, to privatize a data product that conforms to the constraint $\mA$.   To ensure adequate dispersion of the target distribution, we set $\epsilon = 0.25$, a value on the smaller end within the range of meaningful privacy protection \cite[e.g.][]{dwork2011firm}.
The pre-jump proposal distributions $\eta_j$ are double geometric distributions, with parameter $a=\exp{(-1)}$ for the $\ell_1$-norm target and $a=\exp{(-1.5)}$ for the $\ell_2$-norm target.  Details on the tuning, the noise distribution and convergence assessment can be found in ~\cref{appendix:experiments}. In particular, ~\cref{fig:circles-coupling} in ~\cref{appendix:experiments}  shows that the chains empirically converge at around $10^5$ iterations for the $\ell_1$-norm target and $10^6$ iterations for the $\ell_2$-norm target. 


\noindent{\bf Delinquent children by county and household head education level.} The Federal Committee on Statistical Methodology published a fictitious dataset concerning delinquent children in the form of a $4\times 4$ contingency table, tabulated across four counties by education level of household head \cite[Table 4 in ][reproduced in ~\cref{table:children} of ~\cref{appendix:experiments}]{fcsm2005sdl}, to illustrate various traditional SDL techniques \citep{slavkovic2010synthetic}.


The charge is to extend privacy protection to the sensitive individual records while preserving the margins of the contingency table for data publication. To do so, we apply ~\cref{alg:metropolis-gibbs} to instantiate the generalized Laplace mechanism with both $\ell_1$-  and $\ell_2$-norm targets and with $\epsilon = 0.25$. 
~\cref{fig:children-coupling-l1} shows the evolution of the TV upper bound on the chain's marginal distributions to the $\ell_1$-norm target, estimated with 200 independent coupled chains, which appear to converge after about $10^4$ iterations and are stable at various choices of $L$. ~\cref{table:children-noise} shows one instance of the proposed additive noise, obtained after the chain achieves empirical convergence. They are integer-valued, with zero row and column totals which would preserve the margins of the confidential table. Convergence assessment for the $\ell_2$-norm target and discussions on the choice of proposal are in ~\cref{appendix:experiments}.

{\bf 2010 U.S. Census county-level population data.} We consider the publication of county-level population counts subject to the invariant of state population size, 
and compare with the privacy-protected demonstration files produced by preliminary versions of the 2020 Census DAS. The confidential values are the 2010 Census Summary Files (CSF), curated by IPUMS NHGIS and are publicly available \citep{ipums2020das}. Employed for this example are the November 2020 vintage demonstration data, protected by pure differential privacy with $\epsilon = 0.192 = 4 \text{ (total)} \times 0.16 \text{ (county level)} \times 0.3 \text{ (population query)}$.

We demonstrate ~\cref{alg:coupling-gibbs} using the generalized Laplace mechanism under $\ell_1$ norm, with $\epsilon$ set to accord to the Census Bureau's specification and $a=\exp(-2.5)$. ~\cref{fig:illinois} showcases the proposed county-level errors
applied to the population of Illinois. The $x$-axis is arranged in increasing true county population sizes.  Blue squares shows an instance of the proposed noise vector. The boxplots summarize $1000$ proposed noise vectors (thinned at $0.01\%$), with whiskers indicating $1.5$ times the interquartile range and cross-marks indicating extreme realizations.  The proposed errors are integer-valued and centered around zero, confirming their marginal unbiasedness. Note that the published DAS privacy errors (red dots) show a clear negative bias as a function of increasing underlying county population size. The bias is likely due to optimization-based post-processing during the estimation phase which imposes non-negativity on the constituent geographic areas with small population counts. Similar observations were made for other states (\cref{appendix:experiments}).

\section{Discussion and Future Work}\label{sec:discussion}

This paper provides solutions for sanitizing private data that are required to simultaneously observe pre-specified invariants and integral characteristics. The proposed \emph{integer subspace differential privacy} scheme allows for rigorous statements of privacy guarantees while maintaining good statistical properties, including unbiasedness, accuracy, and probabilistic transparency of the privatized output. An efficient MCMC scheme is devised to instantiate the proposed mechanism, alongside tools to assess empirical convergence.

One direction for future work is to design exact sampling algorithms for the proposed generalized discrete Laplace and Gaussian mechanisms. 
In the case that the normalizing constant to the target distribution $q_{\epsilon}$ is known exactly, it may be feasible to design, for example, efficient rejection sampling based on intelligent choices of proposal distributions. It is also conceivable that a perfect sampling scheme, such as \emph{coupling from the past} \citep{propp1996exact}, may be designed. The recent work of \citet{seeman2021exact} proposes an exact sampling scheme for Exponential mechanisms using \emph{artificial atoms}. However, the technique requires the state space be compact, which is not the case for the constrained yet unbounded target distribution $q_{\epsilon}$ considered in this work.
Another direction for future work is to extend invariant-respecting privacy protection data products that are required to be \emph{binary} (i.e., exhibiting or not exhibiting a private attribute), as well as to obey \emph{inequality} constraints (e.g. non-negativity of count data). The challenge there is again on the design of efficient methods to generate noise vectors in an extremely constrained discrete space.


\section{Acknowledgements} 
Prathamesh Dharangutte and Jie Gao would like to acknowledge support from NSF through grants CCF-2118953, IIS-2207440, CCF-2208663 and CNS-2137245. Ruobin Gong would like to acknowledge support from NSF through grant DMS-1916002.


\bibliographystyle{apalike}
\bibliography{main_arxiv}
\clearpage

\newpage

\appendix

\section{Proofs of Composition and Post-Processing}\label{appendix:composition-post-processing}

\begin{proof}[Proof of \cref{prop:composition}]

Let $c_1 = \frac{e^{\epsilon_1\|\vx-\vx'\|}-1}{e^{\epsilon_1}-1}\delta_1$, $c_2 = \frac{e^{\epsilon_2\|\vx-\vx'\|}-1}{e^{\epsilon_2}-1}\delta_2$ and $x \wedge y:= \min(x,y)$. For any fixed output $(y_1, y_2)$ we have
\begin{equation*}
    \begin{split}
        &\Pr[M_{1,2}(\vx) = (y_1, y_2)] \\
        =& \Pr[M_1(\vx) = y_1] \Pr[M_2(\vx) = y_2 | M_1(\vx) = y_1] \\
\leq & \Pr[M_1(\vx) = y_1]\\
& \hspace{20pt} \cdot(e^{\epsilon_2\|\vx-\vx'\|}\Pr[M_2(\vx') = y_2 | M_1(\vx) = y_1] \wedge 1 + c_2) \\
\leq & \big( e^{\epsilon_2\|\vx-\vx'\|}\Pr[M_2(\vx') = y_2 | M_1(\vx) = y_1] \wedge 1 \big)  \\ 
& \hspace{20pt} \cdot(e^{\epsilon_1\|\vx-\vx'\|}\Pr[M_1(\vx') = y_1] + c_1) + c_2 \\
\leq & e^{(\epsilon_1+\epsilon_2)\|\vx-\vx'\|} \Pr[M_{1,2}(\vx') = (y_1,y_2)] + (c_1 + c_2)
    \end{split}
\end{equation*}
which completes the proof.
\end{proof}

\begin{proof}[Proof of \cref{prop:post-processing}]
For $M: \mathcal{X} \to \mathcal{Y}$ which is $(\epsilon,\delta)$-differentially private on $\equiv$, given a output space $\mathcal{Z}$, and an stochastic mapping $F$ from $\mathcal{Y}$ to $\mathcal{Z}$,  for any event $E\subset \mathcal{Z}$, we have
\begin{equation*}
    \begin{split}
        & \Pr[F(M(\vx))\in E] \\
        =& \E_{f\sim F}[\Pr[f(M(\vx))\in E]] \\
        =& \E_{f\sim F}[\Pr[M(\vx)\in f^{-1}(E)]] \\
        \leq &\E_{f\sim F}\Big[ e^{\epsilon \|\vx-\vx'\|} \Pr[M(\vx') \in f^{-1}(E)] + \frac{e^{\epsilon\|\vx-\vx'\|}-1}{e^{\epsilon}-1}\delta \Big] \\
        = & e^{\epsilon \|\vx-\vx'\|} \Pr[F(M(\vx'))\in E] +\frac{e^{\epsilon\|\vx-\vx'\|}-1}{e^{\epsilon}-1}\delta
    \end{split}
\end{equation*}
where the second last inequality follows form \cref{def:partitional_dp}.
\end{proof}

\section{Proofs for Laplace mechanisms}\label{appendix:lattice}

\begin{proof}[Proof of \cref{thm:lap}]
By definition, the mechanism is $\A$-invariant. 
Now we show the mechanism is $\epsilon$-$\A$-induced integer subspace differentially private on $\A$. For all $\vx\equiv_\A \vx'$ and a feasible outcome $\vy$ so that $\vy\equiv_\A\vx$, by \cref{prop:feasible} $\vy\in \vx+\Lambda_\A = \vx'+\Lambda_\A$.  The privacy loss is 
\begin{align*}
    \ln \frac{\Pr[M(\vx) = \vy]}{\Pr[M(\vx') = \vy]} =& \ln \frac{q_\epsilon(\|\vy-\vx\|)}{q_\epsilon(\|\vy-\vx'\|)}\\
    & \cdot \ln \frac{\sum_{\vw\in \vx+\Lambda_\A}q_\epsilon(\|\vw-\vx\|)}{\sum_{\vw\in\vx'+\Lambda_\A}q_\epsilon(\|\vw-\vx'\|)}
\end{align*}
For the first term, $$\ln \frac{q_\epsilon(\|\vy-\vx\|)}{q_\epsilon(\|\vw-\vx'\|)} = -\epsilon\left(\|\vy-\vx\|-\|\vy-\vx'\|\right)\le \epsilon \|\vx-\vx'\|,$$ by the triangle inequality of $\|\cdot\|$.  For the second term, $\vw$ is feasible if and only if $\vw+\vx-\vx'$ is still feasible because $\vx\equiv_\A \vx'$.  Therefore
\begin{align*}
\sum_{\vw\in \vx+\Lambda_\A}q_\epsilon(\|\vw-\vx\|) &= \sum_{\vw+\vx-\vx'\in \vx+\Lambda_\A}q_\epsilon(\|\vw+\vx-\vx'-\vx\|)\\
    & = \sum_{\vw \in \vx'+\Lambda_\A}q_\epsilon(\|\vw-\vx'\|)
\end{align*}
and the second term is zero.  Combining these two we have
$$\ln \frac{\Pr[M(\vx) = \vy]}{\Pr[M(\vx') = \vy]} \le \epsilon \|\vx-\vx'\|$$
which completes the proof.
\end{proof}

Now we show the unbiasedness and accuracy guarantee in \cref{thm:lap_acc}.  To control the error's tail bound, we first show that lattice spaces are similar to Euclidean spaces which are ``regular'' and have bounded dimensions.   We use the idea of doubling dimension that bounds the grow rate of the number of lattice points in spheres as the radius increases in~\cref{lem:doubling}  Formally, given a positive definite matrix $\mQ\in \R^{l\times l}$ and $s>0$, we define the ellipsoid as a set $$E_\mQ(s) = \{\vv\in \R^l:\vv^\top \mQ \vv\le s^2\}.$$
We use $\vol (E_\mQ(s))$ as the Lebesgue measure of $E_\mQ(s)$ and $\vol_\Z(E_\mQ(s)) := |\Z^l\cap E_\mQ(s)|$ as the number of points in $E_\mQ(s)$ with integer coordinates.  There is a long line of work dating back to \citep{landau1915analytischen,tsuji1953lattice,bentkus1999lattice} showing that the number of lattice points in $E_\mQ(s)$ is approximated by the volume of $E_\mQ(s)$, when $s$ is large enough.

\begin{theorem}\citep{landau1915analytischen}
For all $\ell\ge 2$ and $\mQ$, when $s>0$ is large enough, we have
$$\frac{|\vol (E_\mQ(s))-\vol_\Z(E_\mQ(s))|}{\vol (E_\mQ(s))} = \mathcal{O}\left(s^{-2+\frac{2}{1+\ell}}
\right)$$
\end{theorem}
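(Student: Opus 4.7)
The plan is to follow the classical analytic number theory approach developed by Landau for counting lattice points in convex bodies, and since the result is already attributed, I would mainly outline the key geometric and analytic ingredients rather than push through the sharp constants.

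First I would handle the easy baseline bound $\mathcal{O}(s^{-1})$ by a direct tiling argument. Associate to each $\vv \in \Z^\ell$ the unit cube $C_\vv = \vv + [-1/2,1/2]^\ell$, so that $\{C_\vv\}_{\vv \in \Z^\ell}$ tiles $\R^\ell$ and each cube has unit Lebesgue measure. Let $\rho$ be the diameter of the unit cube (so $\rho = \sqrt{\ell}/2$). Then every $\vv \in \Z^\ell \cap E_\mQ(s)$ has its cube $C_\vv$ contained in some dilated ellipsoid $E_\mQ(s+\rho')$ (where $\rho'$ depends on $\mQ$ via its smallest eigenvalue), and conversely every cube fully contained in $E_\mQ(s-\rho')$ contributes a lattice point inside $E_\mQ(s)$. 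Therefore
\begin{equation*}
\vol(E_\mQ(s-\rho')) \;\le\; \vol_\Z(E_\mQ(s)) \;\le\; \vol(E_\mQ(s+\rho')),
\end{equation*}
and subtracting $\vol(E_\mQ(s))$ reduces the problem to estimating the volume of a thin annular shell around $\partial E_\mQ(s)$, which is $\mathcal{O}(s^{\ell-1})$. Dividing by $\vol(E_\mQ(s)) = \Theta(s^\ell)$ yields the crude relative error $\mathcal{O}(s^{-1})$.

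To obtain the sharper Landau exponent $-2 + 2/(1+\ell)$, I would invoke Poisson summation. Writing $\chi_s$ for the indicator of $E_\mQ(s)$ (smoothed if needed), Poisson summation gives
\begin{equation*}
\sum_{\vv \in \Z^\ell}\chi_s(\vv) \;=\; \sum_{\vn \in \Z^\ell}\widehat{\chi}_s(\vn),
\end{equation*}
where the $\vn = \vzero$ term yields $\vol(E_\mQ(s))$ and the remaining terms encode the discrepancy. A diagonalization by $\mQ^{1/2}$ reduces $\widehat{\chi}_s$ to the Fourier transform of the indicator of a Euclidean ball, whose asymptotics are governed by Bessel functions and exhibit the decay $|\widehat{\chi}_s(\xi)| \lesssim s^{(\ell-1)/2} |\xi|^{-(\ell+1)/2}$. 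Summing these contributions over $\vn \ne \vzero$ and balancing with a smoothing parameter produces the target rate; a small mollification of $\chi_s$ is needed to make this sum absolutely convergent and is standard.

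The main obstacle will be the analytic sum over non-zero frequencies: naive summation of the Bessel-type decay barely yields the trivial bound, and obtaining Landau's exponent $-2+2/(1+\ell)$ requires trading off the width of the mollifier against the decay, balancing main-term and tail contributions, and using that the shell thickness can effectively be taken of size $s^{-1+2/(1+\ell)}$. Because this is a classical and delicate estimate (and is used here only as a black box via the subsequent \cref{lem:doubling}), in the paper itself I would simply cite \citep{landau1915analytischen} after sketching the tiling reduction, and defer the Fourier/Bessel analysis to the references.
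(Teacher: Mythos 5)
The paper does not prove this statement at all: it is quoted verbatim as a classical black-box result with the citation to Landau (1915), exactly as you ultimately propose to do. Your sketch — the unit-cube tiling giving the crude $\mathcal{O}(s^{-1})$ relative error, then Poisson summation with a mollified indicator, Bessel-type decay $s^{(\ell-1)/2}|\xi|^{-(\ell+1)/2}$, and balancing the smoothing width at $s^{-1+2/(1+\ell)}$ to reach the exponent $-2+\frac{2}{1+\ell}$ — is a correct outline of the standard derivation and is fully consistent with how the paper uses the result.
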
 
Additionally, let the volume of a unit $l$-dimensional sphere be $V_l = \vol(\mathbb{I}) = \frac{\pi^{l/2}}{\Gamma(l/2+1)}$ where $\Gamma$ is the gamma function.  The volume of a general ellipsoid $E_\mQ(s)$ of dimension $l$ is $V_l\left(\frac{s^2}{\det(\mQ)}\right)^{l/2}$ where $\det(\mQ)$ is the determinant of $\mQ$.  Therefore, we have for all positive definite $\mQ$ and $\rho$ there exists $s_\rho$ so that for all $s\ge s_\rho$
\begin{equation}\label{eq:dimension}
    \frac{1}{\rho} s^{l}\le \frac{\vol_\Z(E_\mQ(s))}{\frac{V_l}{\sqrt{\det(\mQ)}}}\le \rho s^{l}.
\end{equation}
Informally, as the radius of the ellipsoid doubles, the number of integer points in the ellipsoid increases by a factor $2^l$, where $l$ is the dimension.  Now we are ready to prove \cref{lem:doubling} and \cref{thm:lap_acc}.

\begin{proof}[Proof of \cref{lem:doubling}]
Observe that 
\begin{align*}
    |\{\vz\in \Lambda(\mB): \|\vz\|\le r\}| =&|\{\vv\in \Z^{m}: \|\mB\vv\|\le r\}| \\
    =&|\{\vv\in \Z^{m}: \vv^\top (\mB^\top\mB)\vv\le r^2\}|\\
    =&\vol_\Z\left(E_{\mB^\top\mB}(r)\right)
\end{align*}
Because $\mB^\top\mB$ is full rank, by \cref{eq:dimension}, there exists $r^*>0$ so that for all $r\ge r^*$, 
$$ |B_\A(r)|\le \frac{2V_{m}}{\sqrt{\det(\mB^\top\mB)}}r^{m},$$
which completes the proof.
\end{proof}

\begin{proof}[Proof of \cref{thm:lap_acc}]

Let $\rvz = M_{Lap,\A,\epsilon}(\vx)-\vx$ in \cref{def:lap} be a random vector.

First it is easy to show that the noise of the generalized Laplace mechanism is unbiased, because the density $\exp\left(-\epsilon\|\vz\|\right)\mathbf{1}[\vz\in \Lambda_\A]$ is an even function, for all $\vz\in \Z^d$ $$\exp\left(-\epsilon\|-\vz\|\right)\mathbf{1}[-\vz\in \Lambda_\A] = \exp\left(-\epsilon\|\vz\|\right)\mathbf{1}[\vz\in \Lambda_\A].$$

Given $\Lambda_\A \subseteq \R^d$, let $D_\A(r) = \{\vz\in \Lambda_\A: \|\vz\|\le r\}$.  The tail probability of $\vz$ is
$$\Pr[\|\vz\|\ge t] = \frac{\sum_{\vz\in \Lambda_\A\setminus D_\A(t)} \exp(-\epsilon \|\vz\|)}{\sum_{\vz\in \Lambda_\A} \exp(-\epsilon \|\vz\|)}.$$
We can upper bound it by lower bounding the denominator and upper bounding the numerator. 

For the denominator, 
$$\sum_{\vz\in \Lambda_\A} \exp(-\epsilon \|\vz\|)\ge \exp(-\epsilon \|\mathbf{0}\|) = 1.$$
For the numerator, by \cref{lem:doubling}, because $\mC_\A$ has full rank, there exists a constant $L= \frac{2V_{d-k}}{\sqrt{\det(\mC_\A^\top\mC_\A)}}$, 
\begin{equation}\label{eq:doubling2}
    |D_\A(r)|\le L r^{d-k}, \text{ for all large enough } r>0.
\end{equation}
Hence, if $t$ is large enough, we have
\begin{align*}
    &\sum_{\vz\in \Lambda_\A\setminus D_\A(t)} \exp(-\epsilon \|\vz\|)\\
    = & \sum_{l=0}^\infty \sum_{\vz\in D_\A(2^{l+1}t)\setminus D_\A(2^l t)} \exp(-\epsilon \|\vz\|)\\
    \le& \sum_{l=0}^\infty \sum_{\vz\in D_\A(2^{l+1}t)\setminus D_\A(2^lt)} \exp(-\epsilon 2^l t)\\
    \le& \sum_{l=0}^\infty |D_\A(2^{l+1}t)|\exp(-\epsilon 2^l t)\\
    \le& \sum_{l=0}^\infty L(2^{l+1}t)^{d-k}\exp(-\epsilon 2^l t)\tag{by \cref{eq:doubling2} and large enough $t$}\\
    =& L(2t)^{d-k}\sum_{l=0}^\infty 2^{l(d-k)}\exp(-\epsilon 2^l t)\\
    \le& 2L(2t)^{d-k}\exp(-\epsilon t)\tag{by ratio test and large enough $t$}
\end{align*}
Therefore, we complete the proof by taking $K = 2\cdot 2^{d-k} L =  \frac{4\cdot 2^{d-k}V_{d-k}}{\sqrt{\det(\mC_\A^\top\mC_\A)}}$.
\end{proof}

\section{Proofs for Gaussian mechanisms}\label{appendix:gaussian}
Similar to the Gaussian distribution on Euclidean space, a Gaussian distribution on lattices is sub-Gaussian (\cref{prop:gaussian}), with proof below.  This is non-trivial, because the support of Gaussian in \cref{eq:gen-gaussian} is not uniform.  For instance, suppose $\Lambda$ is not a lattice space, but has $2^{r^2}$ points in each integer radius $r$ shell.  Then the probability of Gaussian on such space is not sub-Gaussian.

\begin{proposition}[label = prop:gaussian]
Let $\mB\in \R^{d\times m}$ be a rank $m\le d$ matrix.  For any Gaussian $\rvz$ on $\Lambda = \Lambda(\mB)$ with $\vc = \vzero$ and $\sigma>0$, we have $\E[\rvz] = \vzero$ and 
$$\Pr[\|\rvz\|_2\ge t]\le Kt^m\exp\left(-\frac{t^2}{2\sigma^2}\right)$$
when $t>0$ is large enough and $K$ is defined in \cref{thm:lap_acc}.
\end{proposition}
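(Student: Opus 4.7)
The plan is to mirror the proof of \cref{thm:lap_acc}, replacing the exponential decay $\exp(-\epsilon\|\vz\|)$ by the Gaussian decay $\exp(-\|\vz\|_2^2/(2\sigma^2))$ throughout. Unbiasedness $\E[\rvz] = \vzero$ follows immediately from symmetry: since $\Lambda = \Lambda(\mB)$ is an additive subgroup of $\R^d$, we have $-\vz \in \Lambda$ whenever $\vz \in \Lambda$, and the density in \cref{eq:gen-gaussian} with $\vc = \vzero$ satisfies $\exp(-\|-\vz\|_2^2/(2\sigma^2)) = \exp(-\|\vz\|_2^2/(2\sigma^2))$, so the pairs $(\vz,-\vz)$ contribute cancelling terms to the mean.

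For the tail bound, write
$$\Pr[\|\rvz\|_2 \ge t] = \frac{\sum_{\vz \in \Lambda \setminus D(t)} \exp(-\|\vz\|_2^2/(2\sigma^2))}{\sum_{\vz \in \Lambda} \exp(-\|\vz\|_2^2/(2\sigma^2))},$$
where $D(t) = \{\vz \in \Lambda : \|\vz\|_2 \le t\}$. The denominator is at least $1$ since it includes the term for $\vz = \vzero$. For the numerator, partition $\Lambda \setminus D(t)$ into dyadic shells $D(2^{l+1}t)\setminus D(2^l t)$ for $l = 0,1,2,\dots$. By \cref{lem:doubling} applied to $\mB$, for all large enough $t$ we have $|D(r)| \le L r^m$ on each relevant scale, where $L = 2V_m/\sqrt{\det(\mB^\top \mB)}$. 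Every point in the $l$-th shell has squared norm at least $4^l t^2$, so contributes at most $\exp(-4^l t^2/(2\sigma^2))$.

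Summing over shells gives
$$\sum_{\vz \in \Lambda \setminus D(t)} \exp\!\bigl(-\|\vz\|_2^2/(2\sigma^2)\bigr) \;\le\; L(2t)^m \sum_{l=0}^\infty 2^{lm}\exp\!\bigl(-4^l t^2/(2\sigma^2)\bigr).$$
The remaining step is to show that for sufficiently large $t$, this series is bounded by twice its leading $l=0$ term. The ratio of successive terms equals $2^m \exp(-3\cdot 4^l t^2/(2\sigma^2))$, which is uniformly $\le 1/2$ once $t$ exceeds a threshold depending only on $m$, so the ratio test yields a geometric bound. Combining, $\Pr[\|\rvz\|_2 \ge t] \le 2L(2t)^m \exp(-t^2/(2\sigma^2)) = K t^m \exp(-t^2/(2\sigma^2))$ with $K = 2\cdot 2^m L$, which coincides with the constant of \cref{thm:lap_acc} when $m = d-k$ and $\mB = \mC_\A$.

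The main obstacle is not conceptual but bookkeeping: one must reconcile the various thresholds ``for $t$ large enough'' coming from \cref{lem:doubling} and from the ratio test into a single threshold, and verify that $K$ can be taken uniform above it. Because the Gaussian tail is strictly thinner than the Laplace tail treated earlier, the shell series converges even more rapidly than in the proof of \cref{thm:lap_acc}, so the ratio-test step is in fact easier here; the combinatorial lattice-growth bound $L(2t)^m$ plays exactly the same role as before and supplies the polynomial factor $t^m$ in the final estimate.
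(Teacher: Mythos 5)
Your proof is correct and follows essentially the same route as the paper's: symmetry of the density for unbiasedness, the tail expressed as a ratio with the denominator bounded below by the $\vz=\vzero$ term, and the numerator controlled by dyadic shells combined with the lattice-point count of \cref{lem:doubling} and a ratio-test geometric bound, yielding the constant $K$ of \cref{thm:lap_acc}. No gaps beyond the same ``$t$ large enough'' threshold bookkeeping the paper itself elides.
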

\begin{proof}[Proof of \cref{prop:gaussian}]
The proof is mostly identical to \cref{thm:lap_acc}.  The unbiasedness is trivial.  Given $\Lambda \subseteq \R^d$, let $D(r) = \{\vz\in \Lambda: \|\vz\|\le r\}$.  The tail probability of $\vz$ is
$$\Pr[\|\vz\|\ge t] = \frac{\sum_{\vz\in \Lambda\setminus D(t)} \exp(-\frac{1}{2\sigma^2}\|\vz\|^2)}{\sum_{\vz\in \Lambda} \exp(-\frac{1}{2\sigma^2} \|\vz\|^2)}.$$
We will upper bound it by lower bounding the denominator and upper bounding the numerator. 

For the denominator, 
$$\sum_{\vz\in \Lambda} \exp(-\frac{1}{2\sigma^2}\|\vz\|^2)\ge \exp(-\epsilon \|\mathbf{0}\|^2) = 1.$$
For the numerator, by \cref{lem:doubling}, for all large enough $r>0$,
\begin{equation}\label{eq:doubling1}
    |D(r)|\le \frac{2V_{m}}{\sqrt{\det(\mB^\top\mB)}} r^{m}.
\end{equation}
Hence, if $t$ is large enough, we have
\begin{align*}
    &\sum_{\vz\in \Lambda\setminus D(t)} \exp(-\frac{1}{2\sigma^2}\|\vz\|^2)\\
    = &\sum_{l=0}^\infty \sum_{\vz\in D(2^{l+1}t)\setminus D(2^l t)} \exp(-\frac{1}{2\sigma^2}\|\vz\|^2)\\
    \le& \sum_{l=0}^\infty \sum_{\vz\in D(2^{l+1}t)\setminus D(2^lt)} \exp(-\frac{1}{2\sigma^2} 2^{2l} t^2)\\
    \le& \sum_{l=0}^\infty |D(2^{l+1}t)|\exp\left(-\frac{2^{2l} t^2}{2\sigma^2}\right)\\
    \le&  \frac{2V_{m}}{\sqrt{\det(\mB^\top\mB)}}\sum_{l=0}^\infty (2^{l+1}t)^{m}\exp\left(-\frac{2^{2l} t^2}{2\sigma^2}\right)\tag{by \cref{eq:doubling1} and large enough $t$}\\
    =& \frac{2(2t)^{m}V_{m}}{\sqrt{\det(\mB^\top\mB)}}\sum_{l=0}^\infty 2^{lm}\exp\left(-\frac{2^{2l} t^2}{2\sigma^2}\right)\\
    \le& \frac{2^{m+1}V_{m}}{\sqrt{\det(\mB^\top\mB)}}t^m\exp\left(-\frac{t^2}{2\sigma^2}\right)\tag{by ratio test and large enough $t$}
\end{align*}
Therefore, we complete the proof by taking $K = \frac{2^{m+1}V_{m}}{\sqrt{\det(\mB^\top\mB)}}$.
\end{proof}

\begin{proof}[Proof of \cref{thm:gaussian}]
By definition, the mechanism is $\A$-invariant. 
Now we show the mechanism is $(\epsilon,\delta)$ $\A$-induced integer subspace differentially private.  For all distinct $\vx\equiv_\A \vx'$ and a feasible outcome $\vy$, because the normalizing constants are identical, the privacy loss is 
\begin{align*}
    \ln \frac{\Pr[M(\vx) = \vy]}{\Pr[M(\vx') = \vy]} =& \ln \frac{\exp\left(-\frac{1}{2\sigma_{\epsilon, \delta}^2}\|\vy-\vx\|^2\right)}{\exp\left(-\frac{1}{2\sigma_{\epsilon, \delta}^2}\|\vy-\vx'\|^2\right)} \\
    &= \frac{1}{2\sigma_{\epsilon, \delta}^2}\left(\|\vy-\vx'\|^2-\|\vy-\vx\|^2\right),
\end{align*}
If we set $\vz:=\vy-\vx$, by cosine theorem the privacy loss is 
$$\ln \frac{\Pr[M(\vx) = \vy]}{\Pr[M(\vx') = \vy]}\le \frac{1}{2\sigma_{\epsilon, \delta}^2}\left(\|\vx-\vx'\|^2+2\|\vx-\vx'\|\|\vz\|\right).$$
Therefore, the privacy loss is bounded by $\epsilon \|\vx-\vx'\|$ when $\vz$ satisfies 
$$\|\vz\|\le\sigma_{\epsilon, \delta}^2\epsilon-\frac{1}{2}\|\vx-\vx'\|.$$
If the datasets $\vx$ and $\vx'$ are too far apart so that $\|\vx-\vx'\|\ge 2\sigma_{\epsilon, \delta}^2\epsilon = \frac{4c_\A\ln 1/\delta}{\epsilon}$, we cannot have any privacy guarantee as discussed in \cref{def:partitional_dp}.  On the other hand,  when $\|\vx-\vx'\|\le 2\sigma_{\epsilon, \delta}^2\epsilon$, let $r = \|\vx-\vx'\|$ and $t = \sigma_{\epsilon, \delta}^2\epsilon-\frac{1}{2}r$.  If we can show
\begin{equation}\label{eq:gaussian_privacy0}
    K t^{d-k}\exp\left(-\frac{t^2}{2\sigma_{\epsilon, \delta}^2}\right)\le e^{\epsilon(r-1)}\delta\le \frac{e^{ \epsilon r}-1}{e^\epsilon-1}\delta,
\end{equation}
we completes the proof by \cref{prop:gaussian}.

We first take log on both side of \cref{eq:gaussian_privacy0} and the above inequality is equivalent to
$\epsilon (r-1)-\ln 1/\delta+\frac{t^2}{2\sigma_{\epsilon, \delta}^2}-(d-k)\ln t-\ln K\ge 0$.

\begin{align}
    &\frac{t^2}{2\sigma_{\epsilon, \delta}^2}+\epsilon (r-1)-\ln 1/\delta-(d-k)\ln t-\ln K\nonumber\\
    =& \frac{1}{2} \sigma_{\epsilon, \delta}^2\epsilon^2-\frac{1}{2}\epsilon r+\frac{r^2}{8\sigma_{\epsilon, \delta}^2}+\epsilon (r-1)-\ln 1/\delta\\
    & -(d-k)\ln t-\ln K\tag{replace $t = \sigma_{\epsilon, \delta}^2\epsilon-\frac{1}{2}r$}\\
    \ge& \frac{1}{2} \sigma_{\epsilon, \delta}^2\epsilon^2-\epsilon-\ln 1/\delta-(d-k)\ln t-\ln K\nonumber\\
    =& c_\A \ln 1/\delta-\epsilon-\ln 1/\delta-(d-k)\ln t-\ln K\label{eq:gaussian_privacy1}
\end{align}
Now we show \cref{eq:gaussian_privacy1} is greater than zero for some $c_\A = \mathcal{O}(\max\{(d-k)\ln (d-k), \ln K\})$ that prove \cref{eq:gaussian_privacy0}.  We show this in three parts.  First because $\delta<\epsilon<e^{-\epsilon}$, $\ln 1/\delta-\epsilon>0$.  Second, for some $c_\A = \mathcal{O}(\max\{(d-k)\ln (d-k), \ln K\})$, we have $\frac{1}{3}c_\A\ln 1/\delta\ge \frac{c_\A}{3}>\ln K$.  Finally, 
\begin{align*}
    &\left(\frac{2}{3}c_\A-1\right)\ln 1/\delta-(d-k)\ln k\\
    \ge& \left(\frac{2}{3}c_\A-1\right)\ln 1/\delta-(d-k)\ln \sigma_{\epsilon, \delta}^2\epsilon\tag{because $t \le \sigma_{\epsilon, \delta}^2\epsilon$}\\
    =& \left(\frac{2}{3}c_\A-1\right)\ln 1/\delta-(d-k)\left(\ln\ln 1/\delta+\ln 1/\epsilon+\ln c_\A\right)\\
      \ge& \left(\frac{2}{3}c_\A-1-2(d-k)\right)\ln 1/\delta-(d-k)\ln c_\A\tag{$\delta<\epsilon<1/e$}\\
      \ge& 0
\end{align*}

The last inequality holds for some $c_\A = O((d-k)\ln (d-k))$ that completes the proof.
\end{proof}

\section{Details on Implementation}\label{appendix:mcmc}

\subsection{The Gibbs-within-Metropolis sampler}

~\cref{alg:metropolis-gibbs} presents a Gibbs-within-Metropolis sampler, which produces a sequences of dependent draws $\left(\vz^{\left(l\right)}\right)_{0\le l\le \text{nsim}}$ from the target distribution $q_{\epsilon}$, using an additive jumping distribution whose element-wise construction is as described in~\cref{eq:proposal}.  

\begin{algorithm}[htb]
\caption{A Gibbs-within-Metropolis sampler}
\label{alg:metropolis-gibbs}
\begin{algorithmic}[1] 
\STATE \textbf{input}:  initial distribution $\pi_{0}$, pre-jump distribution $\{\eta_{j}\}_{k+1\le j\le n}$, target distribution $q_{\epsilon}$, $\mV  \in \Z^{d\times d}$ from the Smith normal form of constraint $\mA \in \Z^{k\times d}$ of rank $k$;
\STATE Initialize $\vv^{\left(0\right)} \sim \pi_{0}$; 
\FOR{$l=1,\ldots,\text{nsim}$}
\STATE set $\vv^{*} = \vv^{\left(l - 1\right)}$;
\FOR{$j=k+1,\ldots,n$}
\STATE sample ${e}_{j}^{*}\sim \eta_{j}$ and update $\vv^{*}_{[j]}=\vv^{\left(l-1\right)}_{[j]} +e^{*}_{j}$;
\ENDFOR
\STATE sample $r \sim U[0, 1]$;
\STATE if $r \le {q_{\epsilon}\left(\mV\vv^{*}\right)}/{q_{\epsilon}\left(\mV\vv^{\left(l-1\right)}\right)}$ then set  $\vv^{\left(l\right)}=\vv^{*}$,
otherwise set $\vv^{\left(l\right)}=\vv^{\left(l-1\right)}$;
\STATE set $\vz^{\left(l\right)} = \mV\vv^{\left(l\right)}$;
\ENDFOR
\STATE \textbf{return}: a chain $\left(\vz^{\left(l\right)}\right)_{0\le l\le \text{nsim}}$ following the single transition kernel $\vz^{\left(l\right)} \sim K\left(\vz^{\left(l-1\right)},\cdot \right)$  with target distribution $q_\epsilon$.
\end{algorithmic}
\end{algorithm}

The choice of $\eta_j$ is a tuning decision for the algorithm. If using the double geometric proposal as described in ~\cref{sec:implementation}, the scale parameter $a$ should be set to encourage fast mixing of the chain. For the examples in ~\cref{sec:experiments}, we report the specific choices as they occur, as well as in ~\cref{appendix:experiments}. The initial distribution $\pi_0$ may be chosen simply as $g$ for convenience. 

As noted in ~\cref{sec:implementation}, steps 5 through 7 of ~\cref{alg:metropolis-gibbs} updates the proposed jump in a Gibbs sweep (i.e. one dimension at a time), utilizing the fact that the pre-jump object $\ve$ is element-wise independent under $g_\mA$. Doing so facilitates the $L$-lag coupling for assessing empirical convergence of the chain. In practice, these updates may also be performed simultaneously.



\subsection{$L$-lag maximal coupling algorithm and meeting time sampler.}

We adapt the method proposed by \citep{biswas2019estimating} to perform empirical convergence assessment of the proposed Metropolis algorithm using $L$-lag coupling.
As ~\cref{sec:implementation} explains, we construct a joint transition kernel $\tilde{K}$ of two Markov chains, each having the same target distribution $q_\epsilon$ induced by the marginal transition kernel $K$ as defined in ~\cref{alg:metropolis-gibbs}. The joint kernel $\tilde{K}$ is a maximal coupling kernel and is given in ~\cref{alg:coupling-gibbs}. It is designed in such a way that the $L$-th lag of the two chains will couple in finite time with probability one. This $L$-lag meeting time is a random variable, and is sampled via ~\cref{alg:llag-alt}, which employs ~\cref{alg:metropolis-gibbs} and~\ref{alg:coupling-gibbs} as subroutines.


\begin{algorithm}[htb]
\caption{A maximal $L$-lag Gibbs-within-Metropolis coupling
}
\label{alg:coupling-gibbs}
\begin{algorithmic}[1] 
\STATE \textbf{input}: lag $L \ge 1$, iteration $l > L$,  states $\left(\vz^{\left(l-1\right)},\vz_{\circ}^{\left(l-L-1\right)}\right)$
\STATE Initialize $\vv^{\left(l-1\right)} = \mV^{-1}\vz^{\left(l-1\right)}$, $\vw^{\left(l-L-1\right)} = \mV^{-1}\vz_{\circ}^{\left(l-L-1\right)}$
\FOR{$j=k+1,\ldots,n$}
\STATE set $\vv^* = \vv^{\left(l-1\right)}$ and $\vw^* = \vw^{\left(l-L-1\right)}$;
\STATE sample ${e}_{j}^{*}\sim \eta_{j}$ and update $\vv^{*}_{[j]}=\vv^{\left(l-1\right)}_{[j]} +e^{*}_{j}$;
\STATE sample $s \sim U[0, 1]$; 
\STATE \textbf{if} $s \eta_{j}\left( e^{*}_j \right) \le \eta_j\left( \vv^{*}_{[j]} - \vw^{(l-L-1)}_{[j]}\right)$ \textbf{then}  update $\vw^{*}_{[j]} = \vv^{*}_{[j]}$; \\
\textbf{else} sample $\tilde{e}_j \sim \eta_{j}$, $\tilde{s}\sim U[0,1]$, and set $\tilde{w}_j = \vw^{(l-L-1)}_{[j]} +  \tilde{e}_j$  until $\tilde{s} \eta_j\left(\tilde{e}_j\right) > \eta_j\left( \tilde{w}_j  - \vv^{(l-1)}_{[j]}\right)$; update $\vw^{*}_{[j]} = \tilde{w}_j$. 
\ENDFOR
\STATE sample $r \sim U[0, 1]$;
\STATE \textbf{if}  $r \le {q_{\epsilon}\left(\mV\vv^{*}\right)}/{q_{\epsilon}\left(\mV\vv^{\left(l-1\right)}\right)}$ \textbf{then} set  $\vv^{\left(l\right)}=\vv^{*}$, \textbf{else} set  $\vv^{\left(l\right)}=\vv^{\left(l-1\right)}$; 
\STATE \textbf{if}  $r \le {q_{\epsilon}\left(\mV\vw^{*}\right)}/{q_{\epsilon}\left(\mV\vw^{\left(l-L-1\right)}\right)}$ \textbf{then} set  $\vw^{\left(l\right)}=\vw^{*}$, \textbf{else} set  $\vw^{\left(l-L\right)}=\vw^{\left(l-L-1\right)}$;
\STATE set $\vz^{\left(l\right)} = \mV\vv^{\left(l\right)}$, $\vz_{\circ}^{\left(l-L\right)} = \mV\vw^{\left(l-L\right)'}$.
\STATE \textbf{return}: a 
pair of draws $\left(\vz^{\left(l\right)}, \vz_{\circ}^{\left(l-L\right)}\right)$ following the joint transition kernel  $\tilde{K}\left(\left(\vz^{\left(l-1\right)},\vz_{\circ}^{\left(l-L-1\right)}\right), \cdot\right)$, each with marginal target distribution $q_\epsilon$.
\end{algorithmic}
\end{algorithm}


The output of ~\cref{alg:llag-alt} is used to construct an estimate of the upper bound on the total variation distance between the target distribution  $q_{\epsilon}$ and the marginal distribution of the chain at time $l$, denoted $q_{\epsilon}^{(l)}$ \citep[Theorem 2.5]{biswas2019estimating}. The upper bounds are obtained as empirical averages over independent runs of coupled Markov chains. 

\begin{algorithm}[htb]
\caption{$L$-lag meeting time sampler (adapted from \citep{biswas2019estimating})}
\label{alg:llag-alt}
\begin{algorithmic}[1] 
\STATE \textbf{input}: lag $L \ge 1$, initial distribution $\pi_{0}$, single kernel $K$ and joint kernel $\tilde{K}$
\STATE \textbf{output}: meeting time $\tau^{(L)}$, chains $\left(\vz^{(l)}\right)_{0\le l\le \tau^{(L)}}$, $\left(\vz_{\circ}^{(l)}\right)_{0\le l\le \tau^{(L)}-L}$
\STATE Initialize $\vz^{(0)} \sim \pi_{0}$,   $\vz^{\left(l+1\right)} \mid \vz^{\left(l\right)} \sim K\left(\vz^{\left(l\right)}, \cdot\right)$ for $1 \le l \le L$, and $\vz_{\circ}^{(0)} \sim \pi_{0}$
\FOR{$l > L$}
\STATE sample 
$\left(\vz^{\left(l\right)}, \vz_{\circ}^{\left(l-L\right)}\right) \sim \tilde{K}\left(\left(\vz^{\left(l-1\right)},\vz_{\circ}^{\left(l-L-1\right)}\right), \cdot\right)$
\STATE \textbf{if} $\vz^{\left(l\right)} = \vz_{\circ}^{\left(l-L\right)}$ \textbf{then return} $\tau^{(L)} := l$  and  chains $\left(\vz^{(l)}\right)_{0\le l\le \tau^{(L)}}$ and $\left(\vz_{\circ}^{(l)}\right)_{0\le l\le \tau^{(L)}-L}$
\ENDFOR
\end{algorithmic}
\end{algorithm}

For ~\cref{alg:coupling-gibbs}, one needs to choose a lag $L>0$. As \citep{biswas2019estimating} discusses, smaller $L$ encourages faster coupling of the two chains, however the estimated total variation distance as a upper bound is a poorer one. On the other hand, larger $L$ produces a a tighter bound at the expense of a heavier computational burden. \citep{biswas2019estimating} recommends starting with $L = 1$, and increasing it to a point where the estimated TV upper bound $d_{TV}(q_\epsilon^{(0)}, q_{\epsilon})$ is close to 1 (i.e. non-vacuous). \citep{biswas2019estimating} also advises against increasing $L$ arbitrarily, because the TV upper bound itself is not an optimal one thus the benefit of sharpness is limited. In our experiments in ~\cref{sec:experiments}, we employ $L$ values such that the distribution of the $L$-lag coupling time, $\tau^{(L)} - L$, appears to be empirically stable in $L$. The choices of $L$ are reported as they occur, as well as in ~\cref{appendix:experiments}.

\section{Additional experimental results and details}\label{appendix:experiments}



\subsection{Additional details}

The code and data for reproducing plots is included in the supplementary material. All experiments were run on a Macbook with M1 processor and 16GB memory. The packages and software used are all available freely.

\subsection{Intersecting counting constraints.} 



Consider a synthetic example in which a set of three counting constraints $\mathcal{A} = (A_1, A_2, A_3)$ are defined over 14 records, where the intersection of all subsets of constrains are nonempty. The constraints are schematically depicted by ~\cref{fig:circles-constraint}. 
The corresponding incident matrix is $\mA \in \Z^{3\times 14}$, with each row corresponding to one of $A_1, A_2$ and $A_3$, and each element being $1$ at indices corresponding to the record within that constraint and $0$ otherwise.  We apply ~\cref{alg:metropolis-gibbs} to instantiate the generalized Laplace mechanism with both $\ell_1$-norm and $\ell_2$-norm targets, to privatize a data product that conforms to the constraint $\mA$ with  $\epsilon = 0.25$. The pre-jump proposal distributions $\eta_j$ are double geometric distributions, with parameter $a=\exp{(-1)}$ for the $\ell_1$-norm target and $a=\exp{(-1.5)}$ for the $\ell_2$-norm target. ~\cref{fig:circles-coupling} shows the estimated upper bound total variation distance using $L$-lag coupling. The chains appear to empirically converge at around $10^5$ iterations for the $\ell_1$-norm target distribution and $10^6$ iterations for the $\ell_2$-norm target distribution.

\begin{figure}[t]
    \centering
      \includegraphics[trim={60mm 0 60mm 33mm},clip,width=.5\textwidth]{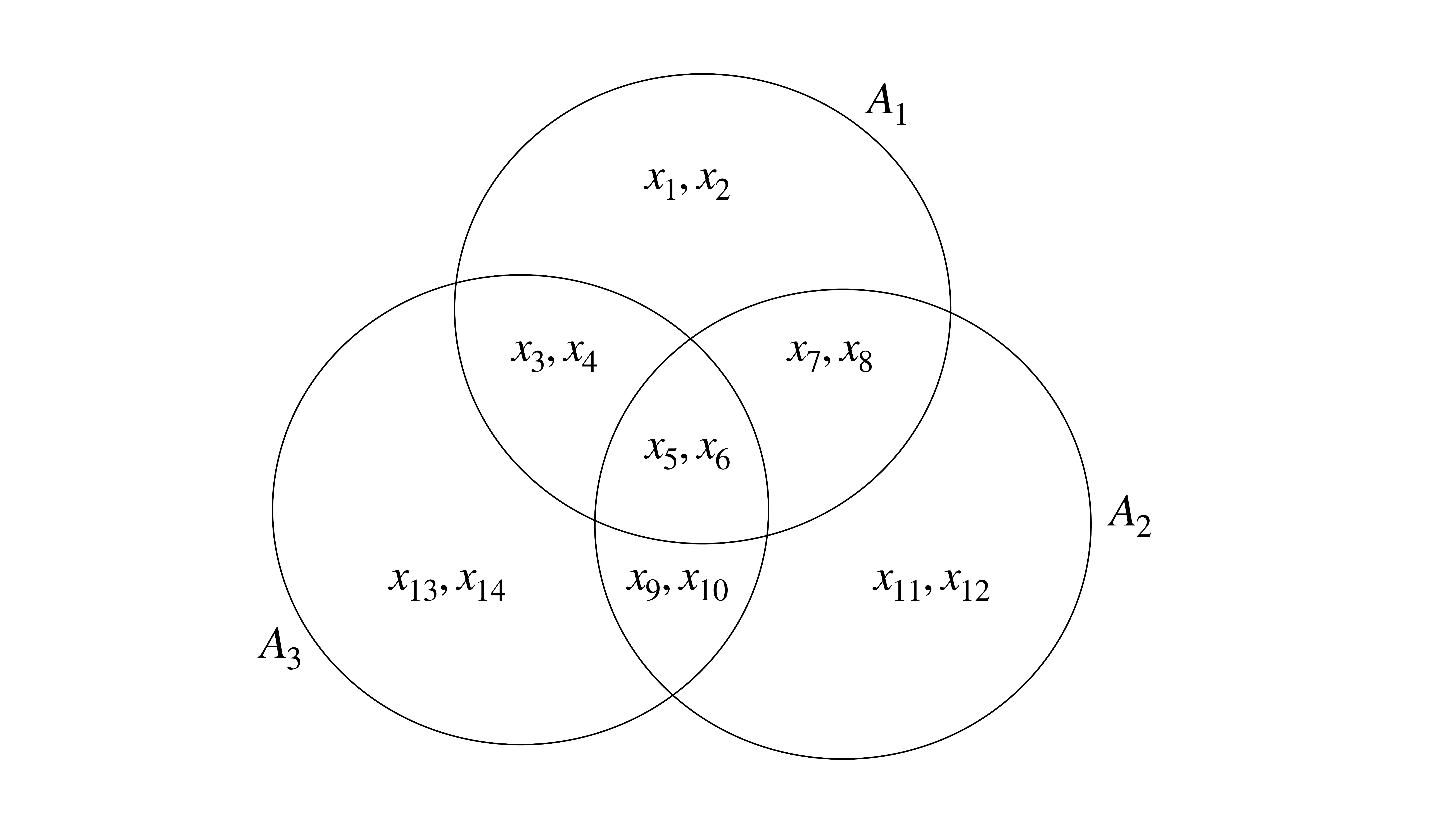}
  \caption{Schematic depiction of the intersecting counting constraints $\mathcal{A} = (A_1, A_2, A_3)$.}
    \label{fig:circles-constraint}
\end{figure}

\begin{figure}[h]
  \centering
  \begin{minipage}[b]{0.49\textwidth}
  \centering
    \includegraphics[width=0.95\textwidth]{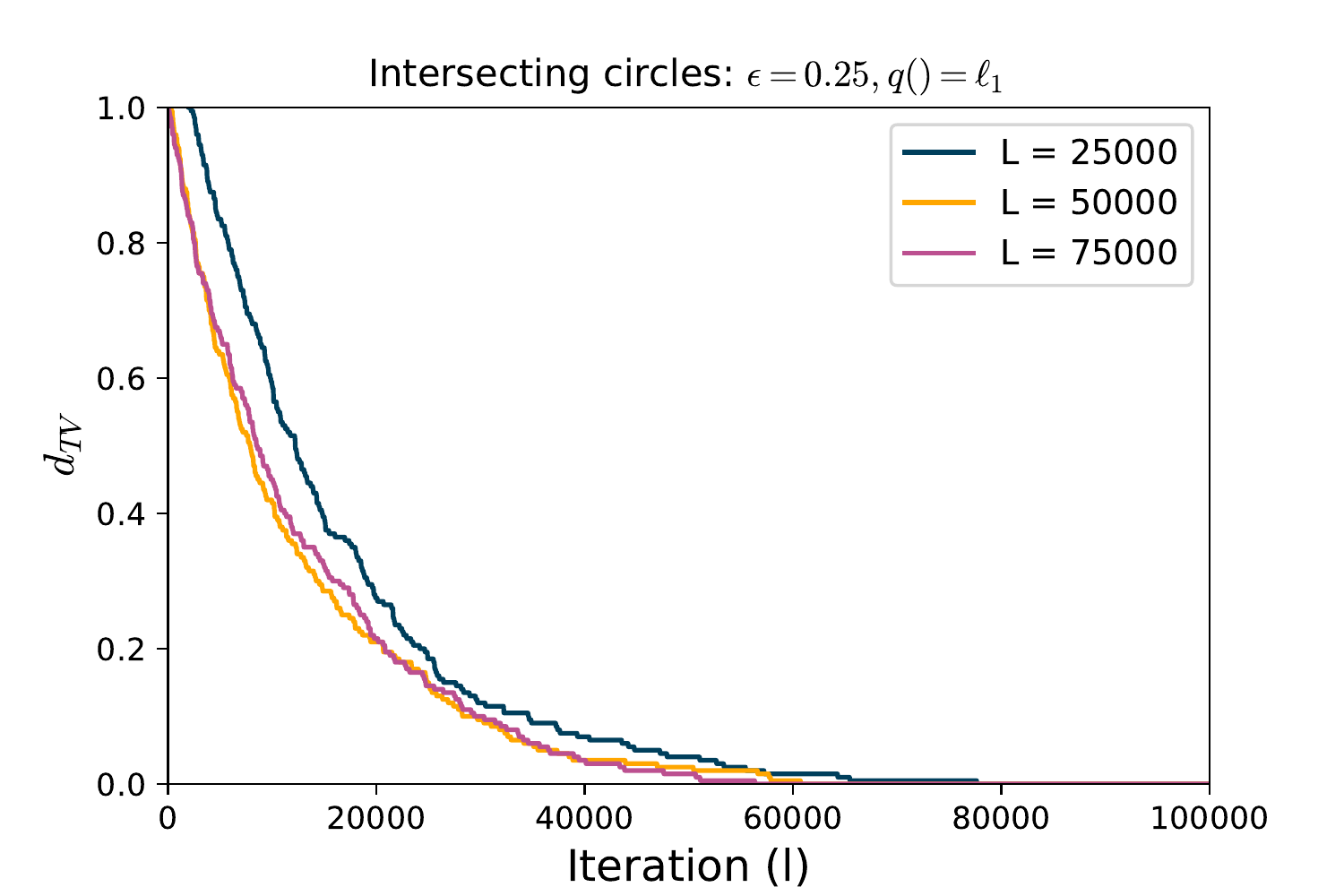}
  \end{minipage}
  \begin{minipage}[b]{0.49\textwidth}
  \centering
    \includegraphics[width=0.95\textwidth]{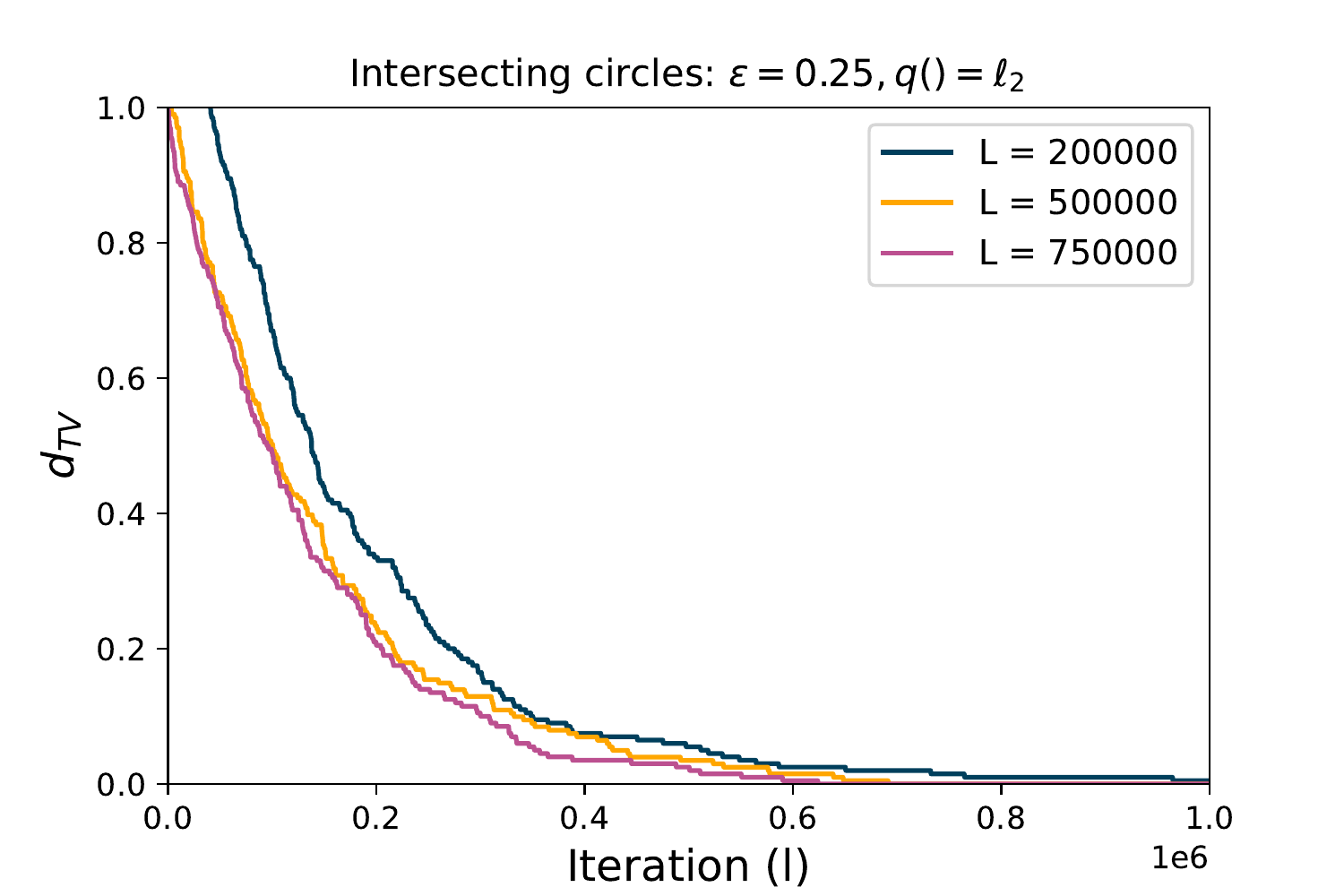}
  \end{minipage}
  \caption{Intersecting constraints: estimated upper bound on the chain's total variation distance from target distribution using $L$-lag coupling \citep{biswas2019estimating}. Left: $\ell_1$-norm target; right: $\ell_2$-norm target by the generalized Laplace mechanism with  $\epsilon = 0.25$.}
  \label{fig:circles-coupling}
\end{figure}

~\cref{fig:circle-boxplot} shows the boxplot of 1000  realizations of the proposed entry-wise errors respectively under the $\ell_1$-norm and $\ell_2$-norm target distributions.  The proposed additive noise are integer-valued and are entry-wise unbiased. 

\begin{figure*}[h]
    \centering
    \includegraphics[width=0.75\textwidth]{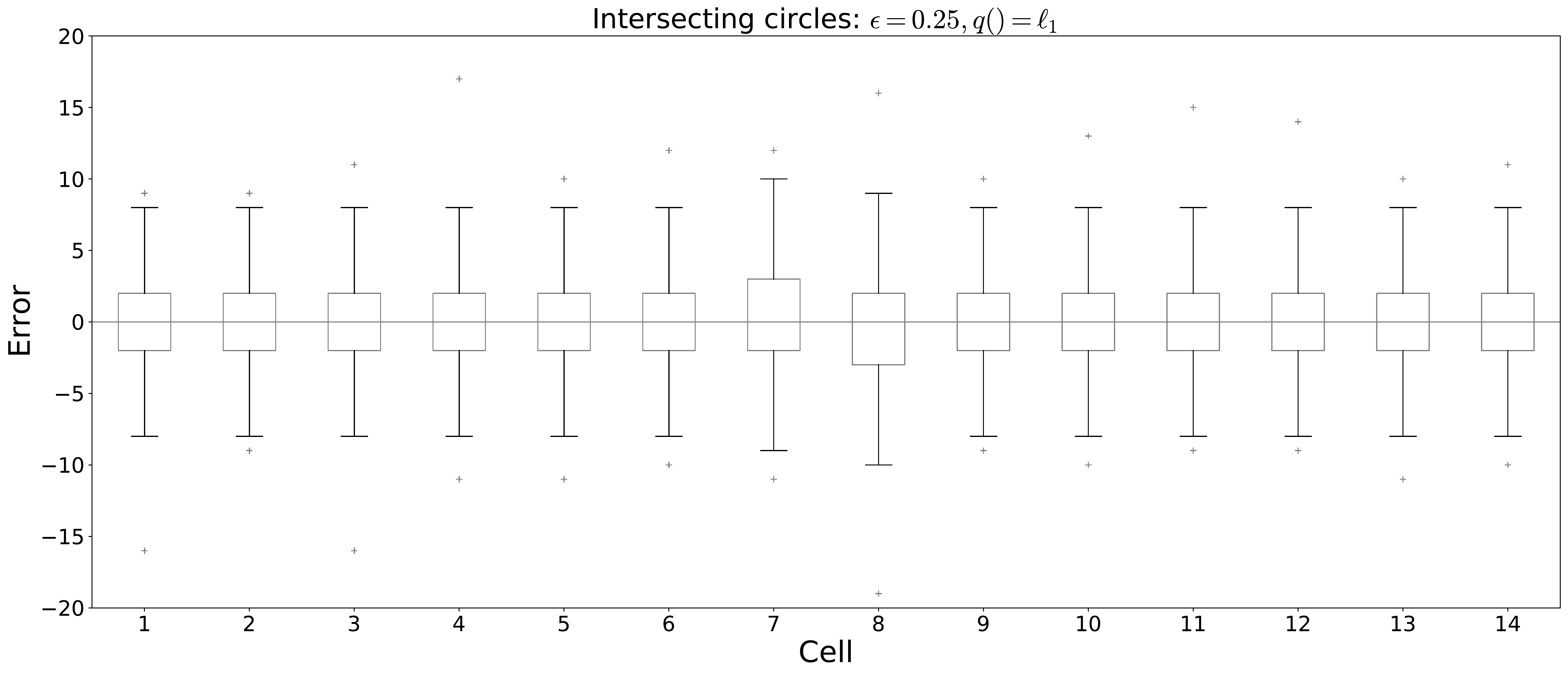}
    \includegraphics[width=0.75\textwidth]{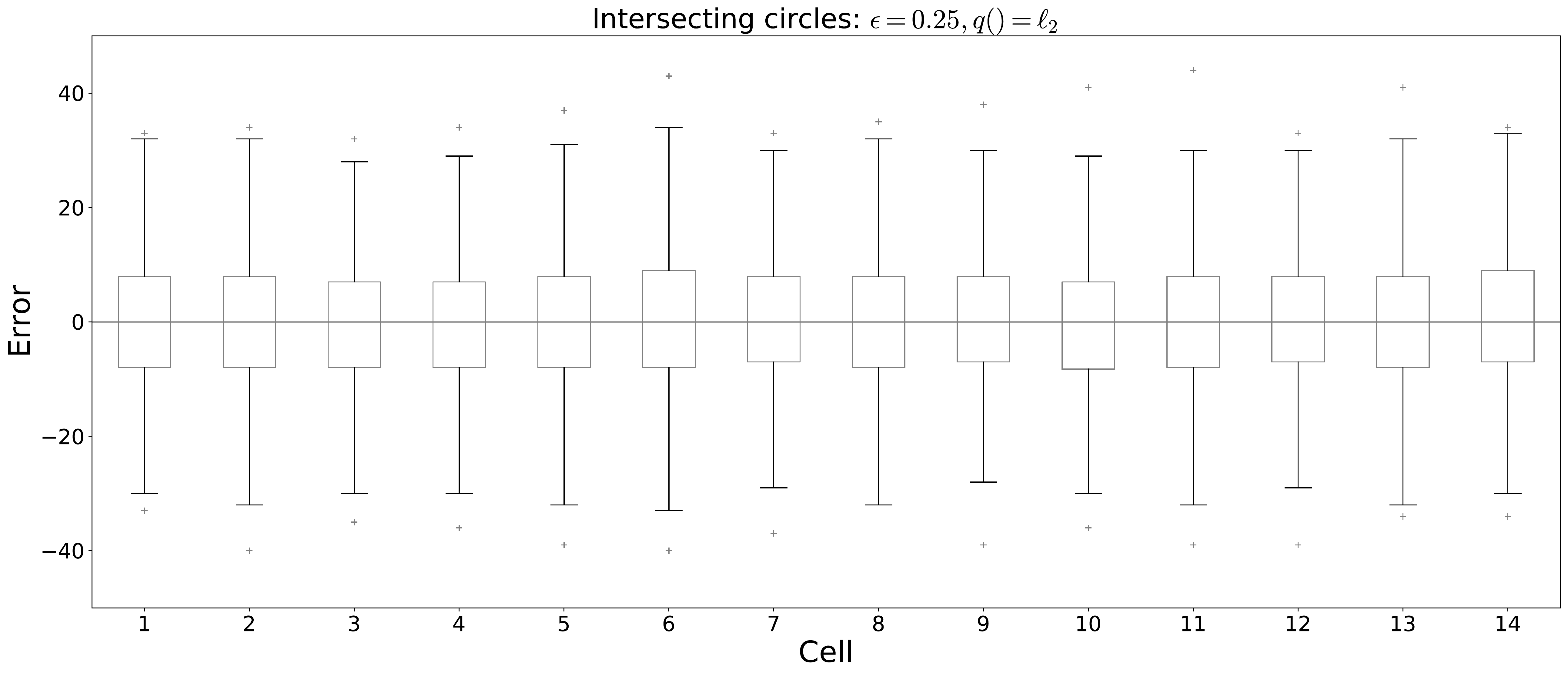}
    \caption{Intersecting counting constraints: boxplots of 1000 realizations of proposed entrywise errors; $\ell_1$-norm target distribution (top) and $\ell_2$-norm target distribution (bottom), $\epsilon=0.25$.}
    \label{fig:circle-boxplot}
\end{figure*}

\subsection{Delinquent Children by County and Household Head Education Level.}

In the \emph{report on statistical disclosure limitation (SDL) methodology}, the Federal Committee on Statistical Methodology published a fictitious dataset concerning delinquent children in the form of a $4\times 4$ contingency table, tabulated across four counties by education level of household head \cite[Table 4]{fcsm2005sdl}. Reproduced in ~\cref{table:children}, the dataset was used to illustrate various traditional SDL techniques, and was employed again by \citep{slavkovic2010synthetic} in a comparative study on the effect of swapping versus synthetic table generation on disclosure risk and on the validity of downstream statistical analysis.

\begin{table}[h]
\centering
\caption{$4\times 4$ contingency table: delinquent children by county and household head education level (reproduced from Table 4 of \citep{fcsm2005sdl})}
\setlength\tabcolsep{3pt}
\begin{tabular}{cccccc}
     \toprule
     County & Low & Medium & High & Very High & Total \\
     \midrule
     Alpha & 15 & 1 & 3 & 1 & 20 \\
     Beta & 20 & 10 & 10 & 15 & 55 \\
     Gamma & 3 & 10 & 10 & 2 & 35 \\
     Delta & 12 & 14 & 7 & 2 & 35 \\
     \midrule
     Total & 50 & 35 & 30 & 20 & 135 \\
     \bottomrule
\end{tabular}
\label{table:children}
\end{table}

Consider the privatization of the data while preserving the margins of the contingency table. We apply ~\cref{alg:metropolis-gibbs} to instantiate the generalized Laplace mechanism with both $\ell_1$-  and $\ell_2$-norm targets and with $\epsilon = 0.25$. The pre-jump proposal distributions $\eta_j$ are double geometric distributions, with parameter $a = \exp(-1)$ for the  $\ell_1$-norm target distribution, and $a = \exp(-2)$ for the $\ell_2$-norm target distribution.

~\cref{fig:children-coupling-l1} in ~\cref{sec:experiments}  show the evolution of the TV upper bounds on the chain's marginal distributions to the $\ell_1$-norm target distribution, each estimated with 200 independent coupled chains, and ~\cref{table:children-noise} shows one realization of the noises obtained after the chain has achieved empirical convergence under the $\ell_1$ norm target. ~\cref{table:children-noise-l2} and ~\cref{fig:children-coupling-l2} presented here are analogs of ~\cref{table:children-noise} and ~\cref{fig:children-coupling-l1}, respectively, but for the $\ell_2$-norm target distribution. Under  the $\ell_2$-norm target, the chains appear to converge after about $10^5$ iterations, and are stable at various choices of $L$ as shown in the figures.

\begin{center}

\begin{tabular}{cccccc}
     \toprule
     County & Low & Medium & High & Very High & Total \\
     \midrule
     Alpha & -11 & 1 & 10 & 0 & 0 \\
     Beta & -3 & -6 & 1 & 8 & 0 \\
     Gamma & -6 & -5 & 13 & -2 & 0 \\
     Delta & 20 & 10 & -24 & -6 & 0 \\
     \midrule
     Total & 0 & 0 & 0 & 0 & 0 \\
     \bottomrule
\end{tabular}
\captionof{table}{
Proposed generalized Laplace additive noise  ($\epsilon=0.25$, $\ell_2$-norm target distribution) ~\cref{alg:metropolis-gibbs}, which preserves row and column margins.\label{table:children-noise-l2}}
\end{center}
\begin{figure}[h]
    \centering
   \includegraphics[width=0.5\textwidth]{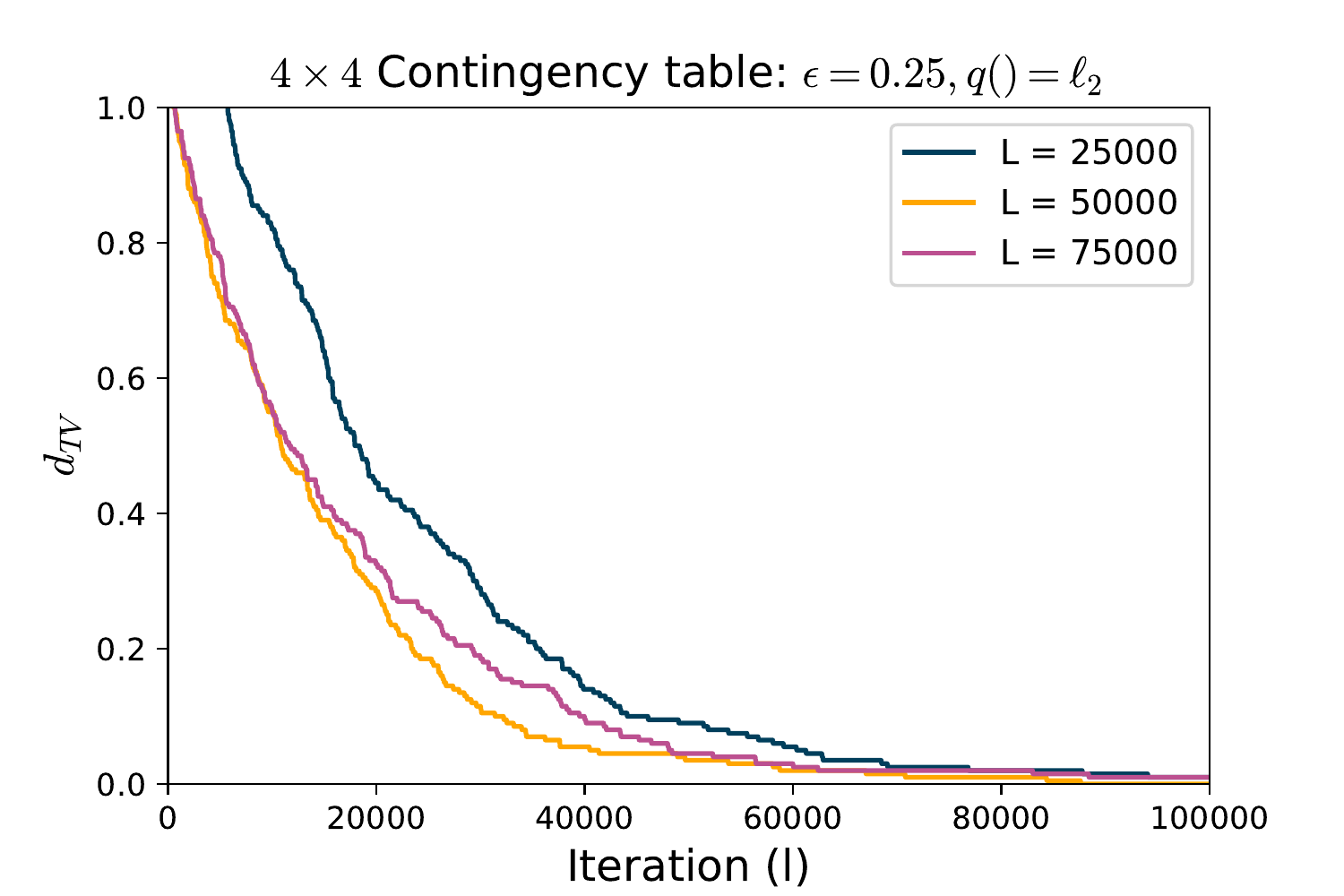}
  \captionof{figure}{Estimated upper bound: TV distance from target ($\ell_2$ norm) 
    \label{fig:children-coupling-l2}}
\end{figure}



In addition, ~\cref{fig:chidren-boxplot} shows the boxplot of 1000 realizations (thinned at 0.01\%) of the proposed errors under both the $\ell_1$-norm and $\ell_1$-norm target distributions. The proposed additive noise are integer-valued and are cellwise unbiased.

\begin{figure*}[b]
    \centering
    \includegraphics[width=0.75\textwidth]{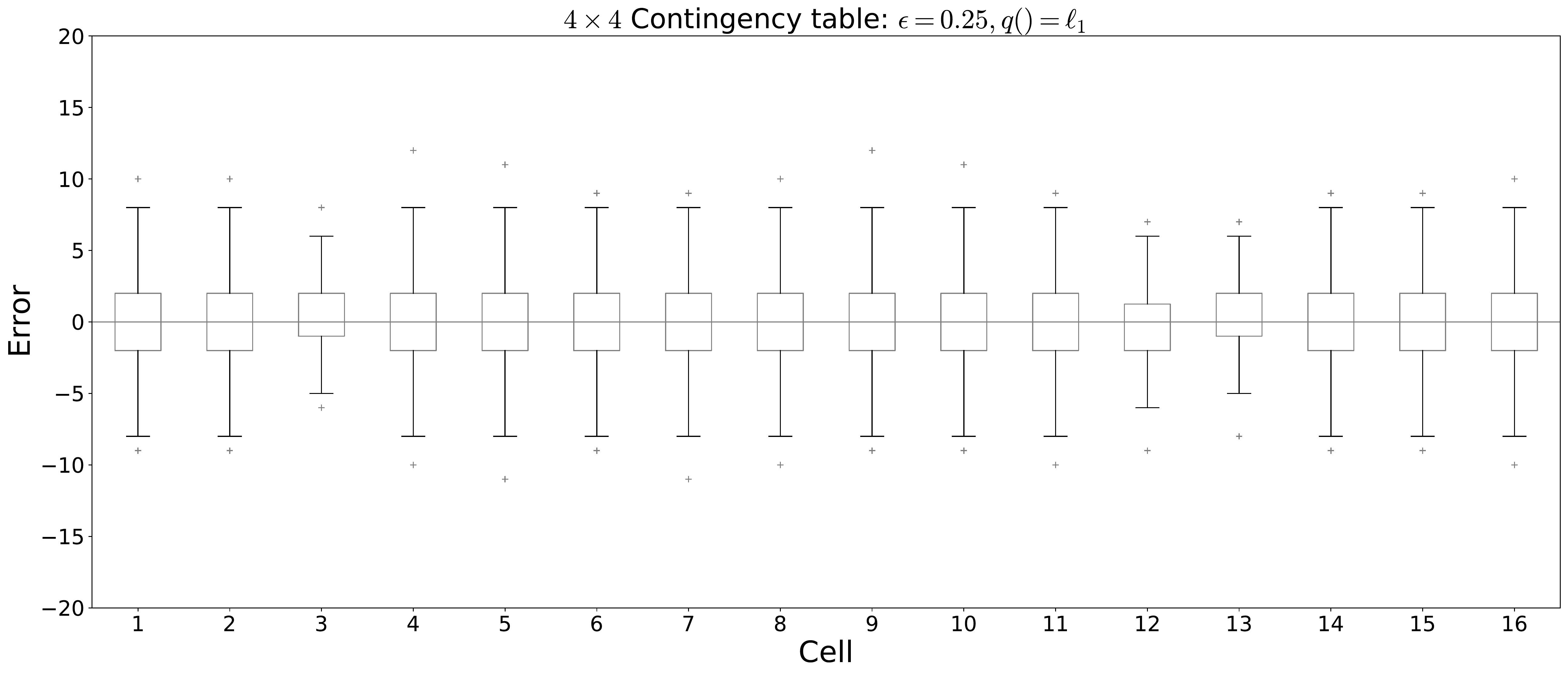}
    \includegraphics[width=0.75\textwidth]{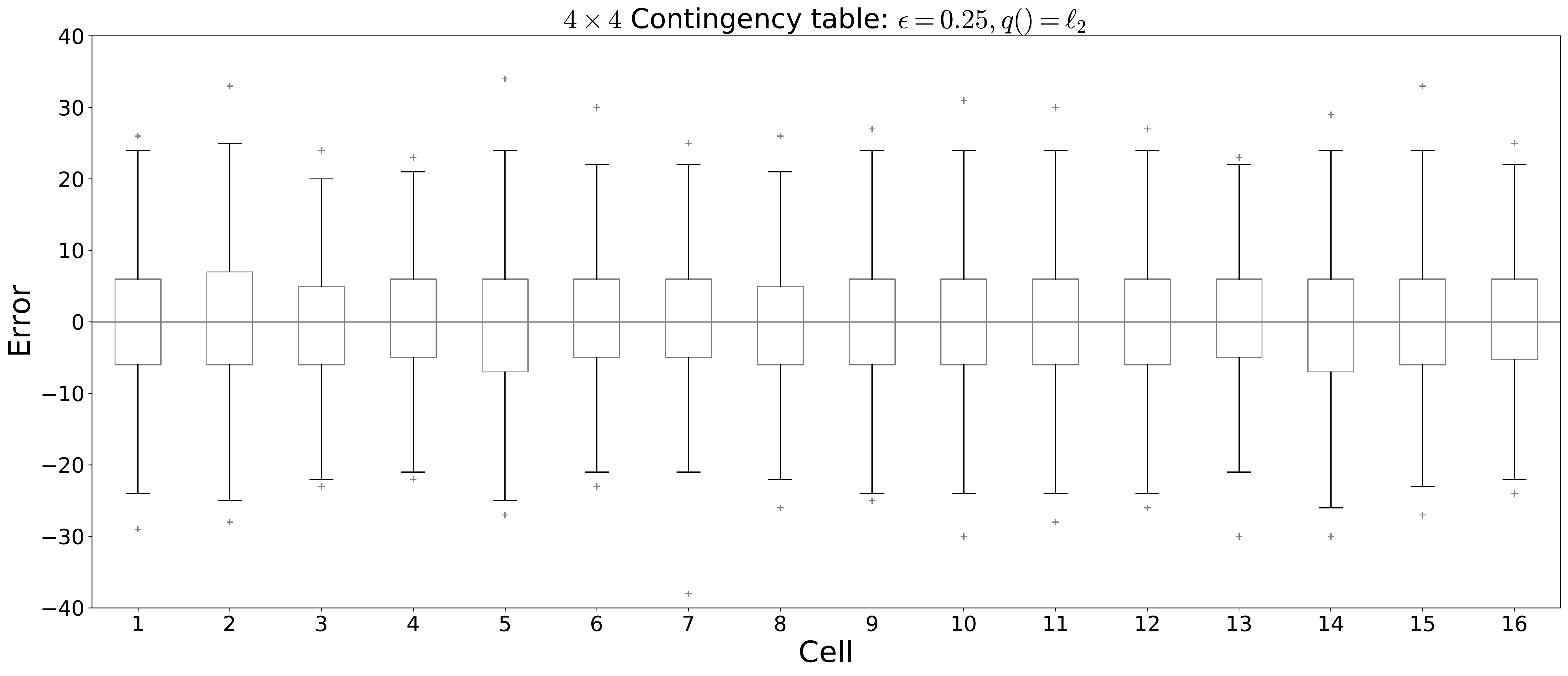}
    \caption{$4\times 4$ contingency table: boxplot of 1000 realizations of proposed cell-wise errors; $\ell_1$-norm target distribution (top) and $\ell_2$-norm target distribution (bottom), $\epsilon=0.25$.}
    \label{fig:chidren-boxplot}
\end{figure*}

\subsection{ 2010 U.S. Census county-level population data.}

We consider the publication of county-level population counts subject to the invariant of state population size. The U.S. Census Bureau published iterations of privacy-protected demonstration files produced by preliminary versions of its 2020 Disclosure Avoidance System (DAS), which treats the 2010 Census Summary Files (CSF) as the confidential values. These data have been curated by IPUMS NHGIS  and are publicly available \citep{ipums2020das}. Employed here are the November 2020 vintage demonstration data, protected by pure differential privacy with $\epsilon = 0.192 = 4 \text{ (total)} \times 0.16 \text{ (county level)} \times 0.3 \text{ (population query)}$.

We demonstrate ~\cref{alg:coupling-gibbs} using the generalized Laplace mechanism under $\ell_1$ norm, with $\epsilon$ set to accord to the Census Bureau's specification. The states that are used for demonstration in this work are identified by~\citep{gao2022subspace} as those for which the Census DAS produced statistically significantly biased errors  (at the $\alpha = 0.01$ level) for the county populations. There are 11 states in total. The result for the state of Illinois is presented in ~\cref{fig:illinois} in ~\cref{sec:intro}. ~\cref{fig:census_sdp_county_combined} below presents results for 10 additional states.

To assess the empirical convergence of the sampler, for each state we initialize four independent chains from over-dispersed starting values. To obtain these starting values, independent chains are run for $10^{6}$ iterations with an over-dispersed target distribution, specifically $q_{\epsilon}$  with $\epsilon=0.1$. The pre-jump proposal parameter is set to $a = \exp(-2.5)$ and is the same for all the states. With those starting values, each independent chains again expends $10^6$ steps of burn-in with the desired target distribution, $q_{\epsilon}$ with $\epsilon=0.192$. The potential scale reduction factors calculated across independent chains  are observed to reach $< 1.01$ for each county of each state.








\begin{figure*}
\begin{center}
\includegraphics[width=.85\textwidth]{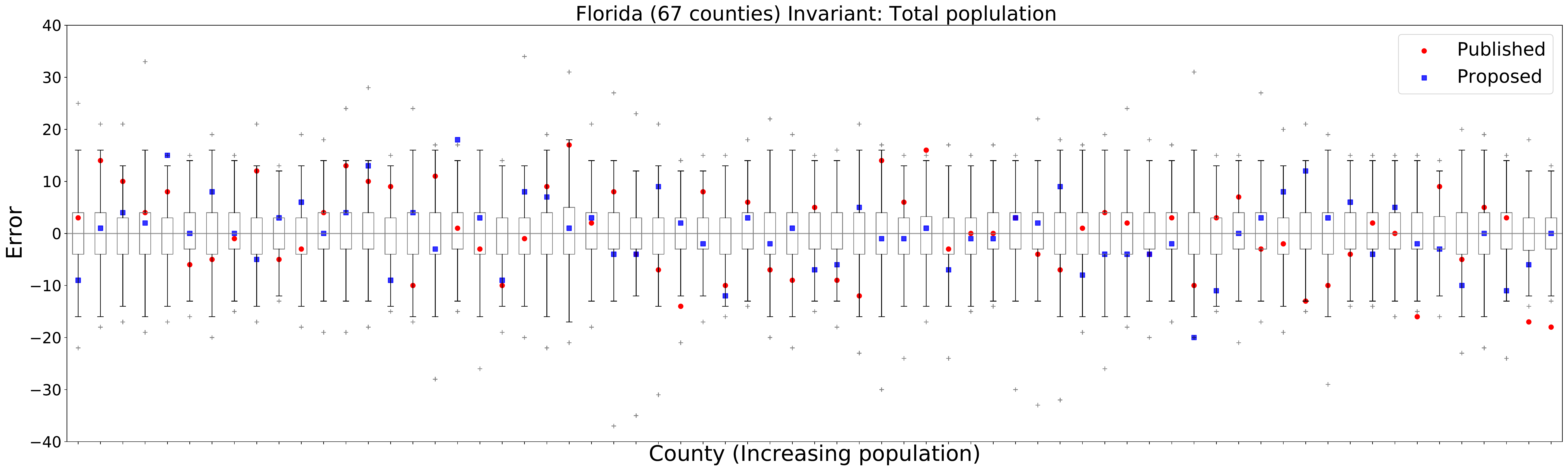}
\includegraphics[width=.85\textwidth]{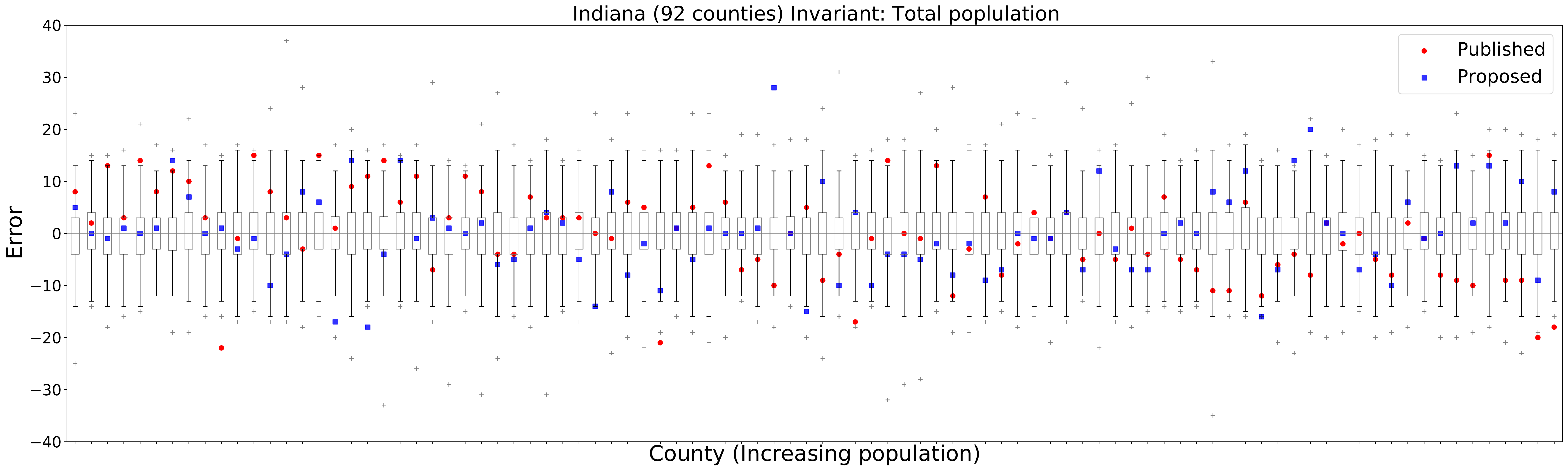} 
\includegraphics[width=.85\textwidth]{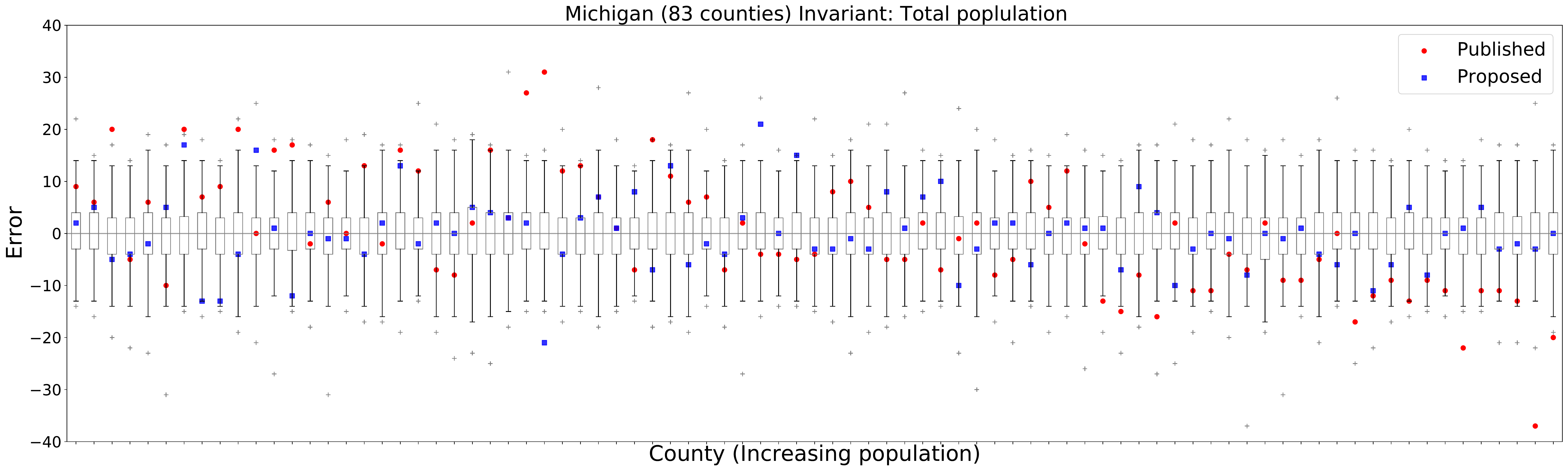}
\includegraphics[width=.85\textwidth]{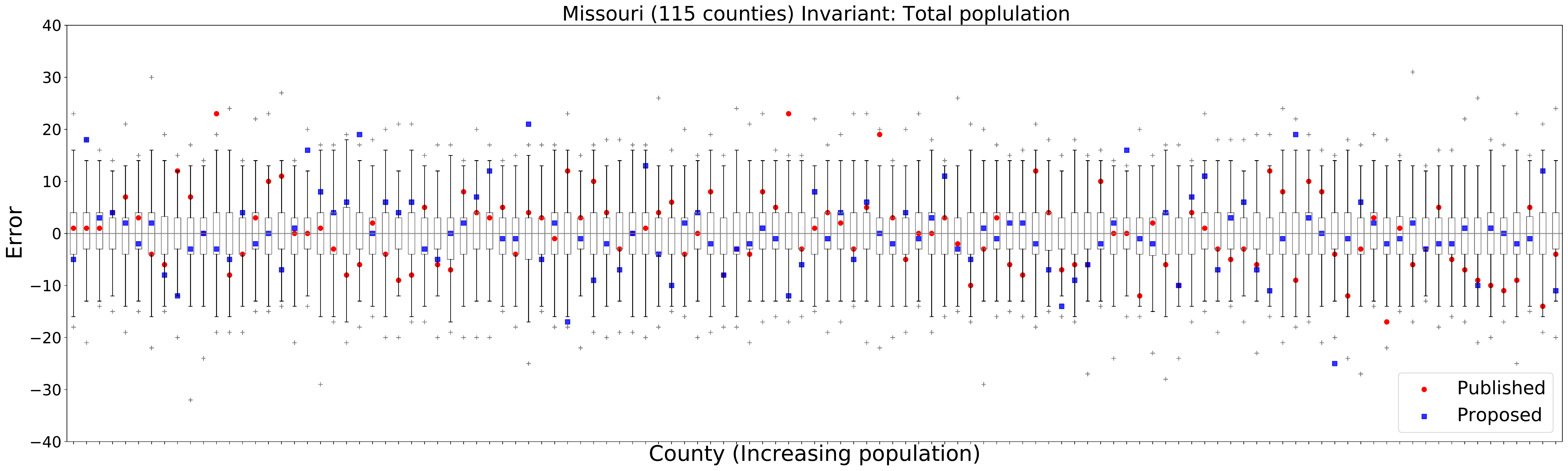}
\includegraphics[width=.85\textwidth]{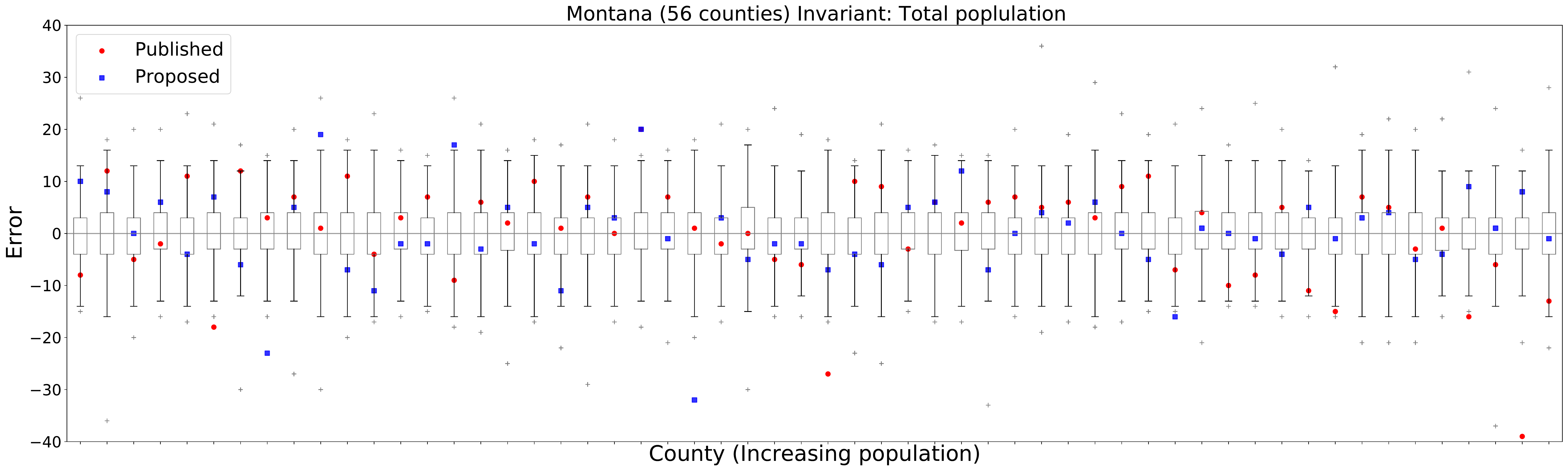}
\end{center}
\caption{\label{fig:census_sdp_county_combined} Privacy noise from DAS demonstration data (red dots) \cite[11/2020 vintage;][]{ipums2020das} vs. generalized Laplace mechanism (~\cref{def:lap}) via ~\cref{alg:metropolis-gibbs} (blue squares: one instance; boxplot: 1000 instances) for county populations within a state. The $x$-axes are arranged in increasing true county populations. State population total is invariant. The states shown here (plus Illinois in ~\cref{fig:illinois}) are those identified by~\citep{gao2022subspace} for which the DAS produced statistically significantly biased errors (at the $\alpha = 0.01$ level), as can be seen by the downwardly biased trends among the red. In contrast, the proposed noises are integer-valued and unbiased.}	
\end{figure*}
\begin{figure*}\ContinuedFloat
\begin{center}
    \includegraphics[width=.85\textwidth]{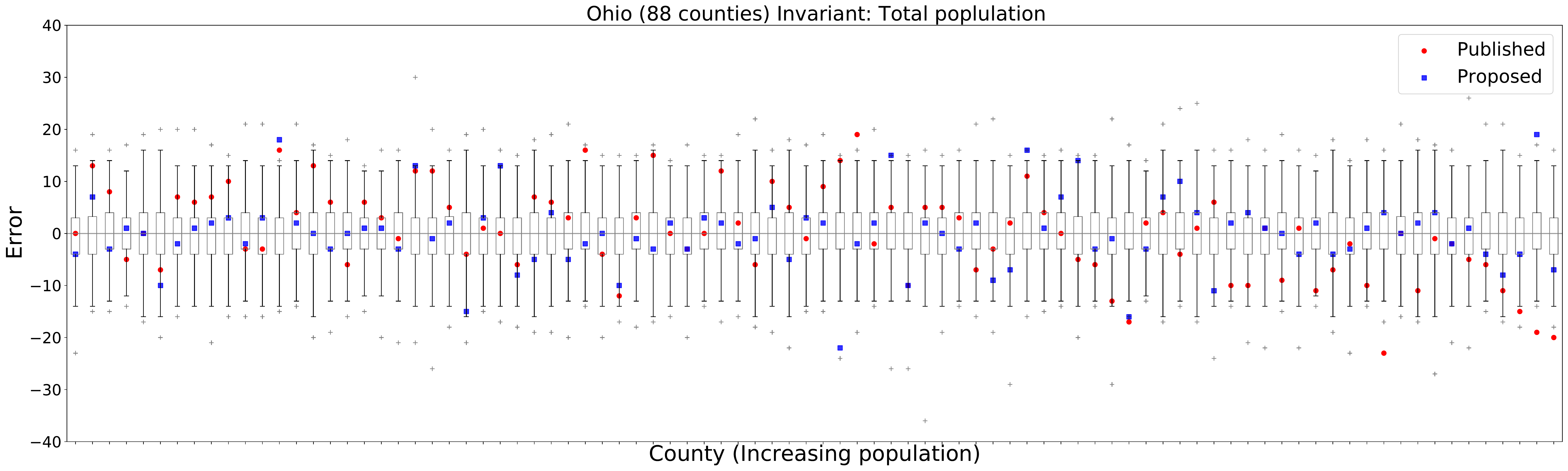} 
    \includegraphics[width=.85\textwidth]{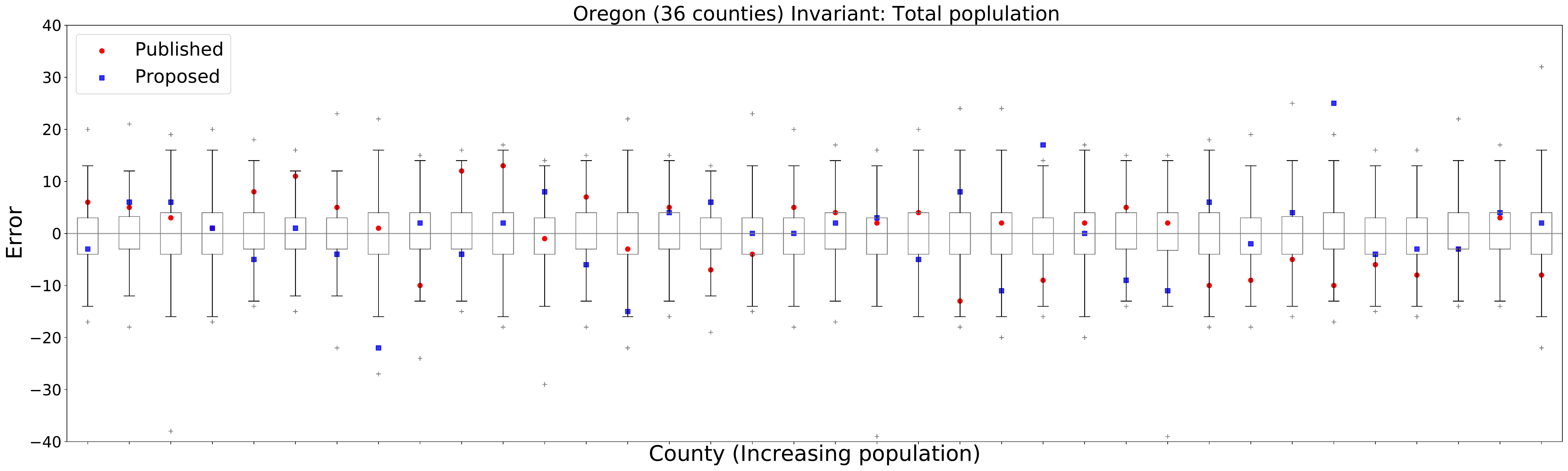} 
    \includegraphics[width=.85\textwidth]{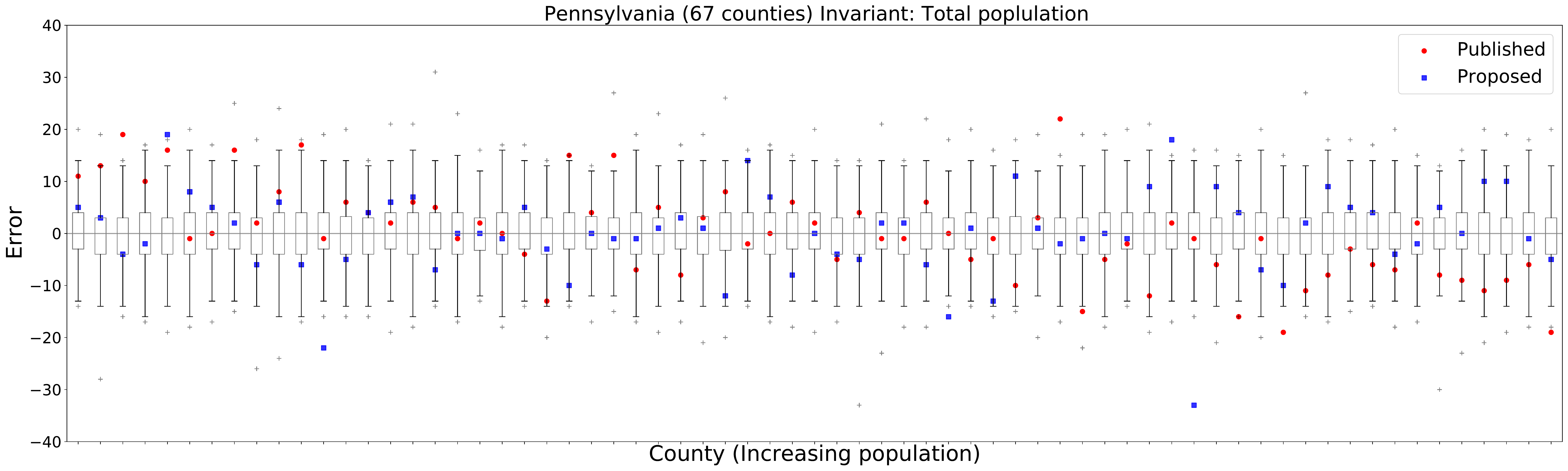} 
     \includegraphics[width=.85\textwidth]{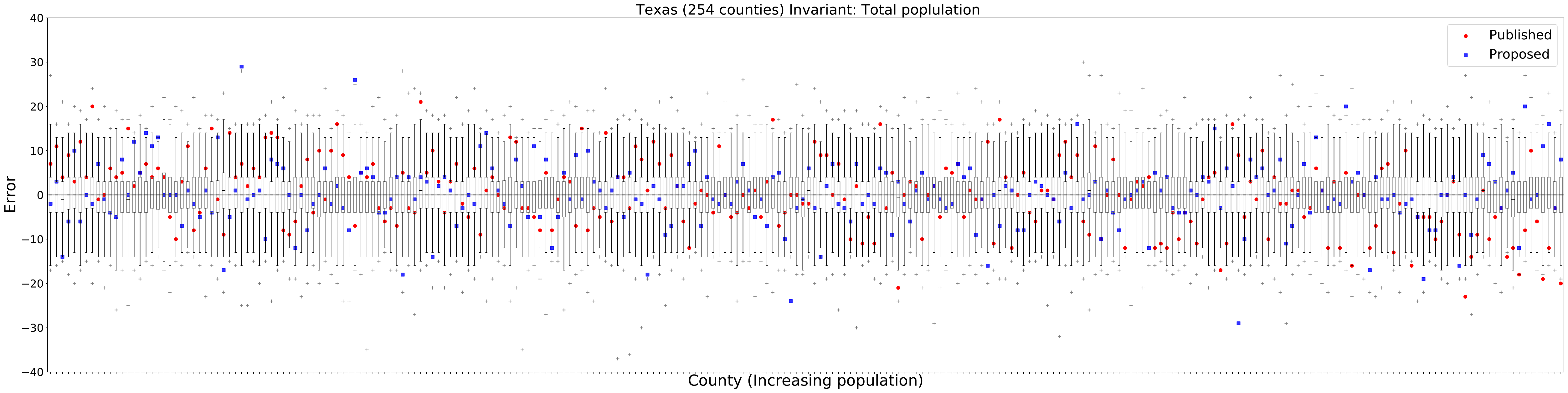} 
     \includegraphics[width=.85\textwidth]{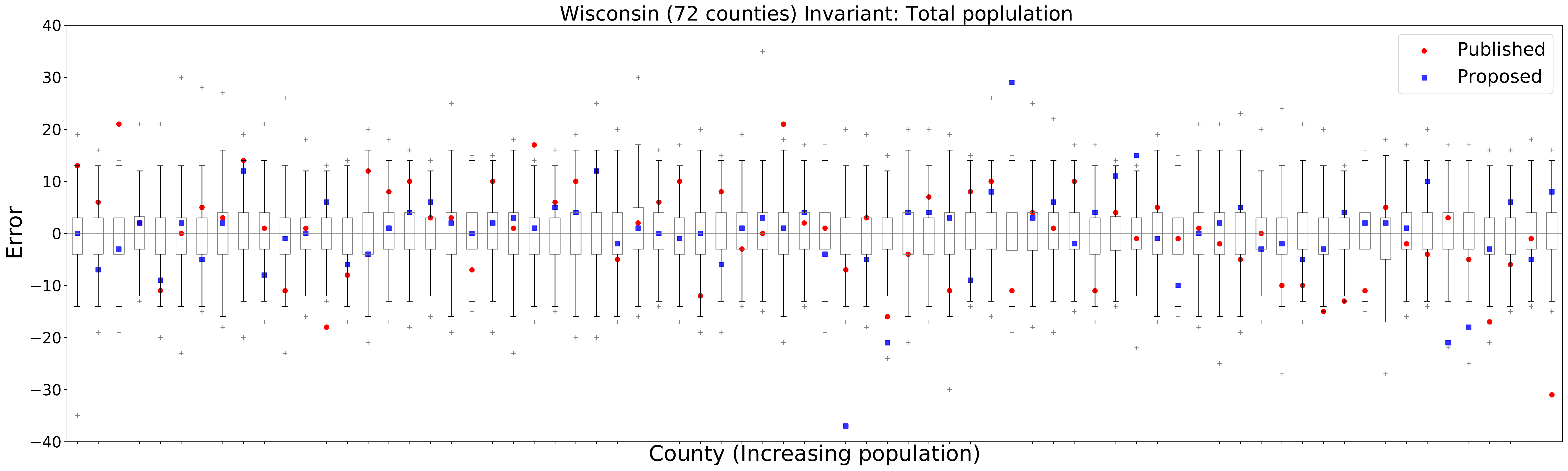}
    \caption{continued.}
\end{center}
\end{figure*}

\subsubsection{Comparison with projected discrete Laplace mechanism}

For publishing just the county level data under invariants, the most basic approach would be to add noise which ensures $\epsilon$-DP and then projecting onto space satisfying the constraints. As the latter portion of this procedure constitutes post-processing, the output is still private. One drawback with this approach is the procedure is now less transparent with regards to the randomness involved and the utility is now unclear due to optimization procedure required to project onto constraint satisfying subspace. Integer subspace differential privacy address both these concerns and though empirically noise magnitudes are similar we have a better understanding of the process compared to adding noise and then post-processing.

To compare, we consider databases that are at distance 1, i.e. $\|x-x'\|_1 = 1$ and compare generalized Laplace mechanism to $\epsilon-$DP mechanism for sum constraint (the sum of all entries is held invariant). The $\epsilon-$DP mechanism is obtained by adding Double geometric noise and then solving linear program to find closest vector in $\ell_2$ norm that satisfies the sum constraint. We then consider another vector closes in $\ell_1$ norm but with additional constraint that variables are now integers. We do not consider non-negativity constraints.  \cref{fig:census_sdp_county_combined_with_post} depicts in green - box plots per county for 1000 independent simulations of the above mentioned procedure whereas the setup for our mechanism is same as before. The box-plots majorly overlap indicating the magnitude of noise is comparable for both approaches.

\subsubsection{Comparison with TopDown Algorithm~\citep{abowd2022topdown}}

The TopDown algorithm, in addition to invariants and integer values, also performs post-processing to ensure that all released counts are non-negative. This requirement of non-negativity introduces  systematic bias into the privatized tabulations \citep{zhu2021bias,gao2022subspace}. 
\citet{hotz2020assessing} note that 
earlier versions of the DAS introduces inaccuracy in small-area data that compromised existing statistical analyses. An example may be seen from ~\cref{fig:illinois}: when the state population total is fixed, positive errors tend to associate with smaller county population counts, and negative errors with larger counts. 
Through six rounds of iterations
spanning 2019-2021, the bureau increased the privacy loss budget for the  PL. 94-171 files from $\epsilon=6$~\citep{census2019memo} to $(\epsilon, \delta)=(19.71, 10^{-10})$~\citep{census2021PLB}. 

When non-negativity is enforced, bias is theoretically unavoidable.
However, bias comes in a matter of degrees. The DAS demonstration dataset that we showcase here is tabulated at the county level, where the underlying microdata are dense and the confidential counts are far from zero. As a consequence, the non-negativity post-processing of counts privatized with moderately sized unbiased noise should not induce visible systematic bias as shown in ~\cref{fig:illinois}. To be precise, if we infuse the county-level counts with errors from the proposed generalized Laplace mechanism (shown in ~\cref{fig:illinois} and~\ref{fig:census_sdp_county_combined}) and then apply post-processing to truncate negative counts, none of the 1000 instantiations triggers the truncation in any county of any state. This is not surprising, since the smallest counties of most states have population sizes in the thousands (with the least populous county of the U.S. being Loving, Texas with a population of 82), whereas the overwhelming majority of our proposed noise instantiations are within [-30, 30] using the Bureau's privacy loss budget allocation for the county population table. Thus, even after nonnegativity post-processing is applied to our proposed solution, the resulting privatized counts are still effectively unbiased, because the post-processing applied to county-level tables is effectively vacuous.

A key design feature of the TopDown algorithm is that privatized counts at different geographic levels are produced separately, rather than only at the block level and aggregated all the way up. As Section 6.5 of~\citet{abowd2022topdown} explains, this takes advantage of the hierarchical structure of the Census geographies and facilitates fine-grained control over the Bureau's own fitness-for-use accuracy targets. While maintaining unbiasedness and nonnegativity is difficult at the block level, it is achievable at coarser levels as we demonstrate here. When the underlying geography is dense in counts, our proposal contributes a solution that delivers privatized tabulations that are effectively unbiased even if nonnegativity is furthermore imposed.

\begin{figure*}
\begin{center}
\includegraphics[width=.8\textwidth]{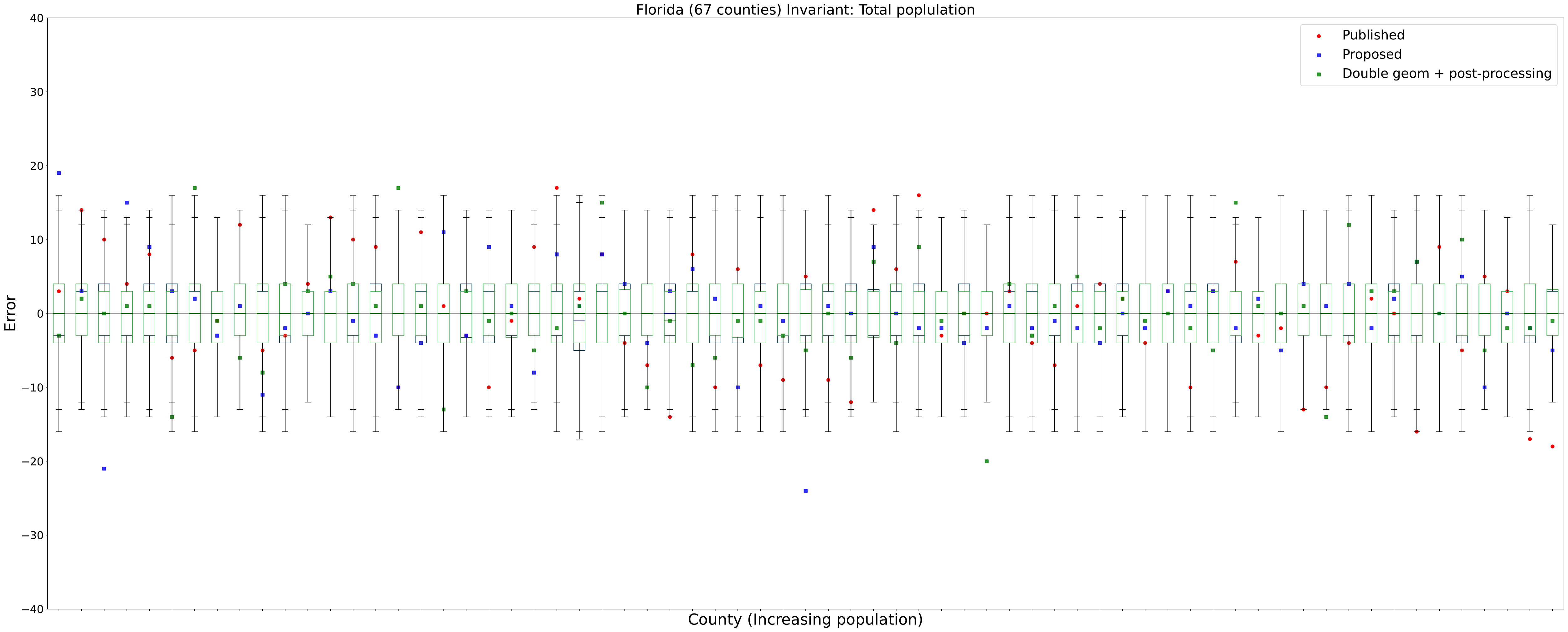}
\includegraphics[width=.8\textwidth]{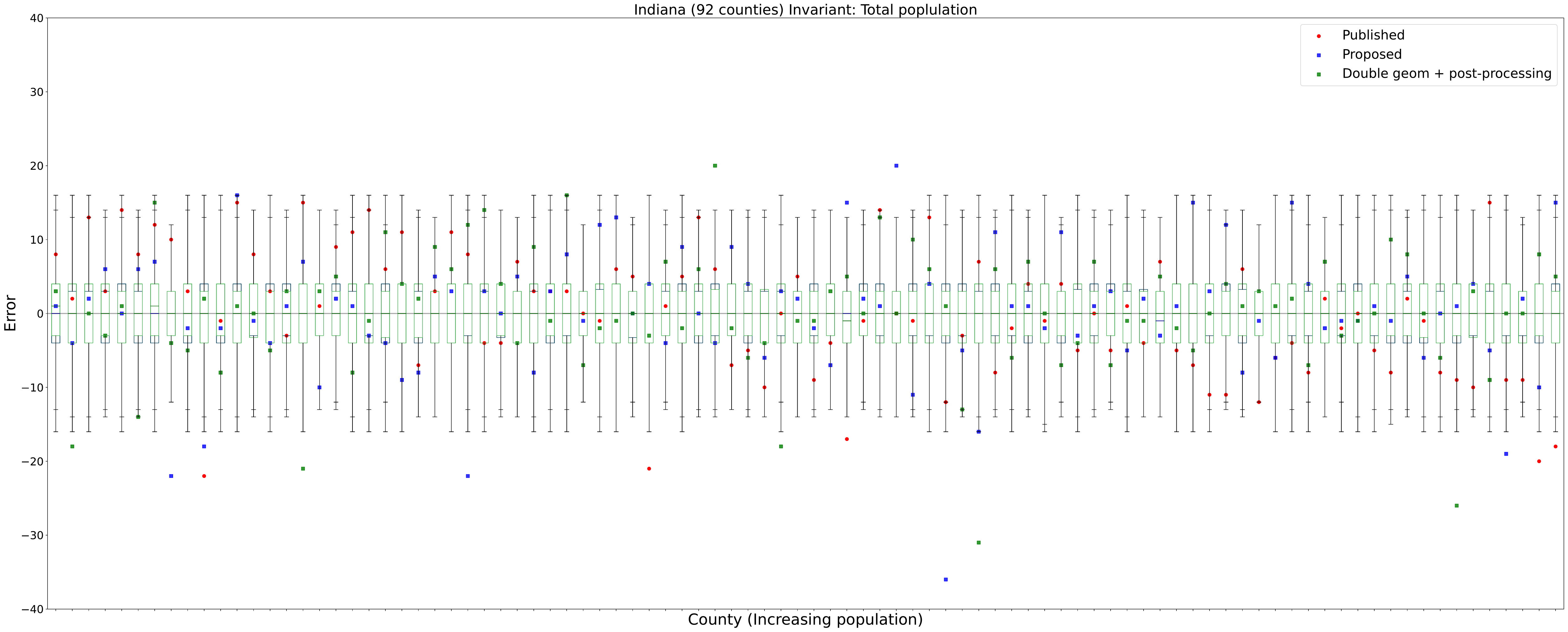} 
\includegraphics[width=.8\textwidth]{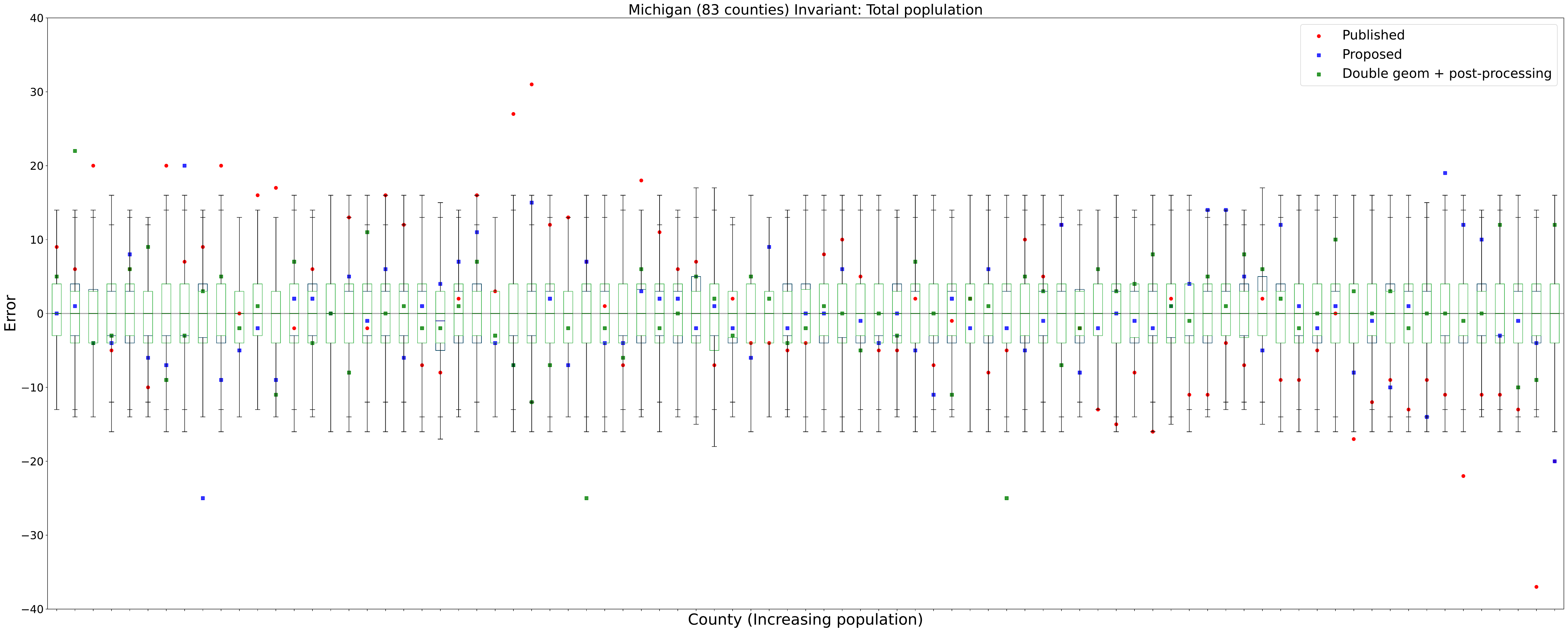}
\includegraphics[width=.8\textwidth]{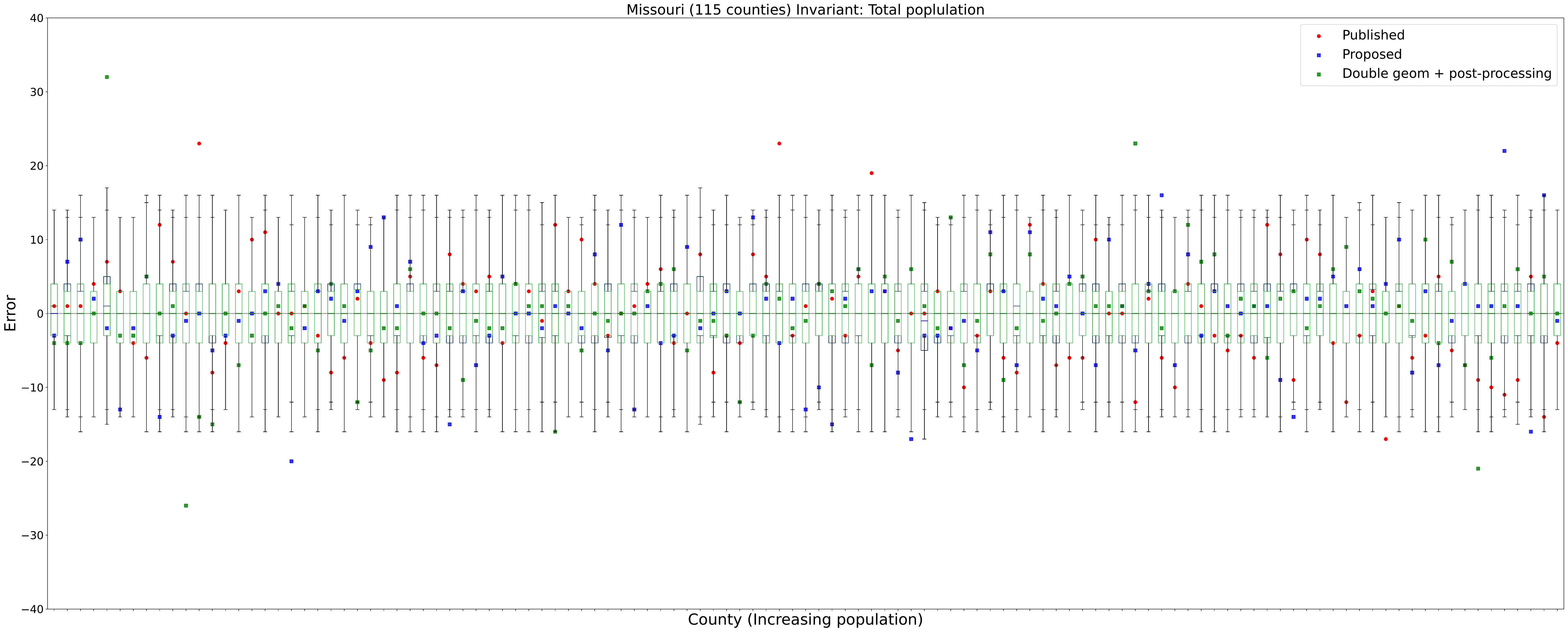}
\end{center}
\caption{\label{fig:census_sdp_county_combined_with_post} ~\cref{fig:census_sdp_county_combined} superimposed with additional post-processing on privacy noise generated from the generalized Laplace mechanism (green squares: one instance; boxplot: 1000 instances)  to enforce non-negativity of the privatized county-level population counts. In this case, post-processing did not alter the distribution of the original noise from the generalized Laplace mechanism (~\cref{def:lap} via ~\cref{alg:metropolis-gibbs}; blue squares: one instance; boxplot: 1000 instances) in any significant way, and did not induce a downward bias as seen from the privacy noise from DAS demonstration data (red dots) \cite[11/2020 vintage;][]{ipums2020das}.}	
\end{figure*}

\begin{figure*}\ContinuedFloat
\begin{center}
    \includegraphics[width=.8\textwidth]{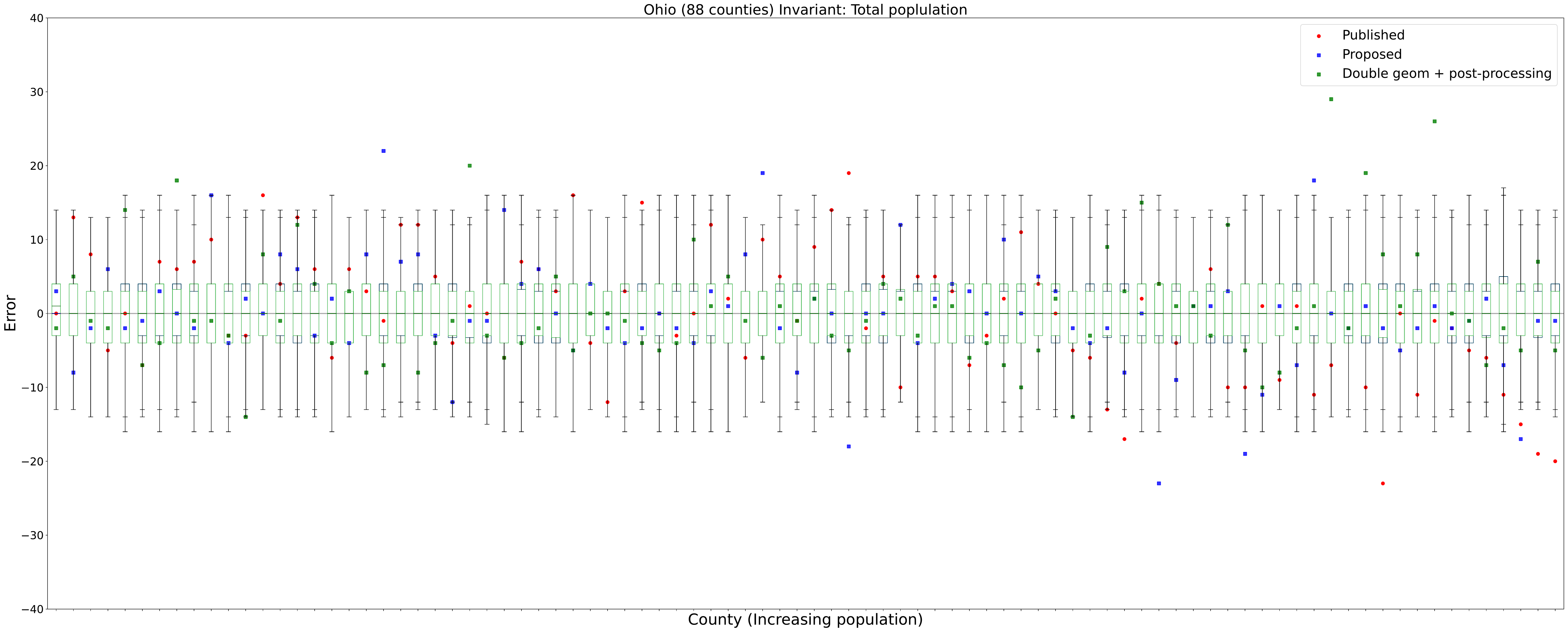} 
    \includegraphics[width=.8\textwidth]{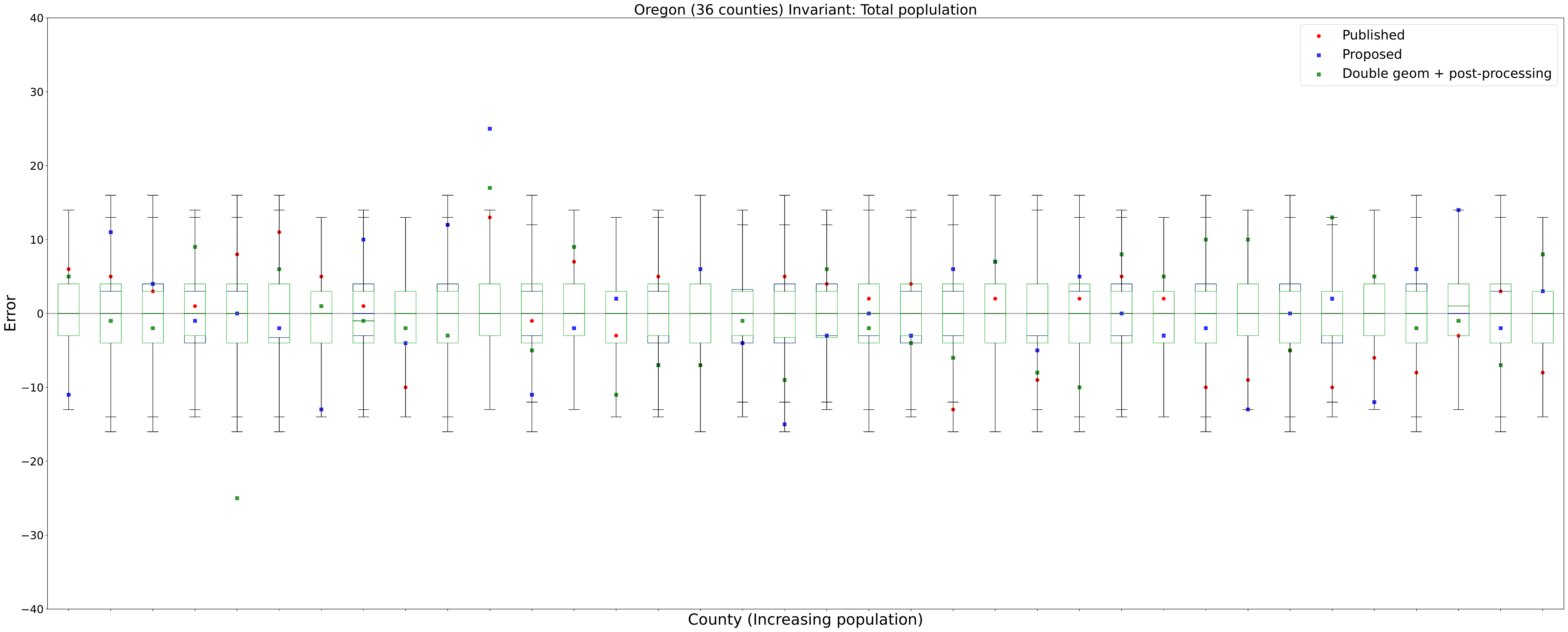} 
    \includegraphics[width=.8\textwidth]{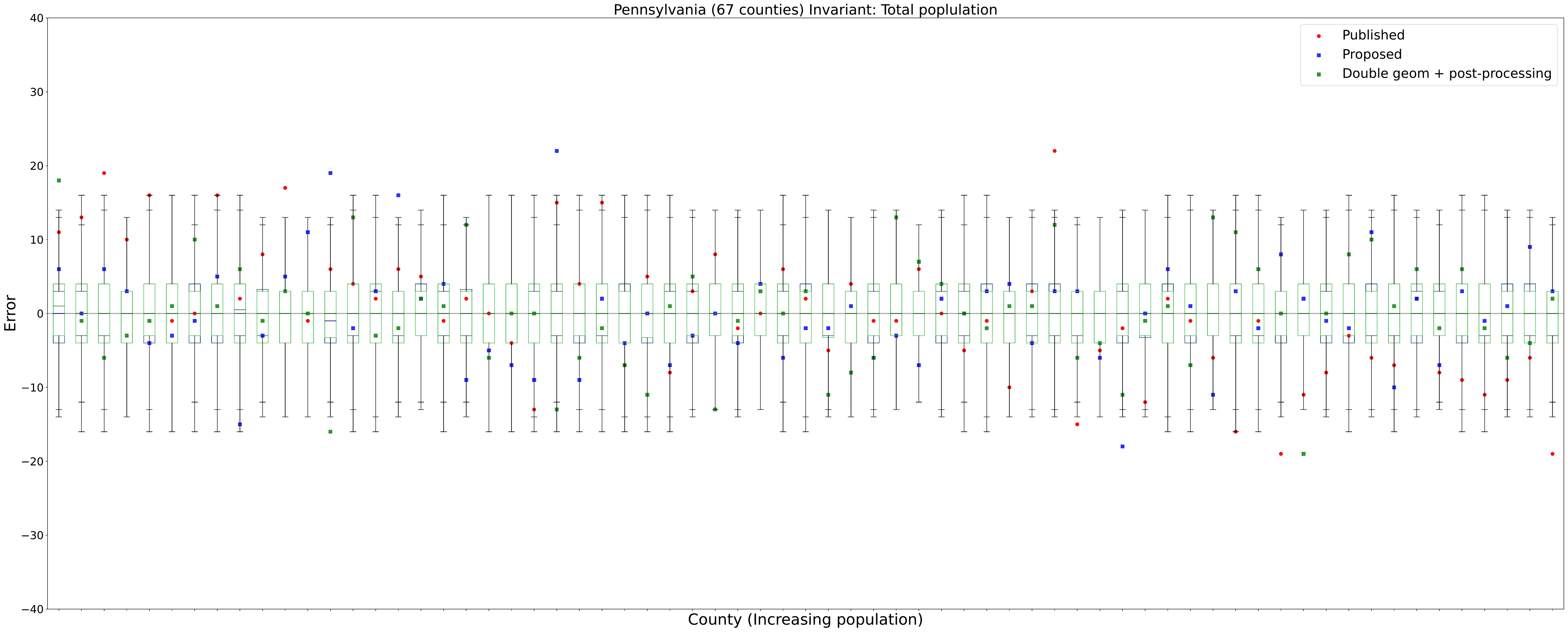} 
     \includegraphics[width=.8\textwidth]{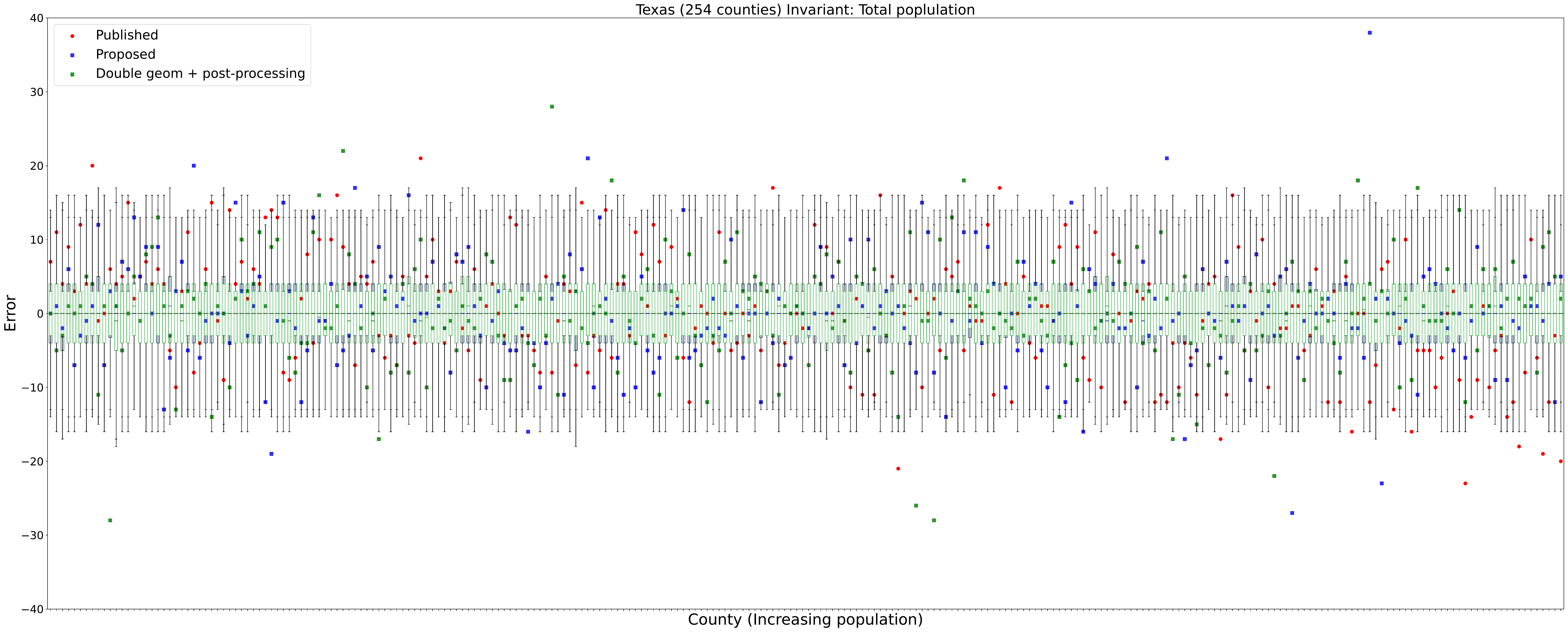} 
     \caption{continued.}
\end{center}
\end{figure*}

\begin{figure*}\ContinuedFloat
\begin{center}
    \includegraphics[width=.8\textwidth]{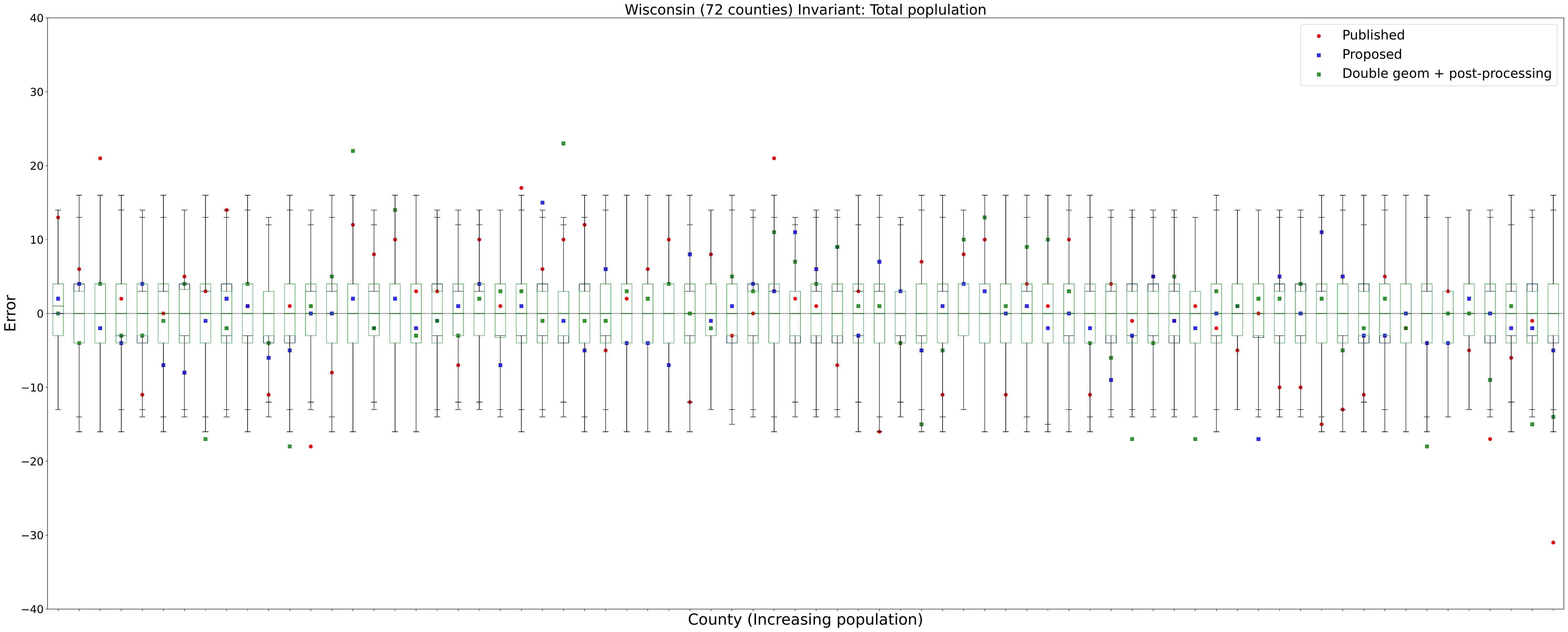}
    \includegraphics[width=.8\textwidth]{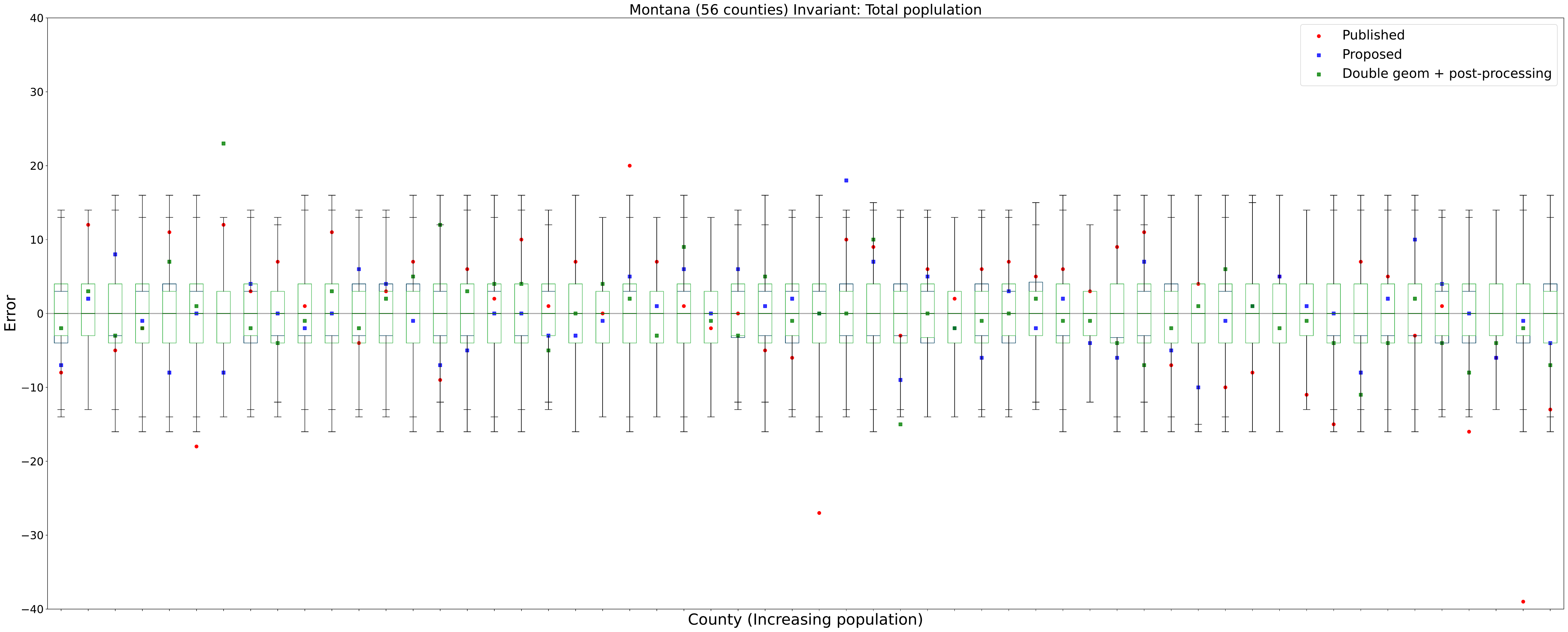}
    \includegraphics[width=.8\textwidth]{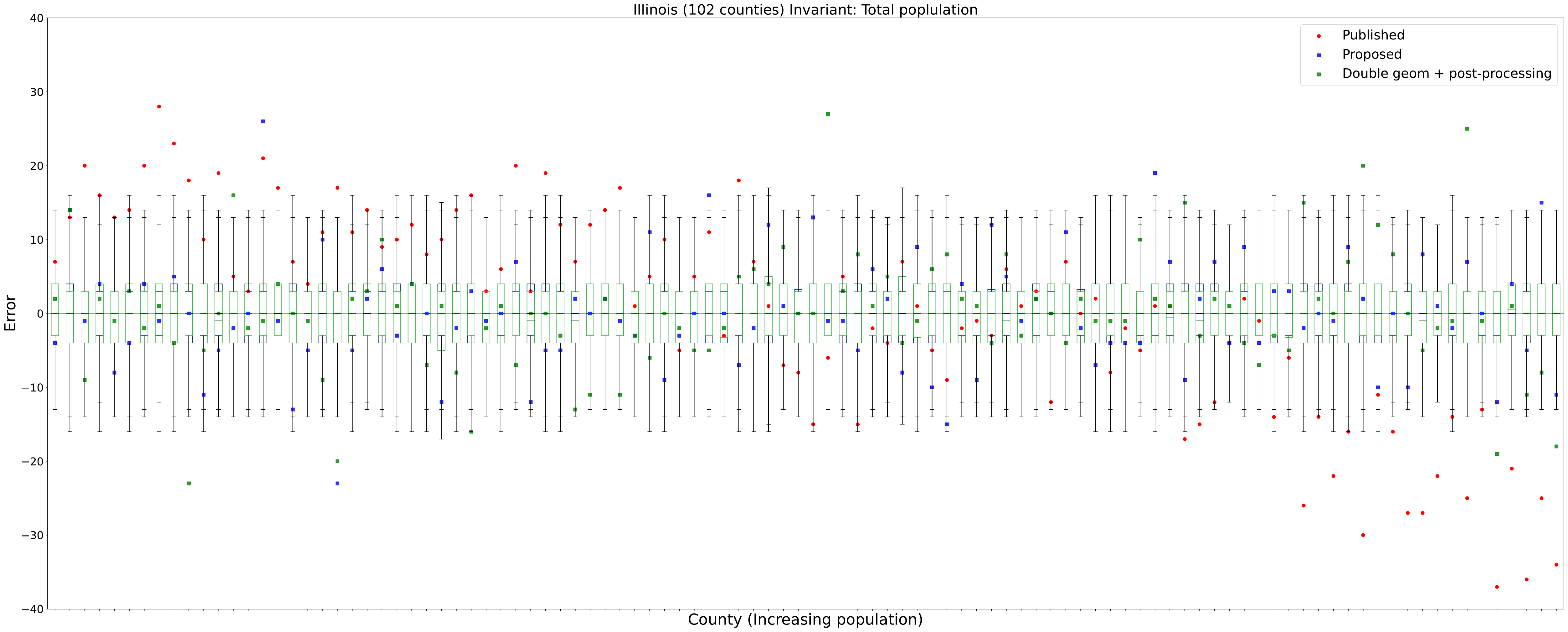}
    \caption{continued.}
\end{center}
\end{figure*}

\end{document}